\documentclass{article}
\usepackage[active]{srcltx}
\usepackage{amsmath}
\usepackage{bbm}
\usepackage{amsthm}
\usepackage{amsfonts}
\usepackage{leftidx}
\usepackage{mathrsfs}
\usepackage{amssymb}
\usepackage{amscd}
\usepackage{cases}
\usepackage{a4wide}
\usepackage{bm}% bold math
\usepackage{latexsym}
\usepackage[usenames,dvipsnames]{xcolor}
\usepackage{dsfont}
\usepackage{indentfirst}
\usepackage[latin1]{inputenc}

\setcounter{tocdepth}{5}
\setcounter{secnumdepth}{5}

\def\qed{\hfill $\square$}

\newtheorem{definition}{Definition}[section]
\newtheorem{proposition}[definition]{Proposition}
\newtheorem{theorem}[definition]{Theorem}
\newtheorem{lema}[definition]{Lemma}

\theoremstyle{definition}
\newtheorem{remark}[definition]{Remark}

\numberwithin{equation}{section}

\DeclareMathAlphabet{\mathpzc}{OT1}{pzc}{m}{it}

\begin{document}
\begin{center}
{\Large{ \textbf{Gaussian decay for a difference of traces of the Schr\"odinger semigroup associated to the isotropic harmonic oscillator.}}}

\medskip

\today

\end{center}

\begin{center}
\small{Mathieu Beau\footnote{Dublin Institute for Advanced Studies
School of Theoretical Physics, 10 Burlington Road, Dublin 4, Ireland; e-mail:
    mbeau@stp.dias.ie}, Baptiste Savoie\footnote{Department of Mathematical Sciences, Aarhus University, Ny Munkegade, Building 1530, DK-8000 Aarhus C, Denmark; e-mail: baptiste.savoie@gmail.com .}}
\end{center}

\begin{abstract}
This paper deals with the derivation of a sharp estimate on the difference of traces of the one-parameter Schr\"odinger semigroup associated to the quantum isotropic harmonic oscillator. Denoting by $H_{\infty,\kappa}$ the self-adjoint realization in $L^{2}(\mathbb{R}^{d})$, $d \in \{1,2,3\}$ of the Schr\"odinger operator $-\frac{1}{2} \Delta + \frac{1}{2} \kappa^{2}\vert \bold{x}\vert^{2}$, $\kappa>0$ and by $H_{L,\kappa}$, $L>0$ the Dirichlet realization in $L^{2}(\Lambda_{L}^{d})$ where $\Lambda_{L}^{d}:=\{\bold{x} \in \mathbb{R}^{d}:- \frac{L}{2} < x_{l} < \frac{L}{2},\,l=1,\ldots,d\}$, we prove that the difference of traces $\mathrm{Tr}_{L^{2}(\mathbb{R}^{d})} \mathrm{e}^{-t H_{\infty,\kappa}} - \mathrm{Tr}_{L^{2}(\Lambda_{L}^{d})}\mathrm{e}^{-t H_{L,\kappa}}$, $t>0$ has for $L$ sufficiently large a Gaussian decay in $L$. Furthermore, the estimate that we derive is sharp in the two following senses: its behavior when $t \downarrow 0$ is similar to the one given by $\mathrm{Tr}_{L^{2}(\mathbb{R}^{d})}\mathrm{e}^{-t H_{\infty,\kappa}} = (2\sinh( \frac{\kappa}{2}t))^{-d}$ and the exponential decay in $t$ arising from $\mathrm{Tr}_{L^{2}(\mathbb{R}^{d})}\mathrm{e}^{-t H_{\infty,\kappa}}$ when $t\uparrow \infty$ is preserved. For illustrative purposes, we give a simple application within the framework of quantum statistical mechanics.
\end{abstract}

\medskip

\noindent
\textbf{MSC-2010 number}:  35J10, 47D08, 81Q10, 81Q15, 82B10.

\medskip

\noindent
\textbf{Keywords}: Quantum harmonic oscillator, Gibbs semigroups, Mehler's formula, Duhamel-like formula, Geometric perturbation theory.
\medskip

%\tableofcontents
%\medskip

\section{Introduction and the main result.}

\subsection{Introduction.}

Within the framework of statistical mechanics, the thermodynamic description of large macroscopic equilibrium systems is obtained by considering the thermodynamic (i.e. infinite volume) limit of finite systems described by suitable Gibbs ensembles (e.g. micro-canonical, canonical, grand-canonical). Whenever the thermodynamic limit exists, the surface effects disappear and one is left with the \textit{bulk properties}. If the limit only depends on the intensive parameters  then the system has the extensive property, i.e. the thermodynamic quantities are asymptotically proportional to the system size. Investigating the existence of the thermodynamic limit is part of fundamental topics addressed in rigorous statistical mechanics. It is usually a non-trivial problem which is even more difficult that the interactions involved are singular. The existence of the thermodynamic limit depends on the nature of the interactions, and
therefore, may not exist. There is a wide literature on this topic, we refer to \cite{Ru} for an overview. Whenever the limit exists, one has to prove in addition that it is independent of the boundary conditions and of the sequence of confining boxes considered. %To establish a complete connection with thermodynamics, one has to prove also the equivalence of ensembles (ensuring the thermodynamic stability).
A related topic consists in estimating the boundary contributions. This allows one to answer the question: how negligible is the error made by replacing the finite-volume statistical quantities with the corresponding thermodynamic limit?\\
\indent One of the most widespread models in physics is the isotropic quantum harmonic oscillator. Turning to the statistical description of $d$-dimensionnal ($d \in \{1,2,3\}$) harmonically trapped gases, the finite-volume quantities in the grand-canonical ensemble can be written under certain conditions  in terms of the trace of the one-parameter 'finite-volume' semigroup (the 'inverse temperature' plays the role of parameter) generated by the Hamiltonian \eqref{HL} with Dirichlet boundary conditions. Without loss of generality, we here consider cubic confining boxes. For such systems, the existence of the conventional thermodynamic limit is trivial since the one-parameter 'infinite-volume' semigroup generated by the Hamiltonian \eqref{Hinfini} on the whole space is trace class. This stems from the discrete nature of the spectrum of \eqref{Hinfini} along with the explicit knowledge of its eigenvalues, see \eqref{vap1}-\eqref{vapd} below. The error made by replacing the trace of the finite-volume semigroup with the trace of the corresponding infinite-volume semigroup is expected to be negligible on the basis of an heuristical argument: the confinement arising from the harmonic potential overwhelms the confining effects of the 'walls' for sufficiently large boxes. From the foregoing, a natural question arises: how small is this error?\\
\indent The purpose of this paper is to estimate the remainder term in the asymptotic expansion of the trace of the finite-volume semigroup generated by the Hamiltonian in \eqref{HL} in the large-volume limit. We prove that for any dimension $d \in \{1,2,3\}$ the remainder term has a Gaussian decay in the characteristic length $L$ of the cubic confining box for $L$ sufficiently large, see Theorem \ref{Mimp} below. To our best knowledge, there is no proof of such a result in literature. A series of remarks placed below Theorem \ref{Mimp} discuss how the estimate is sharp. Its proof relies on a geometric perturbation theory applied to bounded operators on separable Hilbert spaces. Such a method has been originally developed to treat approximations of resolvent operators in \cite{CN1}, see also \cite{CFFH,S1} for further applications. Our paper extends the method to treat approximations of semigroups.\\
\indent For the sake of completeness, we mention that an asymptotic expansion in the high-temperature regime of the trace of the finite-volume semigroup (generated by operators of type \eqref{HL}) was derived in \cite[Thm. 3]{vdb}. The method is essentially based on the Feynman-Kac formula and Wiener estimates.  However, no information on the large-volume behavior can be inferred from. Note that \cite[Thm. 3]{vdb} was applied to investigate the thermodynamics of quantum gases confined by weak external potentials, see \cite{vdbl} for further details.

\subsection{The setting and the main result.}
\label{defop}

For any $d \in \{1,2,3\}$ and $L \in (0,\infty)$, denote $\Lambda_{L}^{d}:= \{\bold{x} \in \mathbb{R}^{d}: -\frac{L}{2} < x_{l} < \frac{L}{2},\, l= 1,\ldots,d\}$  and $\vert \Lambda_{L}^{d}\vert$ its Lebesgue-measure. On $\mathcal{C}_{0}^{\infty}(\Lambda_{L}^{d})$, define $\forall \kappa>0$ the family of operators:
\begin{equation}
\label{HL}
H_{L,\kappa} := \frac{1}{2}\left(- i \nabla_{\bold{x}}\right)^{2} + \frac{1}{2} \kappa^{2} \vert \bold{x}\vert^{2}.
\end{equation}
By standard arguments, $\forall \kappa>0$ \eqref{HL} extends to a family of self-adjoint and bounded from below operators $\forall L \in (0,\infty)$, denoted again by $H_{L,\kappa}$, with domain $D(H_{L,\kappa}) = W_{0}^{1,2}(\Lambda_{L}^{d}) \cap W^{2,2}(\Lambda_{L}^{d})$. Obviously this definition corresponds to choose Dirichlet boundary conditions on the boundary $\partial \Lambda_{L}^{d}$. Since the inclusion $W_{0}^{1,2}(\Lambda_{L}^{d}) \hookrightarrow L^{2}(\Lambda_{L}^{d})$ is compact, then $\forall \kappa>0$ $H_{L,\kappa}$ has a purely discrete spectrum with an accumulation point at infinity.\\
\indent When $\Lambda_{L}^{d}$ fills the whole space (i.e. when $L \uparrow \infty$), define $\forall \kappa>0$ on $\mathcal{C}_{0}^{\infty}(\mathbb{R}^{d})$ the operator:
\begin{equation}
\label{Hinfini}
H_{\infty,\kappa}:= \frac{1}{2}\left(- i \nabla_{\bold{x}}\right)^{2} + \frac{1}{2} \kappa^{2} \vert \bold{x}\vert^{2}.
\end{equation}
From \cite[Thm. X.28]{RS2}, $\forall \kappa>0$ \eqref{Hinfini} is essentially self-adjoint and its self-adjoint extension, denoted again by $H_{\infty,\kappa}$, is semi-bounded. By \cite[Thm. XIII.16]{RS4}, the spectrum of $H_{\infty,\kappa}$ is purely discrete with eigenvalues increasing to infinity. From the one-dimensional problem, the eigenvalues and eigenfunctions of the multidimensional case can be written down explicitly. The eigenvalues of the one-dimensional problem are all simple (i.e. non-degenerate) and given by, see e.g. \cite[Sec. 1.8]{BS}:
\begin{equation}
\label{vap1}
\epsilon_{\infty,\kappa}^{(s)} := \kappa \left(s + \frac{1}{2}\right),\quad s \in \mathbb{N}.
\end{equation}
The corresponding eigenfunctions which form an orthonormal basis in $L^{2}(\mathbb{R})$, read as:
\begin{equation}
\label{funcp1}
\forall x \in \mathbb{R},\quad \phi_{\infty,\kappa}^{(s)}(x) := \frac{1}{\sqrt{2^{s} s!}} \left(\frac{\kappa}{\pi}\right)^{\frac{1}{4}} \mathrm{e}^{- \frac{\kappa}{2} x^{2}} \mathpzc{H}_{s}\left(\sqrt{\kappa} x\right), \quad s \in \mathbb{N},
\end{equation}
where $\mathpzc{H}_{s}$, $s \in \mathbb{N}$ are the Hermite polynomials defined by: $\mathpzc{H}_{s}(x) := (-1)^{s} \mathrm{e}^{x^{2}} \frac{\mathrm{d}^{s}}{\mathrm{d} x^{s}}(\mathrm{e}^{-x^{2}})$, $\forall x \in \mathbb{R}$.\\
%By virtue of the Cramer's inequality one gets the following upper bound:
%\begin{equation*}
%\label{cram}
%\forall s \in \mathbb{N}^{1},\,\forall x \in \mathbb{R}^{1},\quad \vert \mathpzc{H}_{s}(x)\vert \leq \sqrt{2^{s} s!} \exp(\frac{x^{2}}{2}).
%\end{equation*}}
The eigenvalues and eigenfunctions of the  multidimensional case (i.e. $d=2,3$) are respectively related to those of the one-dimensional case by:
\begin{gather}
\label{vapd}
E_{\infty,\kappa}^{(\bold{s})} := \sum_{j=1}^{d} \epsilon_{\infty,\kappa}^{(s_{j})} = \kappa \sum_{j=1}^{d} \left(s_{j} + \frac{1}{2}\right),\quad \bold{s}=\left\{s_{j}\right\}_{j=1}^{d} \in \mathbb{N}^{d},\\
\label{funcpd}
\psi_{\infty,\kappa}^{(\bold{s})}(\bold{x}) := \prod_{j=1}^{d} \phi_{\infty,\kappa}^{(s_{j})}(x_{j}),\quad \bold{x}= \left\{x_{j}\right\}_{j=1}^{d} \in \mathbb{R}^{d}.
\end{gather}
From \eqref{vap1}-\eqref{vapd} and by the use of the min-max principle, one has for any $L \in (0,\infty)$:
\begin{equation*}
\forall \kappa>0,\quad \inf \sigma\left(H_{L,\kappa}\right) \geq \inf \sigma \left(H_{\infty,\kappa}\right) = E_{\infty,\kappa}^{(\bold{0})} = d \epsilon_{\infty,\kappa}^{(0)}>0,\quad \epsilon_{\infty,\kappa}^{(0)}:= \frac{\kappa}{2}.
\end{equation*}

Let us turn to the one-parameter strongly-continuous semigroup (the so-called $\mathrm{C}_{0}$-semigroup in the Hille-Phillips terminology \cite{HP}) generated by the operators introduced above. At finite-volume, it is defined $\forall L\in (0,\infty)$ and $\forall \kappa>0$ by $\{G_{L,\kappa}(t) := \mathrm{e}^{- t H_{L,\kappa}} : L^{2}(\Lambda_{L}^{d}) \rightarrow L^{2}(\Lambda_{L}^{d})\}_{t \geq 0}$. It is a self-adjoint and positive operator on $L^{2}(\Lambda_{L}^{d})$ by the spectral theorem and the functional calculus, see e.g. \cite[Sec. X.8]{RS2}. The same hold true for the one-parameter semigroup on the whole space $\{G_{\infty,\kappa}(t) := \mathrm{e}^{- t H_{\infty,\kappa}} : L^{2}(\mathbb{R}^{d}) \rightarrow L^{2}(\mathbb{R}^{d})\}_{t \geq 0}$. Moreover $\forall 0<L\leq \infty$, $\forall \kappa>0$ and $\forall t>0$, $G_{L,\kappa}(t)$ is a Gibbs semigroup, i.e. $G_{L,\kappa}(t)$ (resp. $G_{\infty,\kappa}(t)$) belongs to the Banach space of trace-class operators on $L^{2}(\Lambda_{L}^{d})$ (resp.
$L^{2}(\mathbb{R}^{d})$), see e.g. \cite{U,ABN}. A basic feature is the monotonicity property for the finite-volume trace, see Lemma \ref{2trace} in Sec. \ref{Appen1}:
\begin{equation*}
\forall L \in (0,\infty),\quad \mathrm{Tr}_{L^{2}(\Lambda_{L}^{d})}\left\{G_{L,\kappa}(t)\right\} \leq \mathrm{Tr}_{L^{2}(\mathbb{R}^{d})}\left\{G_{\infty,\kappa}(t)\right\} = \left(2\sinh\left(\frac{\kappa}{2} t\right)\right)^{-d},\quad \kappa>0,\, t>0.
\end{equation*}

Our main result is the following sharp estimate on the difference of traces of the semigroups:

\begin{theorem}
\label{Mimp}
For any $d \in \{1,2,3\}$ there exists a constant $C_{d}>0$ and $\forall 0<\kappa_{0}<1$ there exists a real
$\mathpzc{L}_{\kappa_{0}}>0$ s.t. $\forall L \in [\mathpzc{L}_{\kappa_{0}},\infty)$, $\forall \kappa \in [\kappa_{0},\infty)$ and $\forall t>0$:
\begin{multline}
\label{cherc}
\left\vert  \mathrm{Tr}_{L^{2}(\Lambda_{L}^{d})}\left\{G_{L,\kappa}(t)\right\} - \mathrm{Tr}_{L^{2}(\mathbb{R}^{d})}\left\{G_{\infty,\kappa}(t)\right\}\right\vert \\ \leq C_{d} \left(1+\sqrt{\kappa}\right)(1+\kappa)^{d} (1+t)^{3(d+ \frac{1}{2})}  \mathrm{Tr}_{L^{2}(\mathbb{R}^{d})}\left\{G_{\infty,\kappa}(t)\right\} \mathrm{e}^{-\frac{\kappa}{32} \frac{L^{2}}{4} \tanh\left(\frac{\kappa}{2}t\right)}.
\end{multline}
\end{theorem}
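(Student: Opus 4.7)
My plan combines a cutoff decomposition, Mehler's formula, and a semigroup Duhamel identity. First, differentiating \eqref{vap1}--\eqref{funcpd} (or invoking the standard Mehler formula) gives
\[
G_{\infty,\kappa}(t;\mathbf{x},\mathbf{x}) = \Bigl(\frac{\kappa}{2\pi\sinh(\kappa t)}\Bigr)^{d/2}\exp\bigl(-\kappa|\mathbf{x}|^2\tanh(\tfrac{\kappa}{2}t)\bigr),
\]
whose Gaussian decay in $|\mathbf{x}|$ at rate $\tanh(\kappa t/2)$ is exactly the shape appearing in \eqref{cherc}. I would then pick a smooth cutoff $\eta \in \mathcal{C}_{0}^{\infty}(\Lambda_L^d;[0,1])$, obtained by rescaling a fixed bump, with $\eta\equiv 1$ on $\Lambda_{\alpha L}^d$ and $\mathrm{supp}(\eta)\subset\Lambda_{\beta L}^d$ for suitable $0<\alpha<\beta<1$, so that $\|\nabla^k\eta\|_\infty \leq C L^{-k}$ for $k=1,2$, and decompose
\[
\mathrm{Tr}_{L^2(\mathbb{R}^d)}G_{\infty,\kappa}(t) - \mathrm{Tr}_{L^2(\Lambda_L^d)}G_{L,\kappa}(t) = \int_{\mathbb{R}^d}(1-\eta)G_{\infty,\kappa}(t;\mathbf{x},\mathbf{x})d\mathbf{x} - \int_{\Lambda_L^d}(1-\eta)G_{L,\kappa}(t;\mathbf{x},\mathbf{x})d\mathbf{x} + \int_{\Lambda_L^d}\eta\bigl[G_{\infty,\kappa}-G_{L,\kappa}\bigr](t;\mathbf{x},\mathbf{x})d\mathbf{x}.
\]
The first two pieces are controlled via the pointwise domination $0\leq G_{L,\kappa}\leq G_{\infty,\kappa}$ (Lemma \ref{2trace}) and a direct one-dimensional Gaussian tail estimate, both yielding the required exponential in $L^2\tanh(\kappa t/2)$.

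\textbf{Duhamel identity for the interior.} For the third summand I would consider on $L^2(\mathbb{R}^d)$ the trace-class family
\[
F(s):=e^{-(t-s)H_{\infty,\kappa}}\, M_\eta\, J\, e^{-sH_{L,\kappa}}\, J^*, \qquad s\in[0,t],
\]
where $J:L^2(\Lambda_L^d)\hookrightarrow L^2(\mathbb{R}^d)$ is extension by zero, so that $JJ^*=\mathbf{1}_{\Lambda_L^d}$ and $M_\eta JJ^*=M_\eta$. Then $F(0)=e^{-tH_{\infty,\kappa}}M_\eta$ and $F(t)=M_\eta J e^{-tH_{L,\kappa}}J^*$ have traces $\int \eta\, G_{\infty,\kappa}(t;\mathbf{x},\mathbf{x})d\mathbf{x}$ and $\int\eta\, G_{L,\kappa}(t;\mathbf{x},\mathbf{x})d\mathbf{x}$ respectively. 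Because $\eta(J\phi)$ is compactly supported in $\Lambda_L^d$ for every $\phi\in D(H_{L,\kappa})$, the two Hamiltonians act there by the same differential expression, so $H_{\infty,\kappa}M_\eta J - M_\eta J H_{L,\kappa}=[H_{\infty,\kappa},M_\eta]J$ on $D(H_{L,\kappa})$. Differentiating $F$ and integrating then gives
\[
\int_{\Lambda_L^d}\eta\bigl[G_{\infty,\kappa}-G_{L,\kappa}\bigr](t;\mathbf{x},\mathbf{x})d\mathbf{x} = -\int_0^t\mathrm{Tr}_{L^2(\mathbb{R}^d)}\bigl\{e^{-(t-s)H_{\infty,\kappa}}[H_{\infty,\kappa},M_\eta]\,J\,e^{-sH_{L,\kappa}}\,J^*\bigr\}ds,
\]
with $[H_{\infty,\kappa},M_\eta]=-\tfrac12(\Delta\eta)-(\nabla\eta)\cdot\nabla$ supported in the shell $\Lambda_{\beta L}^d\setminus\Lambda_{\alpha L}^d$.

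\textbf{Extracting the Gaussian factor.} For the zeroth-order part of the commutator, cyclicity of the trace together with the integral-kernel identity
\[
\bigl(J\,e^{-sH_{L,\kappa}}J^*\,e^{-(t-s)H_{\infty,\kappa}}\bigr)(\mathbf{x},\mathbf{x}) = \int_{\Lambda_L^d}G_{L,\kappa}(s;\mathbf{x},\mathbf{z})G_{\infty,\kappa}(t-s;\mathbf{z},\mathbf{x})d\mathbf{z} \leq G_{\infty,\kappa}(t;\mathbf{x},\mathbf{x})
\]
(using $G_{L,\kappa}\leq G_{\infty,\kappa}$ and the infinite-volume semigroup law) collapses the Duhamel integrand to $\tfrac12|\Delta\eta(\mathbf{x})|\,G_{\infty,\kappa}(t;\mathbf{x},\mathbf{x})$ integrated over $\mathrm{supp}(\nabla\eta)\subset\{|\mathbf{x}|\geq \alpha L/2\}$; Mehler's formula then yields the Gaussian factor $e^{-\kappa(\alpha L/2)^2\tanh(\kappa t/2)}$ directly, and the polynomial $L^{d-2}$ growth (from $L^d\|\Delta\eta\|_\infty$) is absorbed into a slightly weakened exponent, with $\alpha,\beta$ tuned so as to recover the stated $\tfrac{1}{32}\cdot\tfrac{L^2}{4}$. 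The first-order term $(\nabla\eta)\cdot\nabla$ is symmetrised as $\tfrac12(M_{\nabla\eta}\cdot\nabla+\nabla\cdot M_{\nabla\eta})$ and one copy of $\nabla$ is transferred onto a half of $e^{-sH_{L,\kappa}}$ via the uniform bound $\|\nabla e^{-rH_{L,\kappa}}\|\leq C(r^{-1/2}+\sqrt\kappa)$, which produces the prefactor $(1+\sqrt\kappa)(1+t)^{3(d+1/2)}$.

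\textbf{Main obstacle.} The principal difficulty is ensuring that the Gaussian factor surviving in the final bound carries $\tanh(\kappa t/2)$ --- matching the $t\downarrow 0$ singularity of $\mathrm{Tr}\,G_{\infty,\kappa}(t)$ and its $t\uparrow\infty$ decay --- uniformly in the Duhamel variable $s\in[0,t]$. A naive Hilbert-Schmidt factorisation of $e^{-(t-s)H_{\infty,\kappa}}$ alone would produce $\tanh(\kappa(t-s)/2)$, which degenerates as $s\uparrow t$; the Chapman-Kolmogorov collapse $G_L(s;\cdot,\mathbf{z})G_\infty(t-s;\mathbf{z},\cdot)\longrightarrow G_\infty(t;\cdot,\cdot)$ is what rescues the zeroth-order piece, while the first-order piece demands uniform-in-$L$ gradient heat-kernel bounds combined with an additional Cauchy-Schwarz splitting. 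Aligning the resulting polynomial losses with the announced prefactor $(1+\sqrt\kappa)(1+\kappa)^d(1+t)^{3(d+1/2)}$ and fixing the threshold $\mathpzc{L}_{\kappa_0}$ is then a careful bookkeeping exercise in the choice of $\alpha,\beta$ and of the Hilbert-Schmidt exponents.
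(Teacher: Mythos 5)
Your decomposition and the operator-level Duhamel identity are correct, and they set up a route that is genuinely different from --- and, at least on the surface, simpler than --- the paper's. The paper does not work at the trace/operator level: it represents the kernel difference $G_{\infty,\kappa}^{(d)}-G_{L,\kappa}^{(d)}$ by a classical boundary-integral Duhamel formula (Lemma~\ref{expdiff}), which reduces the problem to a pointwise Gaussian gradient bound on the \emph{finite-volume} kernel near $\partial\Lambda_L^d$ (Proposition~\ref{cruspro}). All of the paper's heavy machinery --- the nested cutoff families, the auxiliary semigroup $g_{L,\kappa,\eta}$ generated by \eqref{ptih}, the semigroup Duhamel identity Proposition~\ref{Duhamcom}, and Lemmas~\ref{lem7} and \ref{toues} --- is there to prove Proposition~\ref{cruspro}. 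Your route sidesteps the boundary integral entirely by staying with the commutator $[H_{\infty,\kappa},M_\eta]$, and the Chapman--Kolmogorov collapse you use for the zeroth-order piece $-\tfrac12\Delta\eta$ is clean and, as you note, produces exactly the required $\tanh(\tfrac{\kappa}{2}t)$ exponent uniformly in the Duhamel parameter. (A small slip: the pointwise domination $G_{L,\kappa}^{(d)}\leq G_{\infty,\kappa}^{(d)}$ you invoke is \eqref{fondineq}, not Lemma~\ref{2trace}.)

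There is, however, a genuine gap in the first-order piece $(\nabla\eta)\cdot\nabla$. The operator-norm estimate $\|\nabla e^{-rH_{L,\kappa}}\|\leq C(r^{-1/2}+\sqrt\kappa)$ is true but $L$-independent and carries no Gaussian decay, so by itself it cannot produce the factor $e^{-c\kappa L^2\tanh(\frac{\kappa}{2}t)}$. If you then try to pull the Gaussian out of a Hilbert--Schmidt factorisation, e.g.\ bounding the trace by $\|e^{-(t-s)H_{\infty,\kappa}}M_{\nabla\eta}\|_{\mathfrak{I}_2}\,\|\nabla e^{-\frac{s}{2}H_{L,\kappa}}\|\,\|e^{-\frac{s}{2}H_{L,\kappa}}\|_{\mathfrak{I}_2}$, two things go wrong. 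Near $s=0$, $\|e^{-\frac{s}{2}H_{L,\kappa}}\|_{\mathfrak{I}_2}\sim(\sinh(\tfrac{\kappa}{2}s))^{-d/2}\sim s^{-d/2}$ combined with the $s^{-1/2}$ from the gradient gives a non-integrable singularity $s^{-(d+1)/2}$ for every $d\in\{1,2,3\}$. Near $s=t$, $\|e^{-(t-s)H_{\infty,\kappa}}M_{\nabla\eta}\|_{\mathfrak{I}_2}\sim(t-s)^{-d/2}$ while its Gaussian weight degenerates to $e^{-c\kappa L^2\tanh(\kappa(t-s))}\to1$, so both integrability (for $d\geq2$) and the $L$-decay are lost. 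The ``uniform-in-$L$ gradient heat-kernel bounds'' you allude to as the alternative are precisely Proposition~\ref{cruspro}, i.e.\ the entire difficulty of the paper; appealing to them is not a shortcut.

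The fix that actually closes your route without importing Proposition~\ref{cruspro} is an integration by parts that you do not mention. Writing the first-order contribution in kernel form,
\begin{equation*}
-\int_0^t\!\!\int_{\mathbb{R}^d}\!\int_{\mathbb{R}^d} G_{\infty,\kappa}^{(d)}(\mathbf{x},\mathbf{y};t-s)\,\nabla\eta(\mathbf{y})\cdot\nabla_{\mathbf{y}}G_{L,\kappa}^{(d)}(\mathbf{y},\mathbf{x};s)\,\mathrm{d}\mathbf{y}\,\mathrm{d}\mathbf{x}\,\mathrm{d}s,
\end{equation*}
integrate by parts in $\mathbf{y}$ (no boundary term since $\mathrm{Supp}\,\nabla\eta\subset\Lambda_{\beta L}^d$): this moves $\nabla_{\mathbf{y}}$ onto $G_{\infty,\kappa}^{(d)}(\mathbf{x},\mathbf{y};t-s)\nabla\eta(\mathbf{y})$, producing one more $\Delta\eta$ term (absorbed into the zeroth-order piece you already control) together with a term containing $\nabla_{\mathbf{y}}G_{\infty,\kappa}^{(d)}$, which is explicit by Mehler's formula. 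Then \eqref{dermelh}, $G_{L,\kappa}^{(d)}\leq G_{\infty,\kappa}^{(d)}\leq G_{\infty,\kappa}^{(d)}(\cdot\,,\cdot\,;\cdot\,,2)$, and the generalised semigroup identity \eqref{propsemi2} with $\gamma=2$ collapse the $\mathbf{x}$-integral back to the diagonal $G_{\infty,\kappa}^{(d)}(\mathbf{y},\mathbf{y};t,2)$, so the $\tanh(\tfrac{\kappa}{2}t)$ Gaussian survives and the $s$-integral $\int_0^t\sqrt{\coth(\tfrac{\kappa}{2}(t-s))}\,\mathrm{d}s$ is finite, giving the $(1+\sqrt\kappa)$ prefactor. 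With this replacement your outline does close; as written, the first-order term is not controlled.
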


\begin{remark}
The estimate is sharp in the following senses. Its behavior when $t \downarrow 0$ is given by the term $\mathrm{Tr}_{L^{2}(\mathbb{R}^{d})}\{G_{\infty,\kappa}(t)\}= (2\sinh(\frac{\kappa}{2} t))^{-d}$. We recall that $\sinh(x) \sim x$ when $x \downarrow 0$. Moreover, the r.h.s. of \eqref{cherc} has an exponential decay in $t$ when $t \uparrow \infty$ arising from $\mathrm{Tr}_{L^{2}(\mathbb{R}^{d})}\left\{G_{\infty,\kappa}(t)\right\}$. Note that one can get rid of the polynomial growth in $t$ appearing in the r.h.s. of \eqref{cherc} via \eqref{redexp}.
\end{remark}

\begin{remark}
The $\mathpzc{L}$ in Theorem \ref{Mimp} is independent of $\kappa$ whenever $\kappa \in [1,\infty)$. If $0<\kappa<1$, then the estimate holds for $L$ large enough chosen accordingly (i.e. $L \geq cste \times \kappa^{-\frac{1}{2}}$).
\end{remark}

\begin{remark}
In \eqref{cherc}, the powers on the factors $(1+\sqrt{\kappa})$, $(1+\kappa)$, $(1+t)$ and the constant appearing in the argument of the exponential can be optimized.
\end{remark}

\subsection{An application in quantum statistical mechanics.}

Consider a $d$-dimensional ideal quantum gas composed of a large number of non-relativistic spin-0 identical particles confined in the box $\Lambda_{L}^{d}$ and trapped in an isotropic harmonic potential. Such a system is considered to figure out the Bose-Einstein condensation phenomenon created by cold alkali atom gases in magnetic-optical trap, see e.g. \cite[Chap. 10]{Phys3} and references therein. Within the one-body approximation, the dynamics of a single Boson is determined by \eqref{HL}. Suppose that the system is at equilibrium with a thermal and particles bath. In the grand-canonical ensemble, let $(\beta,z,\vert \Lambda_{L}^{d}\vert)$ be the external parameters. Here, $\beta := (k_{B} T)^{-1}>0$ is the 'inverse' temperature ($k_{B}$ stands for the Boltzmann constant) and $z=\mathrm{e}^{\beta \mu}$ the fugacity ($\mu$ is the chemical potential). The finite-volume single-particle partition function is defined as, see e.g. \cite{Ru}:
\begin{equation}
\label{PhiL}
\Phi_{L,\kappa}(\beta) := \mathrm{Tr}_{L^{2}(\Lambda_{L}^{d})}\left\{G_{L,\kappa}(\beta)\right\},\quad \beta>0.
\end{equation}
The grand-canonical average number of particles at finite-volume is related to \eqref{PhiL} by, see \cite{BSa}:
\begin{equation}
\label{Navre}
\overline{N}_{L,\kappa}(\beta,z) := \sum_{l=1}^{\infty} z^{l} \Phi_{L,\kappa}(l \beta),\quad \beta>0,\, z \in \left(0,\mathrm{e}^{\beta \inf \sigma(H_{L,\kappa})}\right).
\end{equation}
Theorem \ref{Mimp} allows one to get the \textit{large-volume behavior} of the quantities in \eqref{PhiL}-\eqref{Navre}. Indeed, one gets $\forall 0<\kappa_{1}<\kappa_{2}<\infty$, $\forall 0<\beta_{1}<\beta_{2}<\infty$ and for any compact subset $K \subset (0, \mathrm{e}^{\beta_{1} E_{\infty,\kappa_{1}}^{(\bold{0})}})$:
\begin{gather*}
%\label{Phi}
\Phi_{\infty,\kappa}(\beta):= \lim_{L \uparrow \infty} \Phi_{L,\kappa}(\beta) = \mathrm{Tr}_{L^{2}(\mathbb{R}^{d})}\left\{G_{\infty,\kappa}(\beta)\right\} = \frac{\mathrm{e}^{-\beta E_{\infty,\kappa}^{(\bold{0})}}}{\left(1 - \mathrm{e}^{-\beta \kappa}\right)^{d}},\\
\overline{N}_{\infty,\kappa}(\beta,z) := \lim_{L \uparrow \infty} \overline{N}_{L,\kappa}(\beta,z) = \sum_{l=1}^{\infty} z^{l} \Phi_{\infty,\kappa}(l \beta),
\end{gather*}
uniformly in $(\kappa,\beta,z) \in [\kappa_{1},\kappa_{2}]\times [\beta_{1},\beta_{2}]\times K$. Moreover, one has the following asymptotics:
\begin{gather*}
\Phi_{L,\kappa}(\beta) = \Phi_{\infty,\kappa}(\beta) + \mathcal{O}\left(\mathrm{e}^{-c L^{2}}\right),\\
\overline{N}_{L,\kappa}(\beta,z) = \overline{N}_{\infty,\kappa}(\beta,z) + \mathcal{O}\left(\mathrm{e}^{-c L^{2}}\right),
\end{gather*}
for some $L$-independent constant $c=c(\kappa,\beta)>0$. We emphasize that the upper bound in \eqref{cherc} plays a crucial to prove the thermodynamic limit of \eqref{Navre} for any $z \in (0,\mathrm{e}^{\beta E_{\infty,\kappa}^{(\bold{0})}})$, see \cite[Sec. A]{BSa}.
%We emphasize that the upper-bound in \eqref{cherc} plays a crucial role in \cite{BSa} (dealing with the thermodynamics of the harmonically trapped Bose gas) to prove the thermodynamic limit of the grand-canonical pressure and rescaled density of particles on the maximal $z$-complex domain of analyticity of the pressure; here $z=\mathrm{e}^{\beta \mu}$ denotes the fugacity ($\mu$ stands for the chemical potential).

\section{Proof of Theorem \ref{Mimp}.}

The starting-point consists in rewriting the difference between the traces involving the difference between the semigroup integral kernels. We refer the reader to Sec. \ref{Appen1} in which we have collected some basic properties on the semigroup kernel. Since $\forall L\in (0,\infty]$ and $\forall \kappa>0$, $\{G_{L,\kappa}(t)\}_{t>0}$ is a Gibbs semigroup with a jointly continuous integral kernel $G_{L,\kappa}^{(d)}(\cdot\,,\cdot\,;t): \mathbb{R}^{d}\times\mathbb{R}^{d}\rightarrow \mathbb{C}$, then:
\begin{equation*}
\mathrm{Tr}_{L^{2}(\mathbb{R}^{d})}\left\{G_{\infty,\kappa}(t)\right\} - \mathrm{Tr}_{L^{2}(\Lambda_{L}^{d})}\left\{G_{L,\kappa}(t)\right\} = \mathscr{Y}_{L,\kappa}^{(d)}(t) + \mathscr{Z}_{L,\kappa}^{(d)}(t),
\end{equation*}
with $\forall d \in \{1,2,3\}$, $\forall L \in(0,\infty)$, $\forall \kappa>0$ and $\forall t>0$:
\begin{align}
\label{Yp}
\mathscr{Y}_{L,\kappa}^{(d)}(t) &:= \int_{\Lambda_{L}^{d}}  \left\{G_{\infty,\kappa}^{(d)}(\bold{x},\bold{x};t) - G_{L,\kappa}^{(d)}(\bold{x},\bold{x};t)\right\}\, \mathrm{d}\bold{x},\\
\label{Yr}
\mathscr{Z}_{L,\kappa}^{(d)}(t) &:= \int_{\mathbb{R}^{d} \setminus \Lambda_{L}^{d}} G_{\infty,\kappa}^{(d)}(\bold{x},\bold{x};t) \,\mathrm{d}\bold{x}.
\end{align}
Here, we used \cite[Prop. 9]{C1}. Note that $\forall \kappa>0$, $\forall t>0$ the kernel $G_{\infty,\kappa}^{(d)}(\cdot\,,\cdot\;t)$ is explicitly known and it is given by Mehler's formula, see \eqref{Mehler}-\eqref{multd}. It is derived from \eqref{funcp1}-\eqref{funcpd} and \eqref{vapd}.\\
\indent Next, it remains to estimate each one of the above quantity. For the quantity in \eqref{Yr}:
\begin{lema}
\label{lemth1}
$\forall d \in \{1,2,3\}$, $\forall L \in (0,\infty)$, $\forall \kappa>0$ and $\forall t>0$:
\begin{equation*}
%\label{majYr}
\mathscr{Z}_{L,\kappa}^{(d)}(t) \leq \left(2 \sinh\left(\frac{\kappa}{2} t\right)\right)^{-d} \mathrm{e}^{- d \kappa \frac{L^{2}}{4} \tanh\left(\frac{\kappa}{2} t\right)}.
\end{equation*}
\end{lema}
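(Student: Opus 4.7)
The plan rests on the explicit form of the integral kernel given by Mehler's formula. Using $\cosh(\kappa t)-1=2\sinh^{2}(\kappa t/2)$ and $\sinh(\kappa t)=2\sinh(\kappa t/2)\cosh(\kappa t/2)$, the diagonal simplifies to
$$G_{\infty,\kappa}^{(1)}(x,x;t)=\sqrt{\tfrac{\kappa}{2\pi\sinh(\kappa t)}}\,\exp\!\bigl(-\kappa x^{2}\tanh(\kappa t/2)\bigr).$$
Because the multidimensional eigenfunctions \eqref{funcpd} are products of their one-dimensional counterparts, $G_{\infty,\kappa}^{(d)}(\bold{x},\bold{x};t)=\prod_{j=1}^{d}G_{\infty,\kappa}^{(1)}(x_{j},x_{j};t)$ is completely factorized, and so is the domain $\Lambda_{L}^{d}=(-L/2,L/2)^{d}$.

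Exploiting this product structure, I would set $A:=\int_{\mathbb{R}}G_{\infty,\kappa}^{(1)}(x,x;t)\,dx=(2\sinh(\kappa t/2))^{-1}$ and $B:=\int_{-L/2}^{L/2}G_{\infty,\kappa}^{(1)}(x,x;t)\,dx$. Then $\mathscr{Z}_{L,\kappa}^{(d)}(t)=A^{d}-B^{d}$, which reduces everything to controlling the one-dimensional tail $A-B=\int_{|x|\geq L/2}G_{\infty,\kappa}^{(1)}(x,x;t)\,dx$. A change of variable $u=x\sqrt{\kappa\tanh(\kappa t/2)}$ converts this to $A\cdot\mathrm{erfc}(a)$ with $a:=(L/2)\sqrt{\kappa\tanh(\kappa t/2)}$, and the classical bound $\mathrm{erfc}(a)\leq e^{-a^{2}}$ (proved by translating the integration variable, $\mathrm{erfc}(a)=\tfrac{2}{\sqrt{\pi}}\int_{0}^{\infty}e^{-(a+s)^{2}}\,ds\leq e^{-a^{2}}$) delivers the Gaussian decay in $L$.

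The remaining task is to assemble the $d$-dimensional bound from the one-dimensional estimate, either via the algebraic factorization $A^{d}-B^{d}=(A-B)\sum_{k=0}^{d-1}A^{k}B^{d-1-k}$ together with $0\leq B\leq A$, or equivalently through the disjoint decomposition $\mathbb{R}^{d}\setminus\Lambda_{L}^{d}=\bigsqcup_{\emptyset\neq S\subseteq\{1,\ldots,d\}}\{\bold{x}:|x_{l}|\geq L/2\iff l\in S\}$, on each piece of which the integrand factorizes cleanly as $(A-B)^{|S|}B^{d-|S|}$. The main delicate point I anticipate is matching the precise form of the exponent stated in the lemma: the elementary assembly most naturally produces a prefactor $d$ outside the Gaussian, so obtaining $d$ inside the exponent should rely on a sharper comparison between $1-\mathrm{erf}(a)^{d}$ and $e^{-da^{2}}$ tailored to the asymptotic regime of large $L$ relevant to the main theorem.
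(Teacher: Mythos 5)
Your approach is essentially the paper's: evaluate the diagonal Mehler kernel, use the product structure $\Lambda_L^d=(-L/2,L/2)^d$ and $G^{(d)}_{\infty,\kappa}(\mathbf{x},\mathbf{x};t)=\prod_j G^{(1)}_{\infty,\kappa}(x_j,x_j;t)$ to reduce to the one-dimensional tail, express it through $\mathrm{erfc}$, and apply the Chernoff bound $\mathrm{erfc}(a)\leq e^{-a^2}$. The paper does exactly this and then dispatches the multidimensional case with a single sentence ("it is enough to treat $d=1$") — it does not carry out the assembly you sketch.

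The "delicate point" you anticipate is real, and it should not be resolved by hunting for a sharper comparison — there is none. Writing $A=(2\sinh(\kappa t/2))^{-1}$, $B/A=\mathrm{erf}(a)$ with $a=(L/2)\sqrt{\kappa\tanh(\kappa t/2)}$, the lemma's stated bound is $A^d-B^d\leq A^d\,e^{-da^2}$, i.e. $1-\mathrm{erf}(a)^d\leq e^{-da^2}$. This is \emph{false} for $d>1$: at $d=2$, $a=1$ the left side is $1-\mathrm{erf}(1)^2\approx 0.29$ while the right side is $e^{-2}\approx 0.14$, and the failure persists for all $a>0$. Geometrically this is clear: a point such as $(L/2,0,\ldots,0)$ has left the box while contributing only a \emph{single} factor $e^{-\kappa(L/2)^2\tanh(\kappa t/2)}$, not $d$ of them. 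What the product structure actually yields, via either your algebraic factorization $A^d-B^d=(A-B)\sum_{k}A^kB^{d-1-k}\leq d\,A^{d-1}(A-B)$ together with $0\leq B\leq A$, or via a union bound over the $d$ slabs $\{|x_j|\geq L/2\}$, is
\begin{equation*}
\mathscr{Z}_{L,\kappa}^{(d)}(t)\ \leq\ d\left(2\sinh\left(\tfrac{\kappa}{2}t\right)\right)^{-d}\mathrm{e}^{-\kappa\frac{L^2}{4}\tanh\left(\frac{\kappa}{2}t\right)},
\end{equation*}
that is, with the $d$ as a prefactor rather than in the exponent. The lemma as printed is stated too strongly for $d\in\{2,3\}$. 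This is harmless for Theorem~\ref{Mimp}, whose final exponent $\frac{\kappa}{32}\frac{L^2}{4}\tanh(\frac{\kappa}{2}t)$ is far weaker than $\kappa\frac{L^2}{4}\tanh(\frac{\kappa}{2}t)$ and whose constant $C_d$ absorbs the factor $d$ — so you should write up the corrected bound and move on, rather than try to rescue the exponent.
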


\noindent \textit{Proof.} Let $\beta>0$ and $\kappa>0$ be fixed. Because of  \eqref{multd}, it is enough to treat only the case of $d=1$. From \eqref{Mehler} and by setting $x=y$, one has by direct computations:
\begin{equation*}
\forall L \in (0,\infty),\, \forall t>0,\quad \mathscr{Z}_{L,\kappa}^{(d=1)}(t) = \frac{\mathrm{erfc}\left(\sqrt{ \kappa \tanh\left(\frac{\kappa}{2} t\right)}\frac{L}{2}\right)}{\sqrt{2 \sinh(\kappa t)\tanh\left(\frac{\kappa}{2} t\right)}},
\end{equation*}
where $\mathrm{erfc}$ denotes the complementary error function, see e.g. \cite[Eq. (7.1.2)]{AS}. From the Chernoff inequality which reads as $\mathrm{erfc}(\alpha) \leq \mathrm{e}^{- \alpha^{2}}$ $\forall \alpha\geq 0$, along with the identity:
\begin{equation}
\label{Id1}
\sinh(\alpha t) = 2 \sinh\left(\frac{\alpha}{2} t\right) \cosh\left(\frac{\alpha}{2} t\right),\quad \forall \alpha\geq 0,\,\forall t>0,
\end{equation}
one arrives at:
\begin{equation*}
\forall L \in (0,\infty),\, \forall t>0,\quad \mathscr{Z}_{L,\kappa}^{(d=1)}(t) \leq \left(2 \sinh\left(\frac{\kappa}{2} t\right)\right)^{-1} \mathrm{e}^{-  \kappa  \frac{L^{2}}{4} \tanh\left(\frac{\kappa}{2} t\right)}.  \tag*{\qed}
\end{equation*}

As for the the quantity defined in \eqref{Yp}, we establish the following estimate:

\begin{proposition}
\label{proth1}
For any $d \in \{1,2,3\}$, there exists a constant $C_{d}>0$ and $\forall 0<\kappa_{0}<1$ there exists a real $\mathpzc{L}_{\kappa_{0}}>0$ s.t. $\forall L \in [\mathpzc{L}_{\kappa_{0}},\infty)$, $\forall \kappa \in [\kappa_{0},\infty)$ and $\forall t>0$:
\begin{equation}
\label{majYp}
\left\vert \mathscr{Y}_{L,\kappa}^{(d)}(t)\right\vert \leq C_{d} \left(1+\sqrt{\kappa}\right)(1+\kappa)^{d} (1+t)^{3(d+ \frac{1}{2})} \left(2\sinh\left(\frac{\kappa}{2} t\right)\right)^{-d} \mathrm{e}^{-\frac{\kappa}{32} \frac{L^{2}}{4} \tanh\left(\frac{\kappa}{2}t\right)}.
\end{equation}
\end{proposition}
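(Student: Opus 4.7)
My plan is to implement a geometric perturbation argument directly at the level of the semigroups. Since $H_{L,\kappa}$ and $H_{\infty,\kappa}$ share the same differential expression and differ only through the Dirichlet condition on $\partial\Lambda_{L}^{d}$, any comparison between $G_{L,\kappa}(t)$ and $G_{\infty,\kappa}(t)$ after insertion of cutoffs that vanish near $\partial\Lambda_{L}^{d}$ is driven purely by commutators with those cutoffs. These commutators are supported on an annulus far from the origin, on which Mehler's kernel is already Gaussianly small, and this is what produces the announced decay.

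I would introduce a smooth cutoff $\chi=\chi_{L}\in C_{c}^{\infty}(\Lambda_{L}^{d})$ with $0\leq\chi\leq 1$, identically equal to $1$ on $\{|\mathbf{x}|_{\infty}<L/8\}$, vanishing outside $\{|\mathbf{x}|_{\infty}<L/4\}$, and satisfying $\|\partial^{\alpha}\chi\|_{\infty}\leq c_{\alpha}L^{-|\alpha|}$. Splitting $1=\chi^{2}+(1-\chi^{2})$ under the diagonal integral decomposes $\mathscr{Y}_{L,\kappa}^{(d)}(t)$ into a bulk piece weighted by $\chi^{2}$ and a boundary piece weighted by $1-\chi^{2}$. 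The boundary piece is handled exactly as in Lemma \ref{lemth1}: monotonicity $0\leq G_{L,\kappa}^{(d)}(\mathbf{x},\mathbf{x};t)\leq G_{\infty,\kappa}^{(d)}(\mathbf{x},\mathbf{x};t)$, the lower bound $|\mathbf{x}|_{\infty}\geq L/8$ on the support of $1-\chi^{2}$, and the explicit Mehler diagonal together produce a contribution already bounded by the right-hand side of \eqref{majYp}.

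For the bulk piece I would use the Duhamel identity
\begin{equation*}
\chi G_{\infty,\kappa}(t)-G_{L,\kappa}(t)\chi=\int_{0}^{t}G_{L,\kappa}(t-s)\,[H_{\infty,\kappa},\chi]\,G_{\infty,\kappa}(s)\,\mathrm{d}s,
\end{equation*}
obtained by differentiating $s\mapsto G_{L,\kappa}(t-s)\chi G_{\infty,\kappa}(s)$ and using $H_{L,\kappa}(\chi f)=H_{\infty,\kappa}(\chi f)$, which is legitimate because $\chi$ vanishes in a neighbourhood of $\partial\Lambda_{L}^{d}$. Multiplying on the right by $\chi$, then applying the symmetric Duhamel identity once more to push the remaining $\chi$ through $G_{\infty,\kappa}$ from the other side, and taking the trace, yields a sum of one- and two-commutator terms, each a (possibly iterated) time integral of a trace involving the first-order differential operator $[H_{\infty,\kappa},\chi]=-\tfrac12(\Delta\chi)-(\nabla\chi)\!\cdot\!\nabla$, whose support is confined to the annulus $\{L/8\leq|\mathbf{x}|_{\infty}\leq L/4\}$.

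I would then bound each such trace via H\"older for trace ideals, $\|AB\|_{1}\leq\|A\|_{2}\|B\|_{2}$, reading off Hilbert--Schmidt norms directly from Mehler's kernel. Each $G_{\infty,\kappa}$ factor contributes both a prefactor $(2\sinh(\kappa t/2))^{-d/2}$ and the Gaussian spatial weight $\exp(-\tfrac12\kappa\tanh(\kappa t/2)|\mathbf{x}|^{2})$; on the support of the commutators $|\mathbf{x}|^{2}\geq L^{2}/64$, and combining two Hilbert--Schmidt factors produces the denominator $32$ in the Gaussian exponent of \eqref{majYp}. The $L^{-1}$ and $L^{-2}$ losses from $\nabla\chi$ and $\Delta\chi$ are absorbed by that Gaussian tail provided $L\geq\mathpzc{L}_{\kappa_{0}}$, and $\nabla$ acting on $G_{\infty,\kappa}$ accounts for the factors $(1+\sqrt\kappa)(1+\kappa)^{d}$ via Mehler-derivative estimates. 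The main obstacle is the bookkeeping of the inner time integrals: using the semigroup property $G_{\infty,\kappa}(s)=G_{\infty,\kappa}(s/n)^{n}$ one must redistribute factors so that exactly one copy of $(2\sinh(\kappa t/2))^{-d}$ survives for the full trace, another carries the Gaussian spatial decay, and the remaining integrations yield only polynomial growth in $t$, uniformly as $t\downarrow 0$ (preserving the $t^{-d}$ singularity) and as $t\uparrow\infty$ (preserving the $\mathrm{e}^{-d\kappa t/2}$ tail of $\mathrm{Tr}\,G_{\infty,\kappa}(t)$) --- which is what forces the exponent $(1+t)^{3(d+1/2)}$.
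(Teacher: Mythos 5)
Your approach is genuinely different from the paper's. The paper rewrites $\mathscr{Y}_{L,\kappa}^{(d)}(t)$ through a boundary-integral Duhamel formula (Lemma~\ref{expdiff}), so that the whole problem is reduced to bounding $\nabla_{\bold{x}}G_{L,\kappa}^{(d)}$ at $\partial\Lambda_{L}^{d}$ (Proposition~\ref{cruspro}), and then spends most of its effort on a geometric perturbation theory (cutoffs $f_{L,\eta},\hat f_{L,\eta},\hat{\hat f}_{L,\eta}$, a reference operator $h_{L,\kappa,\eta}$, the Duhamel formula \eqref{Duham}) to control that normal gradient. You propose instead an interior-cutoff Duhamel identity at the operator level followed by trace estimates; if it worked it would indeed be shorter and cleaner. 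The bulk--vs.--boundary split, the commutator $[H_{\infty,\kappa},\chi]$ supported on an annulus far from the origin, and the observation that $H_{L,\kappa}\chi=H_{\infty,\kappa}\chi$ are all sound.

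However, the crucial step --- ``bound each such trace via H\"older for trace ideals, $\Vert AB\Vert_{1}\le\Vert A\Vert_{2}\Vert B\Vert_{2}$, reading off Hilbert--Schmidt norms directly from Mehler's kernel'' --- does not go through as stated. After a single Duhamel and trace cyclicity, the integrand is $\mathrm{Tr}\bigl(\chi\,G_{L,\kappa}(t-s)\,[H_{\infty,\kappa},\chi]\,G_{\infty,\kappa}(s)\bigr)$. If you factorize this as $\Vert\chi\,G_{L,\kappa}(t-s)\Vert_{2}\Vert[H_{\infty,\kappa},\chi]G_{\infty,\kappa}(s)\Vert_{2}$, the first factor carries \emph{no} decay in $L$ at all (since $\chi\equiv1$ except on a thin shell, $\Vert\chi G_{L,\kappa}(\tau)\Vert_{2}\le\Vert G_{\infty,\kappa}(\tau)\Vert_{2}$ is just the trace-like blow-up $(2\sinh(\kappa\tau))^{-d/2}$), while the second factor carries only $\mathrm{e}^{-c\kappa L^{2}\tanh(\kappa s/2)}$, which tends to $1$ as $s\downarrow0$; the resulting $s$-integral gives a power of $L$, not the required Gaussian decay, and in addition for $d\ge2$ the $(t-s)^{-d/2}$ singularity at $s=t$ is not even integrable. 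The point you are missing is that the $L$-decay comes from the \emph{shared} intermediate variable $\bold{z}$ in the annulus, which is a neighbour of \emph{both} semigroup factors: at the kernel level, $G_{L,\kappa}(\bold{x},\bold{z};t-s)\,G_{\infty,\kappa}(\bold{z},\bold{y};s)$ carries a combined Gaussian weight $\mathrm{e}^{-c|\bold{z}|^{2}[\tanh(\frac{\kappa}{2}s)+\tanh(\frac{\kappa}{2}(t-s))]}$, and only the identity $\tanh(\alpha s)+\tanh(\alpha(t-s))\ge\tanh(\alpha t)$ (the paper's \eqref{astuce}--\eqref{Id5}) turns this into $\mathrm{e}^{-c L^{2}\tanh(\frac{\kappa}{2}t)}$ uniformly in $s$. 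H\"older with independent Hilbert--Schmidt norms severs that link. To rescue your scheme you would need to keep the two factors joined through $\bold{z}$ --- e.g.\ split $G_{\infty,\kappa}(s)=G_{\infty,\kappa}(s/2)^{2}$ and then take the Hilbert--Schmidt norm of the block $G_{L,\kappa}(\tfrac{t-s}{2})[H_{\infty,\kappa},\chi]G_{\infty,\kappa}(\tfrac{s}{2})$ as a whole, extracting the annulus constraint inside the squared kernel --- which is precisely the pointwise manipulation the paper performs, not a ``read-off'' from Mehler's formula.
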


Theorem \ref{Mimp} follows from Lemma \ref{lemth1} and Proposition \ref{proth1} together. The rest of this section is devoted to the proof of Proposition \ref{proth1}.

\subsection{Proof of Proposition \ref{proth1}.}

In view of \eqref{Yp}, the first step consists in writing an expression for the difference between the two semigroup kernels. It is contained in the following lemma:

\begin{lema}
\label{expdiff}
$\forall L \in (0,\infty)$, $\forall \kappa>0$ and $\forall t>0$:
\begin{multline}
\label{diff1}
\forall (x,y) \in \Lambda_{L}^{2},\quad G_{\infty,\kappa}^{(1)}(x,y;t) - G_{L,\kappa}^{(1)}(x,y;t) =  \\- \frac{1}{2} \int_{0}^{t}  \left\{G_{\infty,\kappa}^{(1)}(x,-\frac{L}{2};s) \left(\partial_{z} G_{L,\kappa}^{(1)}\right)(-\frac{L}{2},y;t-s) - G_{\infty,\kappa}^{(1)}(x,\frac{L}{2};s) \left(\partial_{z} G_{L,\kappa}^{(1)}\right)(\frac{L}{2},y;t-s)\right\}\, \mathrm{d}s,
\end{multline}
and in the case of $d=2,3$, for any $(\bold{x},\bold{y}) \in \Lambda_{L}^{2d}$:
\begin{equation}
\label{diff2}
G_{\infty,\kappa}^{(d)}(\bold{x},\bold{y};t) - G_{L,\kappa}^{(d)}(\bold{x},\bold{y};t) = - \frac{1}{2} \int_{0}^{t} \int_{\partial \Lambda_{L}^{d}} G_{\infty,\kappa}^{(d)}(\bold{x},\bold{z};s)\left[ \bold{n}_{\bold{z}} \cdot \nabla_{\bold{z}} G_{L,\kappa}^{(d)}(\bold{z},\bold{y};t-s)\right]\,   \mathrm{d}\sigma(\bold{z})\,\mathrm{d}s,
\end{equation}
where $\mathrm{d}\sigma(\bold{z})$ denotes the measure on $\partial \Lambda_{L}^{d}$ and $\bold{n}_{\bold{z}}$ the outer normal to $\partial \Lambda_{L}^{d}$ at $\bold{z}$.
\end{lema}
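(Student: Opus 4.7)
The plan is to derive \eqref{diff2} by a classical Duhamel/Green's-identity argument, exploiting the fact that $G_{\infty,\kappa}^{(d)}$ and $G_{L,\kappa}^{(d)}$ solve the \emph{same} parabolic equation inside $\Lambda_L^d$ and differ only in their boundary behaviour. Formula \eqref{diff1} will then follow by specializing \eqref{diff2} to $d=1$: since $\partial\Lambda_L^1=\{-L/2,L/2\}$ with outer normals $\mp 1$, the surface integral degenerates into a signed sum over the two endpoints.

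Fix $\bold{x},\bold{y}\in\Lambda_L^d$ and $t>0$, and introduce
\[
f(s) := \int_{\Lambda_L^d} G_{\infty,\kappa}^{(d)}(\bold{x},\bold{z};s)\, G_{L,\kappa}^{(d)}(\bold{z},\bold{y};t-s)\, \mathrm{d}\bold{z},\qquad s\in(0,t).
\]
Extending $G_{L,\kappa}^{(d)}(\cdot,\bold{y};t-s)$ by zero to $\mathbb{R}^d$ and using the semigroup action of $G_{\infty,\kappa}(s)$, one finds $\lim_{s\downarrow 0} f(s)=G_{L,\kappa}^{(d)}(\bold{x},\bold{y};t)$ (since $\bold{x}\in\Lambda_L^d$); symmetrically, $\lim_{s\uparrow t} f(s)=G_{\infty,\kappa}^{(d)}(\bold{x},\bold{y};t)$, because the Dirichlet kernel concentrates on the diagonal as its time argument tends to $0$. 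The identity to prove thus reduces to computing $\int_0^t f'(s)\,\mathrm{d}s$.

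Differentiating $f$ on $(0,t)$ under the integral sign and using that both kernels satisfy $\partial_u G = \tfrac{1}{2}\Delta_{\bold{z}} G - \tfrac{1}{2}\kappa^2|\bold{z}|^2 G$ in their respective interiors, the harmonic potential terms cancel and I obtain
\[
f'(s) = \tfrac{1}{2}\int_{\Lambda_L^d}\bigl[(\Delta_{\bold{z}}G_{\infty,\kappa}^{(d)})\,G_{L,\kappa}^{(d)} - G_{\infty,\kappa}^{(d)}\,(\Delta_{\bold{z}}G_{L,\kappa}^{(d)})\bigr]\mathrm{d}\bold{z}.
\]
Green's second identity on $\Lambda_L^d$ transforms this into $-\tfrac{1}{2}\int_{\partial\Lambda_L^d}\bigl[G_{\infty,\kappa}^{(d)}\,\bold{n}\cdot\nabla G_{L,\kappa}^{(d)} - G_{L,\kappa}^{(d)}\,\bold{n}\cdot\nabla G_{\infty,\kappa}^{(d)}\bigr]\mathrm{d}\sigma(\bold{z})$, and the Dirichlet trace $G_{L,\kappa}^{(d)}|_{\partial\Lambda_L^d}=0$ annihilates the second summand. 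Integration in $s\in(0,t)$ yields \eqref{diff2}; the one-dimensional specialization produces \eqref{diff1}.

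The remaining points to settle are technical: (i)~joint smoothness of both kernels in $(s,\bold{z})$ permitting the differentiation under the integral and Green's identity, granted by the explicit Mehler formula for $G_{\infty,\kappa}^{(d)}$ and by interior/boundary parabolic regularity for $G_{L,\kappa}^{(d)}$ on the smooth faces of the cube, the coefficient $\tfrac{1}{2}\kappa^2|\bold{z}|^2$ being smooth; (ii)~existence of a classical normal derivative of $G_{L,\kappa}^{(d)}$ on each face, which can be secured by smoothing the codimension-$2$ edges of $\Lambda_L^d$ and passing to the limit since edges and corners have zero surface measure; (iii)~control of the limits $s\downarrow 0$ and $s\uparrow t$, obtained by combining the Gaussian pointwise bounds on $G_{\infty,\kappa}^{(d)}$ from Mehler with the pointwise domination $0\le G_{L,\kappa}^{(d)}\le G_{\infty,\kappa}^{(d)}$ (cf.\ Sec.\ \ref{Appen1}). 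I expect the main---but mild---obstacle to be (ii), namely justifying the boundary regularity of $G_{L,\kappa}^{(d)}$ at the Lipschitz corners of $\Lambda_L^d$, but this is a standard issue dealt with by the corner-smoothing approximation just described.
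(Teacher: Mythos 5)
The paper offers no proof of this lemma; it delegates entirely to \cite[Lem.~4.2]{BCL1} (see also \cite{BCL2}), where the argument is precisely the parabolic Green's-identity / Duhamel interpolation you describe: introduce $f(s)=\int_{\Lambda_L^d}G_{\infty,\kappa}^{(d)}(\bold{x},\bold{z};s)G_{L,\kappa}^{(d)}(\bold{z},\bold{y};t-s)\,\mathrm{d}\bold{z}$, compute $f'$ via the parabolic equations in $\bold{z}$ (the potential terms cancel), pass to the boundary by Green's second identity, kill one term with the Dirichlet trace, and read off $f(t^-)-f(0^+)$. So your proof is correct and is the standard one. Two remarks. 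First, the corner-smoothing you flag as the main technical obstacle for (ii) can be circumvented entirely: $H_{L,\kappa}$ is a sum of commuting one-dimensional Dirichlet oscillators, so $G_{L,\kappa}^{(d)}(\bold{x},\bold{y};t)=\prod_{j=1}^{d}G_{L,\kappa}^{(1)}(x_j,y_j;t)$; the normal derivative on each open face of the cube is then explicit from the $d=1$ problem (where $\partial\Lambda_L^1$ has no corners), and the edges are $\mathrm{d}\sigma$-null, so no domain approximation is needed. Second, a sign sanity check worth recording: with $\bold{n}=-1$ at $z=-L/2$ and $\bold{n}=+1$ at $z=L/2$, your \eqref{diff2} specializes to
\begin{equation*}
G_{\infty,\kappa}^{(1)}(x,y;t)-G_{L,\kappa}^{(1)}(x,y;t)=\tfrac{1}{2}\int_0^t\left\{G_{\infty,\kappa}^{(1)}(x,-\tfrac{L}{2};s)\left(\partial_z G_{L,\kappa}^{(1)}\right)(-\tfrac{L}{2},y;t-s)-G_{\infty,\kappa}^{(1)}(x,\tfrac{L}{2};s)\left(\partial_z G_{L,\kappa}^{(1)}\right)(\tfrac{L}{2},y;t-s)\right\}\mathrm{d}s,
\end{equation*}
i.e.\ with a $+\tfrac{1}{2}$ prefactor, not the $-\tfrac{1}{2}$ printed in \eqref{diff1}. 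Your sign is the correct one: both terms in the braces are nonnegative (one-sided derivatives of a nonnegative function vanishing at the endpoints), and the left-hand side is nonnegative by \eqref{fondineq}, so \eqref{diff1} as printed carries a harmless sign typo (it enters the sequel only through the absolute value in \eqref{crus}).
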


The proof of Lemma \ref{expdiff} in the case of $d=3$ can be found in \cite[Lem. 4.2]{BCL1}, see also \cite{BCL2}. Since the generalization to $d=1,2$ can be easily obtained by similar arguments, we do not give any proof.\\

\indent Recall that the kernel $G_{\infty,\kappa}^{(d)}$ is explicitly known and given by Mehler's formula. In view of \eqref{Yp} along with the expressions \eqref{diff1}-\eqref{diff2}, the actual problem comes down to derive a sufficiently sharp estimate on the gradient of the finite-volume semigroup kernel allowing us to bring out a Gaussian decay in $L$ for the quantity in \eqref{Yp}. It is contained in the following proposition:

\begin{proposition}
\label{cruspro}
For any $d \in \{1,2,3\}$, there exists a constant $C_{d}>0$ and $\forall 0<\kappa_{0}<1$ there exists a $\mathpzc{L}_{\kappa_{0}}>0$ s.t. $\forall L \in [\mathpzc{L}_{\kappa_{0}},\infty)$, $\forall \kappa \in [\kappa_{0},\infty)$, $\forall (\bold{x},\bold{y}) \in \Lambda_{L}^{2d}$ and $\forall t>0$:
\begin{gather}
\label{crus}
\left\vert \nabla_{\bold{x}} G_{L,\kappa}^{(d)}(\bold{x},\bold{y};t) \right\vert \leq C_{d} \left\{ \mathpzc{P}_{\infty,\kappa}^{(d)}(\bold{x},\bold{y};t) + \mathpzc{R}_{L,\kappa}^{(d)}(\bold{x},\bold{y};t)\right\},\\
\label{Q1e}
\mathpzc{P}_{\infty,\kappa}^{(d)}(\bold{x},\bold{y};t) := (1+\sqrt{\kappa})(1+t)^{\frac{5}{2}} \sqrt{\coth\left(\frac{\kappa}{2}t\right)}G_{\infty,\kappa}^{(d)}(\bold{x},\bold{y};t,8);\\
\label{R1}
\mathpzc{R}_{L,\kappa}^{(d)}(\bold{x},\bold{y};t) :=  \kappa^{\frac{d}{2}} (1+\kappa)^{\frac{d}{2}} \frac{t^{\frac{d-1}{2}}}{(\sinh(\kappa t))^{\frac{d}{2}}} (1+t)^{\frac{5d}{2}+1} \mathrm{e}^{-\frac{\kappa}{16} \frac{L^{2}}{4} \tanh\left(\frac{\kappa}{2} t\right)} G_{\infty,0}^{(d)}(\bold{x},\bold{y};4t).
\end{gather}
Here, $G_{\infty,\kappa}^{(d)}(\cdot\,,\cdot\,;t,\gamma)$, $\kappa>0$ and $\gamma>0$ is defined in \eqref{ginfty} and $G_{\infty,0}^{(d)}(\cdot\,,\cdot\,;t)$ in \eqref{heatk}-\eqref{multd}.
\end{proposition}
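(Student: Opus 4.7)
\emph{Overall plan.} I would adapt the geometric perturbation method of \cite{CN1}, originally devised for resolvents, to the Gibbs semigroup $G_{L,\kappa}$: approximate $G_{L,\kappa}^{(d)}$ by a cutoff restriction of the explicit Mehler kernel $G_{\infty,\kappa}^{(d)}$, and control the remainder by a Duhamel-like integral localized in an annular neighborhood of $\partial\Lambda_L^d$. The two summands $\mathpzc{P}_{\infty,\kappa}^{(d)}$ and $\mathpzc{R}_{L,\kappa}^{(d)}$ in \eqref{crus} will then correspond respectively to the ``bulk'' contribution (explicit differentiation of Mehler) and the ``boundary'' contribution (carrying the Gaussian $L$-decay).

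\emph{Setup and bulk bound.} I would fix $0 < \lambda < \lambda' < 1$ both close to $1$ and pick a radial $\chi_L \in \mathcal{C}_0^{\infty}(\Lambda_{\lambda' L}^d)$ with $\chi_L \equiv 1$ on $\Lambda_{\lambda L}^d$ and $\|\nabla^k \chi_L\|_\infty \leq C_k ((1-\lambda)L)^{-k}$ for $k = 0,1,2$. Differentiating $s \mapsto G_{\infty,\kappa}(t-s)\chi_L G_{L,\kappa}(s)$ in the spirit of Lemma \ref{expdiff} produces the kernel identity
\begin{equation*}
\chi_L(\bold{x}) G_{L,\kappa}^{(d)}(\bold{x},\bold{y};t) = G_{\infty,\kappa}^{(d)}(\bold{x},\bold{y};t) \chi_L(\bold{y}) + \int_0^t \int_{A_L} G_{\infty,\kappa}^{(d)}(\bold{x},\bold{z};t-s) \bigl([H_{\infty,\kappa}, \chi_L]_{\bold{z}} G_{L,\kappa}^{(d)}(\bold{z},\bold{y};s)\bigr) \mathrm{d}\bold{z}\, \mathrm{d}s,
\end{equation*}
with $A_L := \Lambda_{\lambda' L}^d \setminus \Lambda_{\lambda L}^d$ the support of $\nabla \chi_L$. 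Restricting $\bold{x}$ to $\Lambda_{\lambda L}^d$ (so the LHS reduces to $G_{L,\kappa}^{(d)}(\bold{x},\bold{y};t)$) and applying $\nabla_{\bold{x}}$ isolates the bulk term $\nabla_{\bold{x}} G_{\infty,\kappa}^{(d)}(\bold{x},\bold{y};t)\chi_L(\bold{y})$. Direct differentiation of Mehler's formula gives
\begin{equation*}
|\nabla_{\bold{x}} G_{\infty,\kappa}^{(d)}(\bold{x},\bold{y};t)| \leq \Bigl(\kappa \coth(\kappa t)|\bold{x}| + \frac{\kappa}{\sinh(\kappa t)}|\bold{y}|\Bigr) G_{\infty,\kappa}^{(d)}(\bold{x},\bold{y};t),
\end{equation*}
and I would then absorb the polynomial prefactor into a Mehler kernel whose quadratic exponent is shrunk by a factor $1/8$ (via $\alpha^2 e^{-\beta \alpha^2} \leq C \beta^{-1} e^{-(\beta/8)\alpha^2}$), producing $\mathpzc{P}_{\infty,\kappa}^{(d)}$ with the advertised prefactor $(1+\sqrt{\kappa})(1+t)^{5/2}\sqrt{\coth(\kappa t/2)}$. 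The boundary regime $\bold{x} \in \Lambda_L^d \setminus \Lambda_{\lambda L}^d$ is handled separately by invoking the Dirichlet condition $G_{L,\kappa}^{(d)}|_{\partial \Lambda_L^d} = 0$ together with a displaced cutoff.

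\emph{Boundary bound.} In the correction integral, $\bold{z} \in A_L$ forces $|\bold{z}|_\infty \geq \lambda L/2$, so the factor $\tfrac{\kappa}{2}\tanh(\kappa(t-s)/2)|\bold{z}|^2$ in the Mehler exponent of $G_{\infty,\kappa}^{(d)}(\bold{x},\bold{z};t-s)$ is at least $\tfrac{\kappa \lambda^2 L^2}{8}\tanh(\kappa(t-s)/2)$. Splitting off a fraction of it produces the Gaussian decay $\exp(-\tfrac{\kappa}{16} \tfrac{L^2}{4}\tanh(\kappa(t-s)/2))$ at the cost of slightly widening the rest of the exponent; this accounts for the constant $1/16$ in \eqref{R1}, which in turn becomes the $1/32$ in \eqref{cherc} through Proposition \ref{proth1}. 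The remaining factor $G_{L,\kappa}^{(d)}(\bold{z},\bold{y};s)$ is pointwise bounded by the free heat kernel $G_{\infty,0}^{(d)}(\bold{z},\bold{y};s)$ (Feynman-Kac with $V \geq 0$ combined with Dirichlet monotonicity), and a Chapman-Kolmogorov convolution in $\bold{z}$ together with an optimization in $s$ collapses the double integral into the single factor $G_{\infty,0}^{(d)}(\bold{x},\bold{y};4t)$, yielding $\mathpzc{R}_{L,\kappa}^{(d)}$.

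\emph{Main obstacle.} The crux is that $[H_{\infty,\kappa},\chi_L] = -\tfrac{1}{2}(\Delta\chi_L + 2\nabla\chi_L \cdot \nabla)$ brings $\nabla_{\bold{z}} G_{L,\kappa}^{(d)}(\bold{z},\bold{y};s)$ into the correction, which would make the estimate for $\nabla G_{L,\kappa}$ circular. The resolution is to integrate by parts in $\bold{z}$ (legitimate because $\nabla\chi_L = 0$ on $\partial A_L$, so no boundary terms arise), transferring the derivative onto the known $G_{\infty,\kappa}^{(d)}$ at the cost of a second derivative $\nabla_{\bold{x}}\nabla_{\bold{z}} G_{\infty,\kappa}^{(d)}$ in the bulk; after this, only the bare kernel $G_{L,\kappa}^{(d)}$ survives in the integrand and the free-heat-kernel bound applies. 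A second, quantitative subtlety will be balancing $\|\nabla^k \chi_L\|_\infty \sim ((1-\lambda)L)^{-k}$ against the Gaussian gain: $\lambda$ must be chosen close enough to $1$ that a small piece of the Gaussian dominates those inverse powers, which forces $L \gtrsim \kappa^{-1/2}$ (the harmonic length scale). This constraint is precisely the origin of the threshold $\mathpzc{L}_{\kappa_0}$ and of the uniformity in $\kappa \in [\kappa_0,\infty)$.
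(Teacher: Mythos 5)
Your proposal is structurally quite different from the paper's proof and, as a sketch, omits the two devices that actually do the work there. The paper uses a \emph{three-layer} cutoff scheme ($f_{L,\eta}$, $\hat f_{L,\eta}$, $\hat{\hat f}_{L,\eta}$) with supports separated by a buffer of width $\sim L^{\eta}$, together with an \emph{auxiliary Dirichlet-type semigroup} $g_{L,\kappa,\eta}$ generated by a Schr\"odinger operator whose potential is replaced by the constant $\tfrac14(L-L^{\eta})^{2}$ off a thin boundary strip. The approximant $\mathcal G_{L,\kappa,\eta}(t)=\hat f_{L,\eta}G_{\infty,\kappa}(t)f_{L,\eta}+\hat{\hat f}_{L,\eta}g_{L,\kappa,\eta}(t)f^{c}_{L,\eta}$ then covers \emph{all} of $\Lambda_{L}^{d}$, and after taking the adjoint of the Duhamel identity, $\nabla_{\bold x}$ falls automatically on the known kernels. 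Two concrete gaps in your plan follow from dropping these devices.

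\emph{(i) The boundary regime cannot be reached by any ``displaced cutoff.''} Your identity expresses $G_{L,\kappa}^{(d)}(\bold x,\bold y;t)$ only for $\bold x$ in the set where $\chi_{L}\equiv 1$, and a $\mathcal C_{0}^{\infty}(\Lambda_{L}^{d})$ cutoff can never equal $1$ arbitrarily close to $\partial\Lambda_{L}^{d}$. So the claimed estimate is simply not produced for $\bold x$ in a fixed boundary layer, no matter how $\lambda$ is chosen. The paper covers that region with the second piece $\hat{\hat f}_{L,\eta}\,g_{L,\kappa,\eta}(t)\,f^{c}_{L,\eta}$: the constant potential $\tfrac14(L-L^{\eta})^{2}$ makes the kernel of $g_{L,\kappa,\eta}$ decay like $\mathrm e^{-\kappa^{2}L^{2}t/16}\,G_{\infty,0}^{(d)}$ (Lemma \ref{lem7}), and this is precisely where both the Gaussian $L$-factor and the free heat kernel $G_{\infty,0}^{(d)}(\bold x,\bold y;4t)$ in $\mathpzc R_{L,\kappa}^{(d)}$ originate. \emph{(ii) The near-$t$ singularity of the correction integral is not controlled.} After your integration by parts, the integrand carries $\nabla_{\bold x}\nabla_{\bold z}G_{\infty,\kappa}^{(d)}(\bold x,\bold z;t-s)$, which is $O((t-s)^{-1})$ times a Mehler-type kernel near the diagonal (cf. \eqref{lapmelh}). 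Since $\Lambda_{\lambda L}^{d}$ and $A_{L}=\mathrm{Supp}(\nabla\chi_{L})$ are adjacent, $|\bold x-\bold z|$ can be arbitrarily small; no Gaussian-in-$L$ factor helps because it multiplies $\tanh(\kappa(t-s)/2)\to 0$. The paper kills this singularity via the separated-support properties \eqref{disjsup1}--\eqref{disjsup2}, which force $|\bold x-\bold z|\gtrsim L^{\eta}$ in $\mathcal W^{*}_{L,\kappa,\eta}$ and hence give an extra $\mathrm e^{-c\kappa L^{2\eta}\coth(\kappa t/2)}\sim\mathrm e^{-cL^{2\eta}/t}$ in \eqref{kerwd1}--\eqref{kerwd2} that dominates every negative power of $t$. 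A single cutoff has no such separation; integration by parts only relocates the problem without removing it. These are not finishing details but the core mechanisms of Lemmas \ref{lem7} and \ref{toues}, and neither is present in your sketch.
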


\begin{remark}
The $\mathpzc{L}$ in Proposition \ref{cruspro} can be chosen uniformly in $\kappa$ whenever $\kappa \geq 1$. If $0<\kappa<1$ then the estimate holds for $L$ large enough chosen accordingly (i.e. $L \geq cste/\sqrt{\kappa}$).
\end{remark}

Note that Proposition \ref{cruspro} actually contains the key-estimate of this paper; its proof is placed in  Sec. \ref{proofpro}. We mention that the derivation of such an estimate leans on a Duhamel-like formula for the finite-volume semigroup $G_{L,\kappa}(t)$, $L \in (0,\infty)$ obtained via a geometric perturbation theory.\\

\noindent \textit{Proof of Proposition \ref{proth1}.} Let $0 < \kappa_{0} < 1$ be fixed. Denote $\varsigma_{L} = \pm L/2$. Start with the case of $d=1$. In view of \eqref{Yp}, \eqref{diff1} and \eqref{crus}, we need to estimate $\forall L \in [\mathpzc{L}_{\kappa_{0}},\infty)$ and $\forall \kappa \in [\kappa_{0},\infty)$:
\begin{align}
\label{calZ1}
\forall t>0,\quad \mathpzc{Y}_{L,\kappa}^{(d=1),1}(t) &:= \frac{1}{2} \int_{0}^{t}   \int_{\Lambda_{L}^{1}}  G_{\infty,\kappa}^{(d=1)}(x,\varsigma_{L};s,1) \mathpzc{P}_{\infty,\kappa}^{(d=1)}(\varsigma_{L},x;t-s)\, \mathrm{d}x\,\mathrm{d}s,\\
\label{calZ2}
\mathpzc{Y}_{L,\kappa}^{(d=1),2}(t) &:= \frac{1}{2} \int_{0}^{t}  \int_{\Lambda_{L}^{1}}  G_{\infty,\kappa}^{(d=1)}(x,\varsigma_{L};s,1) \mathpzc{R}_{L,\kappa}^{(d=1)}(\varsigma_{L},x;t-s)\, \mathrm{d}x\,\mathrm{d}s.
\end{align}
Here, we have commuted the two integrals; this will be justified by what follows. We first estimate  \eqref{calZ1}. In view of \eqref{Mehler} and \eqref{Q1e}, then from \eqref{propsemi2} for any $L \in
[\mathpzc{L}_{\kappa_{0}},\infty)$, $\kappa \in [\kappa_{0},\infty)$ and $t>0$:
\begin{equation}
\label{excalY1}
\mathpzc{Y}_{L,\kappa}^{(d=1),1}(t) \leq C \sqrt{\kappa} \left(1+\sqrt{\kappa}\right) (1+t)^{\frac{5}{2}} \frac{1}{\sqrt{2\sinh(\kappa t)}}  \mathrm{e}^{- \frac{\kappa}{8} \frac{L^{2}}{4} \tanh\left(\frac{\kappa}{2}t\right)} \int_{0}^{t} \sqrt{\coth\left(\frac{\kappa}{2} (t-s)\right)}\,\mathrm{d}s,
\end{equation}
for some constant $C>0$. By using the upper bound of the following inequality:
\begin{equation}
\label{Ek4}
\frac{1}{\alpha} \leq \coth(\alpha) := \frac{1}{\tanh(\alpha)} \leq \frac{1+\alpha}{\alpha},\quad \alpha>0,
\end{equation}
along with the inequality:
\begin{equation}
\label{shakj}
\frac{1}{\sqrt{2\sinh( \kappa t)}} = \frac{\sqrt{\tanh\left(\frac{\kappa}{2} t\right)}}{\sqrt{2\sinh(\kappa t) \tanh\left(\frac{\kappa}{2} t\right)}} =  \frac{\sqrt{\tanh\left(\frac{\kappa}{2} t\right)}}{2\sinh\left(\frac{\kappa}{2} t\right)} \leq \frac{1}{2\sinh\left(\frac{\kappa}{2} t\right)},
\end{equation}
justified by \eqref{Id1}, then there exists another constant $C>0$ s.t. $\forall L \in [\mathpzc{L}_{\kappa_{0}},\infty)$ and $\forall \kappa \in [\kappa_{0},\infty)$:
\begin{equation}
\label{ZL1}
\forall t>0,\quad \mathpzc{Y}_{L,\kappa}^{(d=1),1}(t) \leq  C \left(1+\sqrt{\kappa}\right)\sqrt{1+\kappa} \frac{(1+t)^{\frac{7}{2}}}{2\sinh\left(\frac{\kappa}{2} t\right)}  \mathrm{e}^{- \frac{\kappa}{8} \frac{L^{2}}{4} \tanh\left(\frac{\kappa}{2}t\right)}.
\end{equation}
Next, we estimate \eqref{calZ2}. From \eqref{R1} and \eqref{Mehler}, one has $\forall L \in [\mathpzc{L}_{\kappa_{0}},\infty)$, $\forall \kappa \in [\kappa_{0},\infty)$ and $\forall t>0$:
\begin{multline*}
\mathpzc{Y}_{L,\kappa}^{(d=1),2}(t) \leq \sqrt{\kappa}\sqrt{1+\kappa} (1+t)^{\frac{7}{2}} \int_{0}^{t} \frac{1}{\sqrt{\sinh(\kappa(t-s))}} \mathrm{e}^{-\frac{\kappa}{16} \frac{L^{2}}{4} \tanh\left(\frac{\kappa}{2}(t-s)\right)}\, \mathrm{d}s\\ \times \int_{\Lambda_{L}^{1}}  G_{\infty,\kappa}^{(d=1)}(x,\varsigma_{L};s,1) G_{\infty,0}^{(d=1)}(\varsigma_{L},x;4(t-s))\, \mathrm{d}x.
\end{multline*}
For the following, we need to make appear from the integration over $\Lambda_{L}^{1}$ a Gaussian decay in $L$ while having the argument $s$. To do so, let us remark that on $\mathbb{R}^{2d}$, $d \in \{1,2,3\}$ one has:
\begin{equation}
\label{astuce}
\forall s>0,\quad G_{\infty,\kappa}^{(d)}(\bold{x},\bold{y};s,1) \leq \mathrm{e}^{- \frac{\kappa}{4}\left(\vert \bold{x}\vert^{2}+ \vert \bold{y}\vert^{2}\right) \tanh\left(\frac{\kappa}{2}s\right)} G_{\infty,\kappa}^{(d)}(\bold{x},\bold{y};s,2).
\end{equation}
To get \eqref{astuce}, we expanded in \eqref{Mehler} the squares and used that $2ab \leq (a^{2}+b^{2})$ with the fact that $\coth(\alpha) - \tanh(\alpha) \geq 0$ $\forall \alpha>0$. Now, from \eqref{astuce} followed by the lower bound:
\begin{multline}
 \label{Id5}
\tanh(\alpha s)+ \tanh(\alpha(t-s)) = \\
\tanh(\alpha t)\left\{1 + \tanh(\alpha s) \tanh(\alpha(t-s))\right\} \geq \tanh(\alpha t),\quad \forall \alpha\geq0, \forall t>s>0,
\end{multline}
then by using the upper bound in the first inequality of \eqref{roughes} along with \eqref{propsemi}, one has $\forall L \in [\mathpzc{L}_{\kappa_{0}},\infty)$ and $\forall \kappa \in [\kappa_{0},\infty)$:
\begin{equation*}
\forall t>0,\quad \mathpzc{Y}_{L,\kappa}^{(d=1),2}(t) \leq C \kappa \sqrt{1+\kappa} \frac{(1+t)^{\frac{7}{2}}}{\sqrt{t}} \mathrm{e}^{-\frac{\kappa}{16} \frac{L^{2}}{4} \tanh\left(\frac{\kappa}{2}t\right)} \int_{0}^{t}  \frac{\sqrt{s}}{\sqrt{\sinh(\kappa s) \sinh(\kappa(t-s))}}\,\mathrm{d}s,
\end{equation*}
for some constant $C>0$. It remains to use successively the identity:
\begin{equation}
\label{Id3}
\coth(\alpha s) + \coth(\alpha(t-s)) = \frac{\sinh(\alpha t)}{\sinh(\alpha s) \sinh(\alpha(t-s))},\quad \forall \alpha>0,\, \forall t>s>0,
\end{equation}
followed by \eqref{Ek4} and \eqref{shakj}. This yields:
\begin{equation}
\label{ZL2}
\forall t>0,\quad \mathpzc{Y}_{L,\kappa}^{(d=1),2}(t) \leq C \sqrt{\kappa} (1+\kappa) \sqrt{t} \frac{(1+t)^{4}}{2 \sinh\left(\frac{\kappa}{2}t\right)} \mathrm{e}^{-\frac{\kappa}{16} \frac{L^{2}}{4} \tanh\left(\frac{\kappa}{2}t\right)},
\end{equation}
for another constant $C>0$. Gathering \eqref{ZL1}-\eqref{ZL2} together, we get \eqref{majYp} in the case of $d=1$.\\
Subsequently, we turn to the case of $d=2$. Since \eqref{diff2} is made up of four terms, then the same holds for the quantity in \eqref{Yp}. Since these terms have the same structure, it is enough to treat only one of them. In view of \eqref{crus}, we need to estimate $\forall L \in [\mathpzc{L}_{\kappa_{0}},\infty)$,  $\forall \kappa \in [\kappa_{0},\infty)$ and $\forall t>0$:
\begin{align}
\label{calZ11}
\mathpzc{Y}_{L,\kappa}^{(d=2),1}(t) &:= \frac{1}{2} \int_{0}^{t}   \int_{\Lambda_{L}^{2}}  \int_{\Lambda_{L}^{1}}  G_{\infty,\kappa}^{(d=2)}(\bold{x},(z_{1},\varsigma_{L});s,1) \mathpzc{P}_{\infty,\kappa}^{(d=2)}((z_{1},\varsigma_{L}),\bold{x};t-s)\, \mathrm{d}z_{1}\,\mathrm{d}\bold{x}\,\mathrm{d}s,\\
\label{calZ21}
\mathpzc{Y}_{L,\kappa}^{(d=2),2}(t) &:= \frac{1}{2} \int_{0}^{t}   \int_{\Lambda_{L}^{2}}  \int_{\Lambda_{L}^{1}}  G_{\infty,\kappa}^{(d=2)}(\bold{x},(z_{1},\varsigma_{L});s,1) \mathpzc{R}_{L,\kappa}^{(d=2)}((z_{1},\varsigma_{L}),\bold{x};t-s)\,\mathrm{d}z_{1}\,\mathrm{d}\bold{x}\,\mathrm{d}s.
\end{align}
The strategy consists in using the property \eqref{multd} in order to use the results stated in the case of $d=1$. Let us first estimate the quantity in \eqref{calZ11}. In view of  \eqref{multd} and \eqref{Q1e}, then from \eqref{propsemi2}:
\begin{multline*}
\int_{\Lambda_{L}^{2}} \int_{\Lambda_{L}^{1}}  G_{\infty,\kappa}^{(d=2)}(\bold{x},(z_{1},\varsigma_{L});s,1) G_{\infty,\kappa}^{(d=2)}((z_{1},\varsigma_{L}),\bold{x};t-s,8)\,\mathrm{d}z_{1}\,\mathrm{d}\bold{x} \\ \leq
C \int_{\mathbb{R}^{1}} G_{\infty,\kappa}^{(d=1)}(x_{1},x_{1};t,8)\, \mathrm{d}x_{1} \int_{\mathbb{R}^{1}}G_{\infty,\kappa}^{(d=1)}(x_{2},\varsigma_{L};s,1) G_{\infty,\kappa}^{(d=1)}(\varsigma_{L},x_{2};t-s,8)\, \mathrm{d}x_{2},
\end{multline*}
for some constant $C>0$. From \eqref{traceGk}, the first integral in the above r.h.s. is nothing but the trace (multiplied by a constant). Then for any $L \in [\mathpzc{L}_{\kappa_{0}},\infty)$, $\kappa \in [\kappa_{0},\infty)$ and $t>0$, we arrive at:
\begin{multline*}
\mathpzc{Y}_{L,\kappa}^{(d=2),1}(t) \leq C \left(1+\sqrt{\kappa}\right)  \frac{(1+t)^{\frac{5}{2}}}{2\sinh\left(\frac{\kappa}{2} t\right)} \\
\times \int_{0}^{t}   \sqrt{\coth\left(\frac{\kappa}{2} (t-s)\right)}\,\mathrm{d}s \int_{\mathbb{R}^{1}}  G_{\infty,\kappa}^{(d=1)}(x_{2},\varsigma_{L};s,1) G_{\infty,\kappa}^{(d=1)}(\varsigma_{L},x_{2};t-s,8)\,\mathrm{d}x_{2},
\end{multline*}
for another $C>0$. The integral w.r.t. $s$ has been estimated in the case of $d=1$, see \eqref{excalY1}. Therefore, it remains to mimic the arguments leading to \eqref{ZL1} to conclude. Next, we estimate \eqref{calZ21}. In view of \eqref{Mehler}-\eqref{multd} and \eqref{R1}, from the first upper bound in \eqref{roughes} along with \eqref{propsemi2}:
\begin{multline*}
\int_{\Lambda_{L}^{2}}  \int_{\Lambda_{L}^{1}} G_{\infty,\kappa}^{(d=2)}(\bold{x},(z_{1},\varsigma_{L});s,1) G_{\infty,0}^{(d=2)}((z_{1},\varsigma_{L}),\bold{x};4(t-s))\,\mathrm{d}z_{1}\,\mathrm{d}\bold{x}  \leq
C \sqrt{\frac{\kappa}{\sinh(\kappa s)}} \sqrt{s} \\
\times \int_{\Lambda_{L}^{1}}  G_{\infty,0}^{(d=1)}(x_{1},x_{1};4t) \,\mathrm{d}x_{1} \int_{\mathbb{R}^{1}} G_{\infty,\kappa}^{(d=1)}(x_{2},\varsigma_{L};s,1) G_{\infty,0}^{(d=1)}(\varsigma_{L},x_{2};4(t-s))\,\mathrm{d}x_{2},
\end{multline*}
for some constant $C>0$. Note that the integrand in the first integral of the above r.h.s. is nothing but a constant. This will make appear a factor $L$, but we will get rid of it at the end. Ergo, in view of \eqref{R1} and \eqref{Mehler}, there exists another $C>0$ s.t. $\forall L \in [\mathpzc{L}_{\kappa_{0}},\infty)$, $\forall \kappa \in [\kappa_{0},\infty)$ and $\forall t>0$:
\begin{multline*}
\mathpzc{Y}_{L,\kappa}^{(d=2),2}(t) \leq C \kappa^{\frac{3}{2}} (1+\kappa) L \frac{(1+t)^{6}}{\sqrt{t}} \int_{0}^{t} \frac{\sqrt{s} \sqrt{t-s}}{\sqrt{\sinh(\kappa s)} \sinh(\kappa(t-s))} \mathrm{e}^{-\frac{\kappa}{16} \frac{L^{2}}{4} \tanh\left(\frac{\kappa}{2}(t-s)\right)} \,\mathrm{d}s \\ \times \int_{\mathbb{R}^{1}} G_{\infty,\kappa}^{(d=1)}(x_{2},\varsigma_{L};s,1) G_{\infty,0}^{(d=1)}(\varsigma_{L},x_{2};4(t-s))\,\mathrm{d}x_{2}.
\end{multline*}
The rest of the proof mimics the strategy used for the case of $d=1$. By using the upper bound in the first inequality of \eqref{roughes} along with \eqref{propsemi}, one has $\forall L \in [\mathpzc{L}_{\kappa_{0}},\infty)$, $\forall \kappa \in [\kappa_{0},\infty)$ and $\forall t>0$:
\begin{equation*}
\mathpzc{Y}_{L,\kappa}^{(d=2),2}(t) \leq C \kappa^{2} (1+\kappa) L \frac{(1+t)^{6}}{t} \int_{0}^{t} \frac{s \sqrt{t-s}}{\sinh(\kappa s)\sinh(\kappa(t-s))} \mathrm{e}^{-\frac{\kappa}{16} \frac{L^{2}}{4} \tanh\left(\frac{\kappa}{2}(t-s)\right)} \mathrm{e}^{-\frac{\kappa}{4} \frac{L^{2}}{4} \tanh\left(\frac{\kappa}{2}s\right)}\,\mathrm{d}s.
\end{equation*}
By using successively \eqref{Id5}, \eqref{Id3} and \eqref{emax} one straightforwardly gets:
\begin{equation*}
\mathpzc{Y}_{L,\kappa}^{(d=2),2}(t) \leq C \kappa (1+\kappa)^{2} L  \sqrt{t} \frac{(1+t)^{7}}{2 \sinh(\kappa t)} \mathrm{e}^{-\frac{\kappa}{16} \frac{L^{2}}{4} \tanh\left(\frac{\kappa}{2}t\right)},
\end{equation*}
for another $L$-independent $C>0$. It remains to use \eqref{redexp} to get rid of the $L$-factor:
\begin{equation*}
\frac{L}{2 \sinh(\kappa t)} \mathrm{e}^{-\frac{\kappa}{32} \frac{L^{2}}{4} \tanh\left(\frac{\kappa}{2}t\right)} \leq   \frac{C}{\sqrt{\kappa}} \frac{1}{2 \sinh(\kappa t) \sqrt{\tanh\left(\frac{\kappa}{2}t\right)}} \leq \frac{C}{\sqrt{\kappa}} \frac{1}{\left(2 \sinh\left(\frac{\kappa}{2} t\right)\right)^{2}}.
\end{equation*}
Gathering the above estimates together, one arrives $\forall L \in [\mathpzc{L}_{\kappa_{0}},\infty)$, $\forall \kappa \in [\kappa_{0},\infty)$ and $\forall t>0$ at:
\begin{equation*}
%\label{ZL22}
\mathpzc{Y}_{L,\kappa}^{(d=2),2}(t) \leq C \sqrt{\kappa} (1+\kappa)^{2} \sqrt{t} \frac{(1+t)^{7}}{\left(2 \sinh\left(\frac{\kappa}{2} t\right)\right)^{2}} \mathrm{e}^{-\frac{\kappa}{32} \frac{L^{2}}{4} \tanh\left(\frac{\kappa}{2}t\right)},
\end{equation*}
for another constant $C>0$. The case of $d=3$ can be deduced by similar arguments. \qed

\subsection{Proof of Proposition \ref{cruspro}.}
\label{proofpro}

As it was previously mentioned, Proposition \ref{cruspro} contains the key-estimate to prove Theorem \ref{Mimp}. The proof leans on an approximation of the finite-volume semigroup operator via a geometric perturbation theory. Although this method had been originally developed for the resolvent operators, see \cite{CN1} and also \cite{CFFH,S1}, below we extend the method to the semigroup operators.

\subsubsection{An approximation via a geometric perturbation theory.}
\label{GPTo}

The key-idea consists in isolating in $\Lambda_{L}^{d}$ the region close to the boundary from the bulk where the semigroup $G_{\infty,\kappa}(t)$ will act. The underlying difficulty  is to keep a good control of the remainder terms arising from this approximation. This will be achieved by using well-chosen cutoff functions.\\
\indent For any $0 < \eta <1$, $0 < \vartheta \leq 1000$, $d \in \{1,2,3\}$ and $L \in (0,\infty)$ define:
\begin{equation}
\label{Theta}
\Theta_{L,\eta}(\vartheta) := \left\{\bold{x} \in \overline{\Lambda_{L}^{d}}: \mathrm{dist}\left(\bold{x}, \partial \Lambda_{L}^{d}\right) \leq \vartheta L^{\eta}\right\}.
\end{equation}
For $L$ sufficiently large, $\Theta_{L,\eta}(\vartheta)$ models a 'thin' compact subset of $\Lambda_{L}^{d}$ near the boundary with Lebesgue-measure $\vert \Theta_{L,\eta}(\vartheta)\vert$ of order $\mathcal{O}(L^{(d-1)+\eta})$.
For any $0 < \eta < 1$, let $L_{0}=L_{0}(\eta) \geq 1$ s.t.
\begin{equation}
\label{defR0}
\Theta_{L_{0},\eta}(1000) \subsetneq \Lambda_{L_{0}}^{d},\quad L_{0} - L_{0}^{\eta} \geq L_{0}/\sqrt{2},
\end{equation}
and $L_{0}$ large enough. Let us now introduce some well-chosen families of smooth cutoff functions. \\
Let $f_{L,\eta}$ and $f_{L,\eta}^{c}$, $L \in [L_{0}(\eta),\infty)$ be a partition of the unity of $\Lambda_{L}^{d}$ satisfying:
\begin{gather*}
f_{L,\eta} + f_{L,\eta}^{c} = 1\quad \textrm{on $\Lambda_{L}^{d}$};\\
\mathrm{Supp}\left(f_{L,\eta}\right) \subset \left(\Lambda_{L}^{d} \setminus \Theta_{L,\eta}\left(\frac{1}{16}\right)\right),\quad
f_{L,\eta} = 1\,\, \textrm{if}\,\, \bold{x} \in \left(\Lambda_{L}^{d} \setminus \Theta_{L,\eta}\left(\frac{1}{8}\right)\right),\quad 0 \leq f_{L,\eta} \leq 1;\\
\mathrm{Supp}\left(f_{L,\eta}^{c}\right) \subset \Theta_{L,\eta}\left(\frac{1}{8}\right), \quad
f_{L,\eta}^{c} = 1\,\, \textrm{if}\,\, \bold{x} \in \Theta_{L,\eta}\left(\frac{1}{16}\right).
\end{gather*}
Moreover, there exists a constant $C>0$ s.t.
\begin{equation*}
\forall L\geq L_{0}(\eta),\quad  \left\Vert D^{\sigma} f_{L,\eta} \right\Vert_{\infty}\leq C  L^{- \vert \sigma \vert \eta},\quad \forall \vert \sigma \vert \leq 2,\,\vert \sigma \vert = \sigma_{1}+\dotsb + \sigma_{d}.
\end{equation*}
Also, let $\hat{f}_{L,\eta}$ and $\hat{\hat{f}}_{L,\eta}$, $L \in [L_{0}(\eta),\infty)$ satisfying:
\begin{gather*}
\mathrm{Supp}\left(\hat{f}_{L,\eta}\right) \subset \left(\Lambda_{L}^{d} \setminus \Theta_{L,\eta}\left(\frac{1}{64}\right)\right),\quad
\hat{f}_{L,\eta} = 1\,\, \textrm{if}\,\, \bold{x} \in \left(\Lambda_{L}^{d} \setminus \Theta_{L,\eta}\left(\frac{1}{32}\right)\right),\quad 0 \leq \hat{f}_{L,\eta} \leq 1;\\
\mathrm{Supp}\left(\hat{\hat{f}}_{L,\eta}\right) \subset \Theta_{L,\eta}\left(\frac{1}{2}\right),\quad
\hat{\hat{f}}_{L,\eta} = 1\,\, \textrm{if}\,\, \bold{x} \in \Theta_{L,\eta}\left(\frac{1}{4}\right),\quad 0 \leq \hat{\hat{f}}_{L,\eta} \leq 1.
\end{gather*}
Moreover, there exists another constant $C>0$ s.t.
\begin{equation*}
\forall L \geq L_{0}(\eta),\quad \max\left\{\left\Vert D^{\sigma} \hat{f}_{L,\eta} \right\Vert_{\infty}, \left\Vert D^{\sigma} \hat{\hat{f}}_{L,\eta} \right\Vert_{\infty}\right\} \leq C  L^{- \vert \sigma \vert \eta}, \quad \forall \vert \sigma \vert \leq 2.
\end{equation*}
With these properties, one straightforwardly gets:
\begin{gather}
\label{supot1}
\hat{f}_{L,\eta} f_{L,\eta} = f_{L,\eta};\\
\label{disjsup1}
\mathrm{dist}\left(\mathrm{Supp}\left(D^{\sigma} \hat{f}_{L,\eta}\right), \mathrm{Supp}\left( D^{\tau} f_{L,\eta}\right)\right) \geq C L^{\eta}, \quad \forall 1\leq \vert \sigma\vert \leq 2,\,\forall 0 \leq \vert \tau\vert \leq 2;\\
\label{supot2}
\hat{\hat{f}}_{L,\eta}f_{L,\eta}^{c} = f_{L,\eta}^{c};\\
\label{disjsup2}
\mathrm{dist}\left(\mathrm{Supp}\left(D^{\sigma} \hat{\hat{f}}_{L,\eta}\right), \mathrm{Supp}\left(D^{\tau} f_{L,\eta}^{c}\right)\right) \geq C L^{\eta}, \quad \forall 1\leq \vert \sigma \vert \leq 2,\, \forall 0 \leq \vert \tau\vert \leq 2,
\end{gather}
for some $L$-independent constant $C>0$.\\
\indent Afterwards, let us define $\forall 0<\eta<1$, $\forall L \in [L_{0}(\eta),\infty)$ (see \eqref{defR0}) and $\forall \kappa>0$ on $\mathcal{C}_{0}^{\infty}(\Lambda_{L}^{d})$:
\begin{equation}
\label{ptih}
h_{L,\kappa,\eta} := \frac{1}{2} \left(-i\nabla_{\bold{x}}\right)^{2} + \frac{1}{2} \kappa^{2} V_{L,\eta}(\bold{x}),\quad V_{L,\eta}(\bold{x}) := \left\{\begin{array}{ll}
\vert \bold{x}\vert^{2},\,\,\,&\textrm{if $\bold{x} \in \mathrm{Supp}\left(\hat{\hat{f}}_{L,\eta}\right)$},\\
\frac{1}{4}\left(L-L^{\eta}\right)^{2},\,\,\,&\textrm{otherwise}.
\end{array}\right.
\end{equation}
By standard arguments, \eqref{ptih} extends to a family of self-adjoint and semi-bounded operators $\forall L \in [L_{0}(\eta),\infty)$, denoted again by $h_{L,\kappa,\eta}$, with domain $D(h_{L,\kappa,\eta})=W_{0}^{1,2}(\Lambda_{L}^{d})\cap W^{2,2}(\Lambda_{L}^{d})$.
For any $0<\eta<1$, $L \in [L_{0}(\eta),\infty)$ and $\kappa>0$, let $\{g_{L,\kappa,\eta}(t):= \mathrm{e}^{- t h_{L,\kappa,\eta}}:L^{2}(\Lambda_{L}^{d}) \rightarrow L^{2}(\Lambda_{L}^{d})\}_{t \geq 0}$ be the strongly-continuous one-parameter semigroup generated by $h_{L,\kappa,\eta}$. It is an integral operator with an integral kernel jointly continuous in $(\bold{x},\bold{y},t) \in \overline{\Lambda_{L}^{d}}\times \overline{\Lambda_{L}^{d}}\times (0,\infty)$. We denote it by $g_{L,\kappa,\eta}^{(d)}$.\\
\indent Next, introduce $\forall 0<\eta<1$, $\forall L \in [L_{0}(\eta),\infty)$ and $\forall  \kappa>0$ the following operators on $L^{2}(\Lambda_{L}^{d})$:
\begin{gather}
\label{calG}
\forall t>0,\quad \mathcal{G}_{L,\kappa,\eta}(t) := \hat{f}_{L,\eta} G_{\infty,\kappa}(t) f_{L,\eta} + \hat{\hat{f}}_{L,\eta} g_{L,\kappa,\eta}(t) f_{L,\eta}^{c},\\
\label{calR}
\begin{split}
\mathcal{W}_{L,\kappa,\eta}(t) := &-\left\{\frac{1}{2}\left(\Delta \hat{f}_{L,\eta}\right) + i\left(\nabla \hat{f}_{L,\eta}\right) \cdot \left(-i\nabla\right) \right\} G_{\infty,\kappa}(t) f_{L,\eta} + \\
&- \left\{\frac{1}{2}\left(\Delta \hat{\hat{f}}_{L,\eta}\right) + i\left(\nabla \hat{\hat{f}}_{L,\eta}\right) \cdot \left(-i\nabla\right) \right\}g_{L,\kappa,\eta}(t) f_{L,\eta}^{c}.
\end{split}
\end{gather}
Sometimes, we will use the shorthand notations:
\begin{equation}
\label{splicalG}
\forall t>0,\quad \mathcal{G}_{L,\kappa,\eta}^{(p)}(t) := \hat{f}_{L,\eta}G_{\infty,\kappa}(t) f_{L,\eta},\quad \mathcal{G}_{L,\kappa,\eta}^{(r)}(t) := \hat{\hat{f}}_{L,\eta}g_{L,\kappa,\eta}(t) f_{L,\eta}^{c}.
\end{equation}

The main result of this paragraph is the following Duhamel-like formula:

\begin{proposition}
\label{Duhamcom}
$\forall d \in \{1,2,3\}$, $\forall 0<\eta<1$, $\forall L\in [L_{0}(\eta),\infty)$ and  $\forall \kappa>0$, it takes place in the bounded operators sense on $L^{2}(\Lambda_{L}^{d})$:
\begin{equation}
\label{Duham}
\forall t>0,\quad G_{L,\kappa}(t) = \mathcal{G}_{L,\kappa,\eta}(t) - \int_{0}^{t}  G_{L,\kappa}(t-s) \mathcal{W}_{L,\kappa,\eta}(s)\,\mathrm{d}s.
\end{equation}
\end{proposition}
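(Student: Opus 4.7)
The plan is to implement a standard Duhamel trick by interpolating between the two sides of \eqref{Duham}. Set
\[F(s) := G_{L,\kappa}(t-s)\,\mathcal{G}_{L,\kappa,\eta}(s),\qquad s\in[0,t],\]
and recover \eqref{Duham} from $F(t)-F(0) = \int_0^t F'(s)\,\mathrm{d}s$. First I would identify the endpoints: trivially $F(t)=\mathcal{G}_{L,\kappa,\eta}(t)$, while at $s=0$ the relations $\hat{f}_{L,\eta}f_{L,\eta}=f_{L,\eta}$ and $\hat{\hat{f}}_{L,\eta}f_{L,\eta}^c=f_{L,\eta}^c$ from \eqref{supot1}--\eqref{supot2} combined with $f_{L,\eta}+f_{L,\eta}^c=1$ give $\mathcal{G}_{L,\kappa,\eta}(0)=\mathbbm{1}$ on $L^2(\Lambda_L^d)$, so $F(0)=G_{L,\kappa}(t)$.

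The heart of the proof is the computation of $F'(s)$. Applying the product rule on a dense domain (e.g.\ $\mathcal{C}_0^\infty(\Lambda_L^d)$ applied from the right) yields
\[F'(s) = G_{L,\kappa}(t-s)\,H_{L,\kappa}\,\mathcal{G}_{L,\kappa,\eta}(s) - G_{L,\kappa}(t-s)\bigl\{\hat{f}_{L,\eta}H_{\infty,\kappa}G_{\infty,\kappa}(s)f_{L,\eta} + \hat{\hat{f}}_{L,\eta}h_{L,\kappa,\eta}g_{L,\kappa,\eta}(s)f_{L,\eta}^c\bigr\}.\]
The crucial observation is that pushing $H_{L,\kappa}$ through each cutoff yields exactly $\mathcal{W}_{L,\kappa,\eta}(s)$. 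Indeed, by \eqref{supot1} and the compact support of $\hat{f}_{L,\eta}$ inside $\Lambda_L^d$, the function $\hat{f}_{L,\eta}G_{\infty,\kappa}(s)f_{L,\eta}\psi$ lies in $D(H_{L,\kappa})$, and because the potentials of $H_{L,\kappa}$ and $H_{\infty,\kappa}$ coincide everywhere we have
\[H_{L,\kappa}\,\hat{f}_{L,\eta}\,\varphi \;=\; \hat{f}_{L,\eta}\,H_{\infty,\kappa}\,\varphi \;+\; \bigl[H_{L,\kappa},\hat{f}_{L,\eta}\bigr]\varphi,\qquad \bigl[H_{L,\kappa},\hat{f}_{L,\eta}\bigr] = -\tfrac{1}{2}(\Delta\hat{f}_{L,\eta}) - i(\nabla\hat{f}_{L,\eta})\cdot(-i\nabla).\]
By the very design of $V_{L,\eta}$ in \eqref{ptih} the potentials of $H_{L,\kappa}$ and $h_{L,\kappa,\eta}$ agree identically on $\mathrm{Supp}(\hat{\hat{f}}_{L,\eta})$, so $\hat{\hat{f}}_{L,\eta}h_{L,\kappa,\eta}\varphi = \hat{\hat{f}}_{L,\eta}H_{L,\kappa}\varphi$ and the analogous commutator identity holds with $\hat{\hat{f}}_{L,\eta}$ in place of $\hat{f}_{L,\eta}$. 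Substituting both identities into the displayed expression for $F'(s)$ collapses the two $H_{L,\kappa}$ contributions and leaves precisely $F'(s) = G_{L,\kappa}(t-s)\mathcal{W}_{L,\kappa,\eta}(s)$, comparing with \eqref{calR}.

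Integrating $F'$ over $[0,t]$ and rearranging then gives \eqref{Duham} in the strong sense on a dense set, and by the boundedness on $L^2(\Lambda_L^d)$ of all operators involved (for $s\in(0,t)$), the identity extends to the bounded operator sense. The main technical obstacle I anticipate is the justification of strong differentiability of $F$ on $(0,t)$ together with the Bochner integrability of $s\mapsto G_{L,\kappa}(t-s)\mathcal{W}_{L,\kappa,\eta}(s)$; this is handled by the Gibbs (i.e.\ $\mathrm{C}_0$-analytic, trace-class) nature of the semigroups $G_{\infty,\kappa}(s)$, $g_{L,\kappa,\eta}(s)$, $G_{L,\kappa}(t-s)$, which maps $L^2$ into $\bigcap_n D(\cdot^n)$ of the respective generator for $s>0$, so that all compositions with the first-order differential operator appearing in $\mathcal{W}_{L,\kappa,\eta}$ are well defined and strongly continuous in $s$, with singularities at $s=0^+$ and $s=t^-$ controlled by standard parabolic estimates (the details of which are not needed here since \eqref{Duham} is an equality of bounded operators, not a quantitative bound).
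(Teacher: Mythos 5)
Your argument is correct and is essentially the same one the paper uses: the interpolation $F(s)=G_{L,\kappa}(t-s)\,\mathcal{G}_{L,\kappa,\eta}(s)$, its derivative $F'(s)=G_{L,\kappa}(t-s)\mathcal{W}_{L,\kappa,\eta}(s)$ obtained from the commutator identities $H_{L,\kappa}\hat{f}_{L,\eta}=[H_{\infty,\kappa},\hat{f}_{L,\eta}]+\hat{f}_{L,\eta}H_{\infty,\kappa}$, $H_{L,\kappa}\hat{\hat{f}}_{L,\eta}=[h_{L,\kappa,\eta},\hat{\hat{f}}_{L,\eta}]+\hat{\hat{f}}_{L,\eta}h_{L,\kappa,\eta}$, and the endpoints $\mathcal{G}_{L,\kappa,\eta}(0)=\mathbbm{1}$ via \eqref{supot1}--\eqref{supot2}, are precisely what the paper delegates to the abstract Duhamel Proposition~\ref{resCorn} borrowed from \cite{C1}, whose proof is this very interpolation. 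The only place where you cut a corner is in treating the integrability near $s=0$ as a formality ``not needed since the statement is not a quantitative bound'': it \emph{is} part of the statement that the integral in \eqref{Duham} exists, and this is exactly hypothesis (C) of Proposition~\ref{resCorn}, verified in the paper through $\Vert\mathcal{W}_{L,\kappa,\eta}(s)\Vert\le Cs^{-1/2}$ from \eqref{normcalR}, which in turn rests on Lemma~\ref{lem7} and the support-separation properties \eqref{disjsup1}--\eqref{disjsup2} of the cutoffs. So: same route, but you should not wave away the $s\downarrow 0$ estimate---it is the analytic price of admission.
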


The proof of Proposition \ref{Duhamcom} can be found in Sec. \ref{Duha}; it is essentially based on the application of \cite[Prop. 3]{C1} taking into account the features of the cutoff functions introduced previously.

\begin{remark}
\label{allbon}
One can derive the following upper bounds on the operator norms. $\forall d \in \{1,2,3\}$ there exist two constants $C_{d},c>0$ s.t. $\forall 0<\eta < 1$, $\forall L\in [L_{0}(\eta),\infty)$, $\forall \kappa>0$ and $\forall t>0$:
\begin{gather}
\label{normcalG}
\left\Vert \mathcal{G}_{L,\kappa,\eta}(t)\right\Vert \leq \left\Vert \mathcal{G}_{L,\kappa,\eta}^{(p)}(t)\right\Vert + \left\Vert \mathcal{G}_{L,\kappa,\eta}^{(r)}(t)\right\Vert \leq \left(\cosh(\kappa t)\right)^{-\frac{d}{2}} + C_{d} \mathrm{e}^{-\frac{\kappa^{2}}{16}L^{2}t},\\
\label{normcalR}
\left\Vert \mathcal{W}_{L,\kappa,\eta}(t)\right\Vert \leq C_{d}\sqrt{1+\kappa} \frac{\sqrt{1+t}}{\sqrt{t}} \mathrm{e}^{-c \frac{L^{2\eta}}{t}} \left\{1 + (1+t)^{d-\frac{1}{2}} \mathrm{e}^{-\frac{\kappa^{2}}{8} \frac{L^{2}}{4} t}\right\}.
\end{gather}
The upper bound in \eqref{normcalG} comes from \eqref{norm} and \eqref{kerhL}. The rough estimate in \eqref{normcalR} is derived from Lemmas \ref{lem7} and \ref{plentes} along with the properties \eqref{disjsup1}-\eqref{disjsup2}.
%Note that the operator $\mathcal{W}_{L,\kappa,\eta}(t)$ which has an operator-norm decaying slower than a gaussian decay is involved in the integral of \eqref{Duham}.
\end{remark}

%\begin{remark}
%\label{appr}
%By virtue of Remark \ref{allbon}, \eqref{Duham} means that $G_{L,\kappa}(t)$ can be approximated in the bounded operators sense by $\mathcal{G}_{L,\kappa,\eta}^{(p)}(t)$ (its operator norm is bounded by a $L$-independent constant) for $L$ large enough. The remainder term has an operator norm decreasing exponentially in $L^{2\eta}$.
%\end{remark}

\subsubsection{Conclusion of the proof.}

The starting-point in the proof of Proposition \ref{cruspro} is the Duhamel-like formula in \eqref{Duham}. Taking its adjoint, one has $\forall d\in \{1,2,3\}$, $\forall 0<\eta<1$, $\forall L \in [L_{0}(\eta),\infty)$ (see \eqref{defR0}) and $\forall \kappa>0$ on $L^{2}(\Lambda_{L}^{d})$:
\begin{equation}
\label{adjDu}
\forall t>0,\quad G_{L,\kappa}(t) = \mathcal{G}_{L,\kappa,\eta}^{*}(t) - \int_{0}^{t} \mathcal{W}^{*}_{L,\kappa,\eta}(s) G_{L,\kappa}(t-s)\, \mathrm{d}s,
\end{equation}
where the adjoint operator of $\mathcal{G}_{L,\kappa,\eta}(t)$ and $\mathcal{W}_{L,\kappa,\eta}(t)$ reads respectively as, see \eqref{calG}-\eqref{calR}:
\begin{gather}
\label{calGad}
\mathcal{G}^{*}_{L,\kappa,\eta}(t) = f_{L,\eta}G_{\infty,\kappa}(t) \hat{f}_{L,\eta} + f_{L,\eta}^{c}g_{L,\kappa,\eta}(t) \hat{\hat{f}}_{L,\eta},\\
\label{calRad}
\begin{split}
&\mathcal{W}^{*}_{L,\kappa,\eta}(t) = - f_{L,\eta} G_{\infty,\kappa}(t) \frac{1}{2}\left(\Delta\hat{f}_{L,\eta}\right) + i f_{L,\eta}\left\{\left(-i \nabla\right) G_{\infty,\kappa}(t) - \left[\left(-i\nabla\right),G_{\infty,\kappa}(t)\right]\right\} \left(\nabla\hat{f}_{L,\eta}\right) +\\
&- f_{L,\eta}^{c} g_{L,\kappa,\eta}(t) \frac{1}{2}\left(\Delta \hat{\hat{f}}_{L,\eta}\right) + i f_{L,\eta}^{c}\left\{\left(-i \nabla\right) g_{L,\kappa,\eta}(t) - \left[\left(-i\nabla\right),g_{L,\kappa,\eta}(t)\right]\right\}\left(\nabla \hat{\hat{f}}_{L,\eta}\right).
\end{split}
\end{gather}
Here, $[\cdot\,,\cdot\,]$ denotes the usual commutator, and in the bounded operators sense:
\begin{gather}
\label{commut1}
\left[\left(-i\nabla\right),G_{\infty,\kappa}(t)\right] = - \int_{0}^{t}  G_{\infty,\kappa}(t-s)\left[\left(-i\nabla\right), H_{\infty,\kappa}\right] G_{\infty,\kappa}(s)\,\mathrm{d}s,\\
\label{commut2}
\left[\left(-i\nabla\right),g_{L,\kappa,\eta}(t)\right] = - \int_{0}^{t}  g_{L,\kappa,\eta}(t-s)\left[\left(-i\nabla\right), h_{L,\kappa,\eta}\right] g_{L,\kappa,\eta}(s)\,\mathrm{d}s.
\end{gather}
Writing \eqref{adjDu} in the kernels sense, it follows the identity:
\begin{multline}
\label{adjDuke}
\forall (\bold{x},\bold{y})\in \Lambda_{L}^{2d},\,\forall t>0,\quad \nabla_{\bold{x}}G_{L,\kappa}^{(d)}(\bold{x},\bold{y};t) = \\  \nabla_{\bold{x}}\left(\mathcal{G}_{L,\kappa,\eta}^{*}\right)^{(d)}(\bold{x},\bold{y};t)  - \int_{0}^{t}  \int_{\Lambda_{L}^{d}} \nabla_{\bold{x}}\left(\mathcal{W}^{*}_{L,\kappa,\eta}\right)^{(d)}(\bold{x},\bold{z};s) G_{L,\kappa}^{(d)}(\bold{z},\bold{y};t-s)\,\mathrm{d}\bold{z}\,\mathrm{d}s.
\end{multline}

Next, we need the following lemma whose proof can be found in Sec. \ref{intermB}:

\begin{lema}
\label{toues}
$\forall d \in \{1,2,3\}$ there exist two constants $c,C_{d}>0$ s.t.:\\
$\mathrm{(i)}$ $\forall 0<\eta<1$, $\forall L \in [L_{0}(\eta),\infty)$, $\forall \kappa>0$, $\forall(\bold{x},\bold{y}) \in \Lambda_{L}^{2d}$ and $\forall t>0$:
\begin{gather}
%\label{kcalGstar}
\left\vert \nabla_{\bold{x}} \left(\mathcal{G}^{*}_{L,\kappa,\eta}\right)^{(d)}(\bold{x},\bold{y};t)\right\vert \leq C_{d} \left\{P_{\infty,\kappa,\eta}^{(d)}(\bold{x},\bold{y};t) + R_{L,\kappa,\eta}^{(d)}(\bold{x},\bold{y};t)\right\},\nonumber\\
\label{kerud}
P_{\infty,\kappa,\eta}^{(d)}(\bold{x},\bold{y};t) :=  \left(1+\sqrt{\kappa}\right) \sqrt{\coth\left(\frac{\kappa}{2} t\right)} G_{\infty,\kappa}^{(d)}(\bold{x},\bold{y};t,2),\\
\label{kervd}
R_{L,\kappa,\eta}^{(d)}(\bold{x},\bold{y};t) := \frac{(1+t)^{d}}{\sqrt{t}} \mathrm{e}^{-\frac{\kappa^{2}}{8} \frac{L^{2}}{4} t} G_{\infty,0}^{(d)}(\bold{x},\bold{y};2t).
\end{gather}
$\mathrm{(ii)}$. $\forall \frac{1}{4}<\eta<1$, $\forall L \in [L_{0}(\eta),\infty)$, $\forall \kappa>0$, $\forall(\bold{x},\bold{y}) \in \Lambda_{L}^{2d}$ and $\forall t>0$:
\begin{gather}
\label{kcalRstar}
\left\vert \nabla_{\bold{x}} \left(\mathcal{W}^{*}_{L,\kappa,\eta}\right)^{(d)}(\bold{x},\bold{y};t)\right\vert \leq C_{d}\left\{ r_{\infty,\kappa,\eta}^{(d)}(\bold{x},\bold{y};t) + r_{L,\kappa,\eta}^{(d)}(\bold{x},\bold{y};t)\right\},\\
\label{kerwd1}
r_{\infty,\kappa,\eta}^{(d)}(\bold{x},\bold{y};t) := \left(1+\sqrt{\kappa}\right) \sqrt{\coth\left(\frac{\kappa}{2} t\right)} (1+t) \mathrm{e}^{-c \kappa L^{2\eta} \coth\left(\frac{\kappa}{2}t\right)} G_{\infty,\kappa}^{(d)}(\bold{x},\bold{y};t,8),\\
\label{kerwd2}
r_{L,\kappa,\eta}^{(d)}(\bold{x},\bold{y};t) := \frac{(1+t)^{2d}}{\sqrt{t}} \mathrm{e}^{-\frac{\kappa^{2}}{16} \frac{L^{2}}{4} t} \mathrm{e}^{-c \frac{L^{2\eta}}{t}} \chi_{\Theta_{L,\eta}(\frac{1}{8})}(\bold{x}) G_{\infty,0}^{(d)}(\bold{x},\bold{y};4t) \chi_{\Theta_{L,\eta}(\frac{1}{2})}(\bold{y}).
\end{gather}
Here, $\chi_{\Theta_{L,\eta}(\vartheta)}$, $\vartheta>0$ denotes the indicator function associated with $\Theta_{L,\eta}(\vartheta)$ defined in \eqref{Theta}.
\end{lema}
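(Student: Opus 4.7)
The overall strategy is to differentiate the kernels of $\mathcal{G}^{*}_{L,\kappa,\eta}$ and $\mathcal{W}^{*}_{L,\kappa,\eta}$ pointwise and then invoke three tools: first, Mehler's formula \eqref{Mehler} for $G_{\infty,\kappa}^{(d)}$, which after completing the square has an exponent of the form $-\tfrac{\kappa}{4}\bigl[(\bold{x}-\bold{y})^{2}\coth(\tfrac{\kappa}{2}t) + (\bold{x}+\bold{y})^{2}\tanh(\tfrac{\kappa}{2}t)\bigr]$ and whose $\nabla_{\bold{x}}$ carries the explicit linear prefactor $-\kappa(\bold{x}\cosh(\kappa t)-\bold{y})/\sinh(\kappa t)$; second, the operator inequality $g_{L,\kappa,\eta}(t) \leq \mathrm{e}^{-\kappa^{2}L^{2}t/16}$ times the free Dirichlet heat semigroup on $\Lambda_{L}^{d}$, coming from $V_{L,\eta}\geq (L-L^{\eta})^{2}/4 \geq L^{2}/8$ by \eqref{defR0}, together with the pointwise domination of the latter by $G_{\infty,0}^{(d)}$; and third, the support separations \eqref{disjsup1}-\eqref{disjsup2}, which whenever a derivative of one cutoff meets the support of another cutoff force the intermediate semigroup kernel to be evaluated on pairs at distance $\geq cL^{\eta}$.

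For part (i), write \eqref{calGad} as a sum of two contributions and apply $\nabla_{\bold{x}}$ to each. On the first piece $f_{L,\eta}G_{\infty,\kappa}(t)\hat{f}_{L,\eta}$, Leibniz gives a term in which $\nabla$ hits $f_{L,\eta}$ (harmlessly absorbed since $\Vert\nabla f_{L,\eta}\Vert_{\infty}\leq C L^{-\eta}\leq C$) plus a term in which $\nabla$ hits the Mehler kernel; the resulting linear prefactor is then absorbed back into the Gaussian by a Peter-Paul step $2ab \leq a^{2}+b^{2}$, which costs widening the Gaussian parameter from $1$ to $2$ (giving $G_{\infty,\kappa}^{(d)}(\bold{x},\bold{y};t,2)$) and produces the multiplicative factor $(1+\sqrt{\kappa})\sqrt{\coth(\tfrac{\kappa}{2}t)}$ of $P^{(d)}_{\infty,\kappa,\eta}$. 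On the second piece $f^{c}_{L,\eta}g_{L,\kappa,\eta}(t)\hat{\hat{f}}_{L,\eta}$, the comparison above yields $g^{(d)}_{L,\kappa,\eta}(\bold{x},\bold{y};t)\leq \mathrm{e}^{-\kappa^{2}L^{2}t/16}G^{(d)}_{\infty,0}(\bold{x},\bold{y};t)$; splitting $g_{L,\kappa,\eta}(t)=g_{L,\kappa,\eta}(t/2)g_{L,\kappa,\eta}(t/2)$, letting $\nabla_{\bold{x}}$ act only on the first half, and using the explicit free-heat-kernel gradient bound $|\nabla_{\bold{x}}G^{(d)}_{\infty,0}(\bold{x},\bold{z};s)|\leq C s^{-1/2}G^{(d)}_{\infty,0}(\bold{x},\bold{z};2s)$ delivers the $1/\sqrt{t}$ prefactor and the doubled diffusion time $2t$ present in $R^{(d)}_{L,\kappa,\eta}$.

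For part (ii), expand the four blocks of \eqref{calRad}: each has the shape (cutoff) times (semigroup or commutator) times (derivative of cutoff), so by \eqref{disjsup1}-\eqref{disjsup2} the intermediate kernel is evaluated on pairs at distance $\geq cL^{\eta}$. On the $G_{\infty,\kappa}$-blocks, the $(\bold{x}-\bold{z})^{2}$ term in Mehler's exponent delivers precisely the off-diagonal gain $\mathrm{e}^{-c\kappa L^{2\eta}\coth(\tfrac{\kappa}{2}t)}$ of \eqref{kerwd1}; the Laplacian-of-cutoff contribution works the same way, while the commutator term is handled by \eqref{commut1} together with the identity $[(-i\nabla),H_{\infty,\kappa}]=-i\kappa^{2}\bold{x}$, the linear $\bold{x}$ being absorbed by Cauchy-Schwarz into a wider Gaussian at the cost of the extra polynomial factor $(1+t)$ that appears in $r^{(d)}_{\infty,\kappa,\eta}$. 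On the $g_{L,\kappa,\eta}$-blocks, domination by the free heat kernel combined with the $cL^{\eta}$ off-diagonal separation yields the pure-Gaussian bonus $\mathrm{e}^{-cL^{2\eta}/t}$, the potential lower bound yields $\mathrm{e}^{-\kappa^{2}L^{2}t/64}$, and the two indicator functions in \eqref{kerwd2} simply record that the supports of $\nabla\hat{\hat{f}}_{L,\eta}$ and of $f^{c}_{L,\eta}$ lie in the collars $\Theta_{L,\eta}(\tfrac{1}{2})$ and $\Theta_{L,\eta}(\tfrac{1}{8})$ respectively.

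The main technical obstacle will be the commutator $[(-i\nabla),g_{L,\kappa,\eta}(t)]$ via \eqref{commut2}: the potential $V_{L,\eta}$ of \eqref{ptih} is only piecewise smooth, so $[(-i\nabla),h_{L,\kappa,\eta}]=-\tfrac{i}{2}\kappa^{2}\nabla V_{L,\eta}$ carries a jump across $\partial\mathrm{Supp}(\hat{\hat{f}}_{L,\eta})$ that must be handled either by a smoothing at the interface or by a careful distributional argument, while still retaining all three competing exponents: the $L^{2\eta}$ off-diagonal gain, the $L^{-|\sigma|\eta}$ loss from cutoff derivatives absorbed into polynomial prefactors, and the $L^{2}t$ potential gain. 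Balancing these cleanly is the delicate accounting at the heart of the lemma and is presumably what forces the restriction $\eta>1/4$ in statement (ii).
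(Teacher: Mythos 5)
Your overall decomposition is the same as the paper's: write $\nabla_{\bold{x}}$ of the kernels of $\mathcal{G}_{L,\kappa,\eta}^{*}$ and $\mathcal{W}_{L,\kappa,\eta}^{*}$ term by term via Leibniz, handle the $G_{\infty,\kappa}$ pieces through Mehler's formula and the $2ab\leq a^2+b^2$ widening to $G_{\infty,\kappa}^{(d)}(\cdot,\cdot;t,\gamma)$, exploit the support separations \eqref{disjsup1}--\eqref{disjsup2} to harvest the $\mathrm{e}^{-c\kappa L^{2\eta}\coth(\kappa t/2)}$ and $\mathrm{e}^{-cL^{2\eta}/t}$ gains, and handle the commutators via \eqref{commut1}--\eqref{commut2} with $[(-i\nabla),H_{\infty,\kappa}]=-i\kappa^2\bold{x}$ and $[(-i\nabla),h_{L,\kappa,\eta}]=-\tfrac{i}{2}\kappa^2\nabla V_{L,\eta}$. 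Your diagnosis of why $\eta>1/4$ is imposed, and your flag that $\nabla V_{L,\eta}$ requires care at the interface, are both on target.

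However, there is a genuine gap in the way you propose to bound the derivatives of the $g_{L,\kappa,\eta}$ kernel, which the paper handles as a separate nontrivial step (Lemma \ref{lem7}). You want to get $|\nabla_{\bold{x}} g^{(d)}_{L,\kappa,\eta}(\bold{x},\bold{y};t)|$ by splitting $g_{L,\kappa,\eta}(t)=g_{L,\kappa,\eta}(t/2)g_{L,\kappa,\eta}(t/2)$, letting $\nabla_{\bold{x}}$ act on the left factor, and then ``using the explicit free-heat-kernel gradient bound.'' But after the split you still face $\nabla_{\bold{x}} g^{(d)}_{L,\kappa,\eta}(\bold{x},\bold{z};t/2)$, not $\nabla_{\bold{x}}G^{(d)}_{\infty,0}(\bold{x},\bold{z};t/2)$. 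The Feynman--Kac comparison $g^{(d)}_{L,\kappa,\eta}\leq \mathrm{e}^{-\kappa^2L^2 t/16}G_{L,0}^{(d)}\leq \mathrm{e}^{-\kappa^2L^2 t/16}G_{\infty,0}^{(d)}$ is a \emph{pointwise} inequality between nonnegative kernels; it does \emph{not} propagate to their gradients. So the step is circular: you would need a pointwise gradient bound on $g^{(d)}_{L,\kappa,\eta}$ in order to prove a pointwise gradient bound on $g^{(d)}_{L,\kappa,\eta}$. The same problem recurs, more sharply, for $\Delta_{\bold{x}} g^{(d)}_{L,\kappa,\eta}$ (needed in your $Q^{(d),4}$-type term) and for the $\nabla_{\bold{x}}[(-i\nabla),g_{L,\kappa,\eta}(t)]$ factor. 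The paper closes this gap by a different comparison: it introduces the reference operator $\tilde h_{L,\kappa,\eta}$ in \eqref{H2}, whose semigroup kernel \eqref{exprtildg} is \emph{explicitly} $\mathrm{e}^{-\frac{\kappa^2}{8}(L-L^\eta)^2 t}G_{L,0}^{(d)}$ (Dirichlet heat kernel times scalar), so that its gradient and Laplacian are genuinely under control via \eqref{derkL}--\eqref{lapkL}; then a Duhamel formula \eqref{du} against $\tilde h$, together with \eqref{kerhL} and the semigroup property \eqref{propsemi}, transfers those bounds to $g_{L,\kappa,\eta}$, yielding \eqref{majgradh} and \eqref{majlaph0}. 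Without some substitute for Lemma \ref{lem7} — a Duhamel comparison to an operator with an explicitly known kernel, or a genuine parabolic gradient estimate — your bounds on the $g_{L,\kappa,\eta}$ pieces in both (i) and (ii) are unsupported.
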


\begin{remark} In $\mathrm{(ii)}$, $\eta$ has been restricted to $(\frac{1}{4},1)$ only to make the estimates more elegant.
\end{remark}

\noindent \textit{Proof of Proposition \ref{cruspro}.} From now on, we set $\eta=\frac{1}{2}$ in the r.h.s. of \eqref{adjDuke}. In view of the second term, \eqref{kcalRstar} with \eqref{kerwd1}-\eqref{kerwd2} and \eqref{fondineq}, we need to estimate the two quantities:
\begin{gather}
\label{calQ1}
\mathpzc{Q}_{\infty,\kappa}^{(d)}(\bold{x},\bold{y};t) := \int_{0}^{t}  \int_{\mathbb{R}^{d}} r_{\infty,\kappa,\eta=\frac{1}{2}}^{(d)}(\bold{x},\bold{z};s) G_{\infty,\kappa}^{(d)}(\bold{z},\bold{y};t-s,1)\,\mathrm{d}\bold{z}\,\mathrm{d}s,\\
\label{calQ2}
\mathpzc{Q}_{L,\kappa}^{(d)} (\bold{x},\bold{y};t) := \int_{0}^{t}  \int_{\mathbb{R}^{d}} r_{L,\kappa,\eta=\frac{1}{2}}^{(d)}(\bold{x},\bold{z};s) G_{\infty,\kappa}^{(d)}(\bold{z},\bold{y};t-s,1)\,\mathrm{d}\bold{z}\,\mathrm{d}s.
\end{gather}
Let $L \geq L_{0}(\eta=\frac{1}{2})$ defined in \eqref{defR0}. We start with \eqref{calQ1}. From \eqref{kerwd1} followed by \eqref{propsemi2}, there exist two constants $c,C>0$ s.t. $\forall \kappa>0$, $\forall (\bold{x},\bold{y}) \in \Lambda_{L}^{2d}$ and $\forall t>0$:
\begin{equation*}
%\label{pzcQ1}
\mathpzc{Q}_{\infty,\kappa}^{(d)}(\bold{x},\bold{y};t) \leq C \left(1+\sqrt{\kappa}\right) (1+t) \frac{\sqrt{\coth\left(\frac{\kappa}{2} t\right)}}{\sqrt{\coth\left(\frac{\kappa}{2} t\right)}} G_{\infty,\kappa}^{(d)}(\bold{x},\bold{y};t,8) \int_{0}^{t} \sqrt{\coth\left(\frac{\kappa}{2} s\right)} \mathrm{e}^{-c \kappa L \coth\left(\frac{\kappa}{2} s\right)}\, \mathrm{d}s.
\end{equation*}
By using \eqref{redexp} to get rid of the $\coth$ in the integrand, followed by the lower bound in \eqref{Ek4} for the (artificial) denominator in the  above r.h.s., then the upper bound in \eqref{Q1e} follows. Let us turn to the quantity in \eqref{calQ2}. From \eqref{kerwd2}, one has $\forall \kappa>0$, $\forall (\bold{x},\bold{y}) \in \Lambda_{L}^{2d}$ and $\forall t>0$:
\begin{equation*}
\mathpzc{Q}_{L,\kappa}^{(d)}(\bold{x},\bold{y};t) \leq  (1+t)^{2d}
\int_{0}^{t} \frac{ \mathrm{e}^{-\frac{\kappa^{2}}{16} \frac{L^{2}}{4} s}}{\sqrt{s}}\, \mathrm{d}s \int_{\mathbb{R}^{d}}   G_{\infty,0}^{(d)}(\bold{x},\bold{z};4s) \chi_{\Theta_{L,\frac{1}{2}}(\frac{1}{2})}(\bold{z}) G_{\infty,\kappa}^{(d)}(\bold{z},\bold{y};t-s,1)\,\mathrm{d}\bold{z}.
\end{equation*}
To make appear a Gaussian decay in $L$ from the integration over $\mathbb{R}^{d}$, we use \eqref{astuce}. Here, the presence of the characteristic function in the integrand plays a crucial role. Since $\forall \bold{z} \in \Theta_{L,\frac{1}{2}}(\frac{1}{2})$, $\vert \bold{z}\vert \geq \frac{1}{2}(L-\sqrt{L})$ (recall that $L\geq L_{0}(\eta=\frac{1}{2})$, $L_{0}(\eta)$ as in \eqref{defR0}) then  $\forall (\bold{x},\bold{y}) \in \Lambda_{L}^{2d}$ and $\forall t>0$:
\begin{multline}
\label{multsepa}
\mathpzc{Q}_{L,\kappa}^{(d)}(\bold{x},\bold{y};t) \leq  (1+t)^{2d}
\int_{0}^{t} \frac{1}{\sqrt{s}} \mathrm{e}^{-\frac{\kappa^{2}}{16} \frac{L^{2}}{4} s}  \mathrm{e}^{-\frac{\kappa}{8}\frac{L^{2}}{4} \tanh\left(\frac{\kappa}{2}(t-s)\right)}\,\mathrm{d}s \\
\times \int_{\Theta_{L,\eta}(\frac{1}{2})} G_{\infty,0}^{(d)}(\bold{x},\bold{z};4s) G_{\infty,\kappa}^{(d)}(\bold{z},\bold{y};t-s,2)\,\mathrm{d}\bold{z}.
\end{multline}
Next, we extend the integration w.r.t. $\bold{z}$ to $\mathbb{R}^{d}$ and we use the first upper bound in \eqref{roughes} followed by \eqref{propsemi}. Then, we introduce a factor $s^{\frac{d-1}{2}} s^{-\frac{d-1}{2}}$ under the integral w.r.t $s$. Next, we successively use the lower bound in \eqref{Ek4} and the upper bound $\cosh(\alpha)\leq \mathrm{e}^{\alpha}$ $\forall \alpha \geq 0$ leading to $(\kappa s)^{-\frac{d}{2}} \leq (\coth(\kappa s))^{\frac{d}{2}} \leq \mathrm{e}^{\frac{d}{2} \kappa s} (\sinh(\kappa s))^{-\frac{d}{2}}$. On this way, we get under the same  conditions than \eqref{multsepa}:
\begin{equation*}
%\label{pzcQ2'}
\mathpzc{Q}_{L,\kappa}^{(d)}(\bold{x},\bold{y};t) \leq  C \kappa^{d} (1+t)^{2d} G_{\infty,0}^{(d)}(\bold{x},\bold{y};4t)
\int_{0}^{t} \frac{ s^{\frac{d-1}{2}} (t-s)^{\frac{d}{2}}\, \mathrm{e}^{\frac{d}{2} \kappa s} \mathrm{e}^{-\frac{\kappa^{2}}{16}\frac{L^{2}}{4} s}}{\left\{\sinh(\kappa s) \sinh(\kappa (t-s))\right\}^{\frac{d}{2}}}  \mathrm{e}^{-\frac{\kappa}{8}\frac{L^{2}}{4} \tanh\left(\frac{\kappa}{2}(t-s)\right)}\,\mathrm{d}s,
\end{equation*}
for another constant $C>0$. Here, we artificially made appear a $(\sinh(\kappa s))^{\frac{d}{2}}$ under the integration w.r.t. $s$. This leads to the appearance of a $\mathrm{e}^{\frac{d}{2} \kappa s}$ in the numerator. If $\kappa \geq 1$, we can get rid of it via the term $\mathrm{e}^{-\frac{\kappa^{2}}{16}\frac{L^{2}}{4} s}$ (for $L$ large enough) since $\kappa \leq \kappa^{2}$. If $0<\kappa<1$, one has to choose $L$ large enough accordingly to $\kappa$, i.e. $L \geq cste/ \sqrt{\kappa}$. Given a $\kappa_{0}>0$, let $\mathpzc{L}=\mathpzc{L}_{\kappa_{0}}\geq L_{0}(\frac{1}{2})$ s.t. $\forall L \geq \mathpzc{L}_{\kappa_{0}}$, the inequality $\mathrm{e}^{- \frac{\kappa_{0}^{2}}{2}(\frac{1}{8} \frac{L^{2}}{4} - \frac{d}{\kappa_{0}})s} \leq \mathrm{e}^{- \frac{\kappa_{0}^{2}}{32}\frac{L^{2}}{4}s}$ holds. By using an inequality of type:
\begin{equation}
\label{inegtanh}
\mathrm{e}^{-\frac{\kappa^{2}}{16} \frac{L^{2}}{4} s} \mathrm{e}^{-\frac{\kappa}{8} \frac{L^{2}}{4} \tanh\left(\frac{\kappa}{2}(t-s)\right)}  \leq \mathrm{e}^{-\frac{\kappa}{8} \frac{L^{2}}{4} \left[\tanh\left(\frac{\kappa}{2}s\right)+\tanh\left(\frac{\kappa}{2}(t-s)\right)\right]}
\leq \mathrm{e}^{-\frac{\kappa}{8} \frac{L^{2}}{4} \tanh\left(\frac{\kappa}{2}t\right)},\,\,\, 0<s<t,
\end{equation}
which is justified by the lower bound in \eqref{Ek4} together with \eqref{Id5}, it follows from the identity in \eqref{Id3} that $\forall L \in [\mathpzc{L}_{\kappa_{0}},\infty)$, $\forall \kappa \in [\kappa_{0},\infty)$, $\forall (\bold{x},\bold{y}) \in \Lambda_{L}^{2d}$ and $\forall t>0$:
\begin{multline*}
\mathpzc{Q}_{L,\kappa}^{(d)}(\bold{x},\bold{y};t) \leq  C \kappa^{d} \frac{(1+t)^{2d}}{(\sinh(\kappa t))^{\frac{d}{2}}} \mathrm{e}^{-\frac{\kappa}{16}\frac{L^{2}}{4} \tanh\left(\frac{\kappa}{2} t\right)} G_{\infty,0}^{(d)}(\bold{x},\bold{y};4t) \times \\
\times  \int_{0}^{t} s^{\frac{d-1}{2}} (t-s)^{\frac{d}{2}}\,\left\{(\coth(\kappa s))^{\frac{d}{2}} + (\coth(\kappa (t-s)))^{\frac{d}{2}}\right\}\,\mathrm{d}s,
\end{multline*}
for some constant $C>0$. Here, we used that $(a+b)^{\delta} \leq 2^{\delta}(a^{\delta}+b^{\delta})$ $\forall a,b,\delta>0$. To conclude this estimate, it remains to use that there exists another constant $C>0$ s.t. $\forall t>0$ and $\forall \kappa>0$:
\begin{multline}
\label{emax}
\max\left\{\int_{0}^{t} s^{\frac{d-1}{2}} (t-s)^{\frac{d}{2}} \left(\coth(\kappa s)\right)^{\frac{d}{2}}\,\mathrm{d}s, \int_{0}^{t} s^{\frac{d-1}{2}} (t-s)^{\frac{d}{2}} \left(\coth(\kappa (t-s))\right)^{\frac{d}{2}}\,\mathrm{d}s\right\} \\
\leq C \frac{(1+\kappa)^{\frac{d}{2}}}{\kappa^{\frac{d}{2}}} (1+t)^{\frac{d}{2}} t^{\frac{d+1}{2}}.
\end{multline}
To derive \eqref{R1}, we have to modify the upper bound in \eqref{kervd} by mimicking the method used above to make appear the singularity $(\sinh(\kappa t))^{\frac{d}{2}}$ in the denominator (instead of $\sqrt{t}$). \qed

\subsubsection{Proof of Proposition \ref{Duhamcom}.}
\label{Duha}

The proof relies on \cite[Prop. 3]{C1} that we reproduce here for reader's convenience:

\begin{proposition}
\label{resCorn}
Let $\mathscr{H}$ be a separable Hilbert space and $H$ be a self-adjoint and positive operator having the domain $D \subset \mathscr{H}$. Fix $t_{0}>0$. Assume that there exists an application $(0,t_{0}] \owns t \mapsto S(t) \in \mathfrak{B}(\mathscr{H})$ (the algebra of bounded operators on $\mathscr{H}$) with the following properties:\\
$\mathrm{(A)}$. $\sup_{0<t\leq t_{0}} \Vert S(t)\Vert \leq c_{1} < \infty$. $\mathrm{(B)}$. It is strongly differentiable, $\mathrm{Ran}(S(t)) \subset D$ and $s-\lim_{t \downarrow 0} S(t) = \mathbbm{1}$.
$\mathrm{(C)}$. There exists an application $(0,t_{0}] \owns t \mapsto R(t) \in \mathfrak{B}(\mathscr{H})$ continuous in the operator-norm sense s.t. $\Vert R(t)\Vert \leq c_{2}t^{-\alpha}$ where $0\leq\alpha<1$, and:
\begin{equation*}
\frac{\partial S}{\partial t}(t) \phi + H S(t) \phi = R(t)\phi.
\end{equation*}
Then the following two statements are true:\\
$\mathrm{(i)}$. The sequence of bounded operators $(n > [1/t])$:
\begin{equation*}
T_{n}(t) := \int_{\frac{1}{n}}^{t - \frac{1}{n}}  \exp[-(t-s)H]R(s)\,\mathrm{d}s,
\end{equation*}
converges in norm; let $T(t)$ be its limit;\\
$\mathrm{(ii)}$. The following equality takes place on $\mathfrak{B}(\mathscr{H})$: $\exp(-t H) = S(t) - T(t)$.
\end{proposition}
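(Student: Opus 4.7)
The plan is to establish (i) by showing that the sequence $\{T_n(t)\}$ is Cauchy in operator norm, and then to prove (ii) via a Duhamel-type identity obtained by differentiating the auxiliary map $s\mapsto e^{-(t-s)H}S(s)\phi$. For (i), I would fix $m>n>\lceil 2/t\rceil$ and split
\[
T_m(t) - T_n(t) = \int_{1/m}^{1/n} e^{-(t-s)H}R(s)\,\mathrm{d}s + \int_{t-1/n}^{t-1/m} e^{-(t-s)H}R(s)\,\mathrm{d}s.
\]
Positivity of $H$ gives $\|e^{-(t-s)H}\|\leq 1$, and combined with $\|R(s)\|\leq c_2 s^{-\alpha}$ this bounds the first integral in norm by $c_2(1-\alpha)^{-1}[(1/n)^{1-\alpha} - (1/m)^{1-\alpha}]$, which vanishes as $n,m\to\infty$ since $\alpha<1$; the second is bounded by $c_2(t/2)^{-\alpha}(1/n - 1/m) \to 0$. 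Completeness of $\mathfrak{B}(\mathscr{H})$ then produces the norm-limit $T(t)$.

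For (ii), I would fix $\phi\in\mathscr{H}$, $t\in(0,t_0]$ and small $\epsilon>0$, and set $F(s):=e^{-(t-s)H}S(s)\phi$ for $s\in[\epsilon,t-\epsilon]$. Since (B) gives $S(s)\phi\in D$, the spectral calculus for $H$ ensures that $s\mapsto e^{-(t-s)H}S(s)\phi$ is strongly differentiable; a product-rule computation using the equation in (C) and the commutation of $H$ with $e^{-(t-s)H}$ on $D$ yields
\[
F'(s) = He^{-(t-s)H}S(s)\phi + e^{-(t-s)H}\bigl[R(s)\phi - HS(s)\phi\bigr] = e^{-(t-s)H}R(s)\phi.
\]
Integrating on $[\epsilon,t-\epsilon]$ gives
\[
e^{-\epsilon H}S(t-\epsilon)\phi - e^{-(t-\epsilon)H}S(\epsilon)\phi = \int_\epsilon^{t-\epsilon} e^{-(t-s)H}R(s)\phi\,\mathrm{d}s.
\]
Letting $\epsilon=1/n\to 0$, strong differentiability of $S$ implies its strong continuity, so the first left-hand term tends to $S(t)\phi$; assumption (B) together with strong continuity of the $C_0$-semigroup $\{e^{-tH}\}_{t\geq 0}$ makes the second converge to $e^{-tH}\phi$; and the right-hand side is $T_n(t)\phi$, converging to $T(t)\phi$ by (i). This yields $S(t)\phi - e^{-tH}\phi = T(t)\phi$ for every $\phi\in\mathscr{H}$, hence the bounded-operator identity in (ii).

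The main technical hurdle I foresee is the rigorous justification of the product-rule step for $F'(s)$. One must control the difference quotient $h^{-1}[e^{-(t-s-h)H}S(s+h)\phi - e^{-(t-s)H}S(s)\phi]$ as $h\to 0$ via the splitting $e^{-(t-s-h)H}(S(s+h)\phi - S(s)\phi)/h + \bigl[(e^{-(t-s-h)H} - e^{-(t-s)H})/h\bigr]S(s)\phi$; the first piece is handled by $\|e^{-(t-s-h)H}\|\leq 1$ together with the strong differentiability of $S$, and the second by spectral calculus applied to $S(s)\phi\in D$. A secondary subtlety is that $e^{-(t-s)H}R(s)$ must be norm-continuous in $s$ for the Riemann integrals defining $T_n(t)$ to exist; this follows from the operator-norm continuity of $R$ assumed in (C) combined with the norm continuity of the semigroup $e^{-(t-s)H}$ on any interval bounded away from $t$, itself a consequence of the analyticity of semigroups generated by self-adjoint non-negative operators.
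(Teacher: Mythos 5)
Your proof is correct. Note that the paper itself does not prove Proposition \ref{resCorn}: it reproduces the statement verbatim from \cite[Prop.~3]{C1} and refers the reader there, so there is no in-paper proof to compare against; your argument is the standard Duhamel/fundamental-theorem-of-calculus one, which is essentially the route taken in the cited reference. Your part (i) is the natural Cauchy estimate, using $\Vert \mathrm{e}^{-(t-s)H}\Vert \leq 1$ and the integrable singularity $s^{-\alpha}$ near $s=0$ together with the bounded integrand near $s=t$; your part (ii) differentiates $F(s)=\mathrm{e}^{-(t-s)H}S(s)\phi$, uses $\mathrm{Ran}(S(s))\subset D$ to commute $H$ with the semigroup so that the two $H$-terms cancel, integrates over $[\epsilon,t-\epsilon]$, and passes to the limit using $s\text{-}\lim_{t\downarrow 0}S(t)=\mathbbm{1}$, the strong continuity of $\{\mathrm{e}^{-\tau H}\}_{\tau\geq 0}$, and the norm convergence from (i). The two technical points you flag are the right ones and are handled adequately: the product-rule step is justified by the uniform bound on the semigroup plus strong differentiability of $S$ and differentiability of $\tau\mapsto \mathrm{e}^{-\tau H}\psi$ at $\tau=t-s>0$ (spectral calculus or analyticity), and the Riemann integrals exist because $s\mapsto \mathrm{e}^{-(t-s)H}R(s)$ is norm continuous away from $s=t$. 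The only step you leave implicit is the vector-valued fundamental theorem of calculus on $[\epsilon,t-\epsilon]$, which is legitimate here precisely because $F'(s)=\mathrm{e}^{-(t-s)H}R(s)\phi$ is continuous on that interval (e.g.\ reduce to the scalar case by pairing with a fixed vector); mentioning this would close the argument completely.
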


Before giving the proof, we need a series of estimates related to the kernel of the semigroup generated by the operator in \eqref{ptih}. The proof of the below lemma can be found in Sec. \ref{intermB}.

\begin{lema}
\label{lem7}
$\forall d \in \{1,2,3\}$ there exists a constant $C_{d}>0$ s.t. $\forall 0<\eta<1$, $\forall L\in [L_{0}(\eta),\infty)$, $\forall \kappa>0$, $\forall(\bold{x},\bold{y})\in \Lambda_{L}^{2d}$ and $\forall t>0$:
\begin{gather}
\label{kerhL}
g_{L,\kappa,\eta}^{(d)} (\bold{x},\bold{y};t) \leq C_{d} \mathrm{e}^{- \frac{\kappa^{2}}{4} \frac{L^{2}}{4} t} G_{\infty,0}^{(d)}(\bold{x},\bold{y};t),\\
\label{majgradh}
\left\vert \nabla_{\bold{x}} g_{L,\kappa,\eta}^{(d)}(\bold{x},\bold{y};t)\right\vert
\leq C_{d} \frac{(1+t)^{d}}{\sqrt{t}} \mathrm{e}^{-\frac{\kappa^{2}}{8} \frac{L^{2}}{4} t} G_{\infty,0}^{(d)}(\bold{x},\bold{y};2t),\\
\label{majlaph0}
\left\vert \Delta_{\bold{x}} g_{L,\kappa,\eta}^{(d)} (\bold{x},\bold{y};t)\right\vert
\leq C_{d} \frac{(1+t)^{2d}}{t} \mathrm{e}^{-\frac{\kappa^{2}}{16} \frac{L^{2}}{4} t} G_{\infty,0}^{(d)}(\bold{x},\bold{y};2t).
\end{gather}
\end{lema}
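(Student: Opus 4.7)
The plan hinges on a single pointwise lower bound: condition \eqref{defR0} forces $V_{L,\eta}\geq L^{2}/8$ everywhere on $\Lambda_{L}^{d}$. Indeed, outside $\mathrm{Supp}(\hat{\hat{f}}_{L,\eta})$ the potential equals the constant $(L-L^{\eta})^{2}/4$, while inside this support (which is contained in $\Theta_{L,\eta}(1/2)$) one has $\mathrm{dist}(\bold{x},\partial \Lambda_{L}^{d})\leq L^{\eta}/2$ and hence $|\bold{x}|^{2}\geq (L/2-L^{\eta}/2)^{2}=(L-L^{\eta})^{2}/4$; in either case \eqref{defR0} delivers $V_{L,\eta}(\bold{x})\geq L^{2}/8$. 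The strategy is then to split off the constant part: writing $h_{L,\kappa,\eta}=\kappa^{2}L^{2}/16+\tilde{h}_{L,\kappa,\eta}$ where $\tilde{h}_{L,\kappa,\eta}:=-\tfrac{1}{2}\Delta_{\mathrm{Dir}}^{\Lambda_{L}^{d}}+W_{+}$ and $W_{+}:=\tfrac{1}{2}\kappa^{2}V_{L,\eta}-\kappa^{2}L^{2}/16$ satisfies $0\leq W_{+}\leq C\kappa^{2}L^{2}$ pointwise, one obtains $g_{L,\kappa,\eta}(t)=\mathrm{e}^{-\kappa^{2}L^{2}t/16}\,\tilde{g}(t)$ with $\tilde{g}(t):=\mathrm{e}^{-t\tilde{h}_{L,\kappa,\eta}}$, because the scalar commutes with $\tilde{h}_{L,\kappa,\eta}$. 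All three bounds then reduce to estimating $\tilde{g}$ and its derivatives, after which reinserting $\mathrm{e}^{-\kappa^{2}L^{2}t/16}$ provides the Gaussian prefactor.

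Estimate \eqref{kerhL} follows immediately from the Feynman-Kac representation applied to $\tilde{h}_{L,\kappa,\eta}$: since $W_{+}\geq 0$ and the Brownian motion is killed at $\partial\Lambda_{L}^{d}$, one has $\tilde{g}^{(d)}(\bold{x},\bold{y};t)\leq p_{t}^{\mathrm{Dir}}(\bold{x},\bold{y})\leq G_{\infty,0}^{(d)}(\bold{x},\bold{y};t)$, and multiplying by the scalar factor closes this case.

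For \eqref{majgradh} I would start from the Duhamel formula
\begin{equation*}
\tilde{g}(t)=\mathrm{e}^{-tH_{0}^{\mathrm{Dir}}}-\int_{0}^{t}\mathrm{e}^{-(t-s)H_{0}^{\mathrm{Dir}}}W_{+}\tilde{g}(s)\,\mathrm{d}s,\qquad H_{0}^{\mathrm{Dir}}:=-\tfrac{1}{2}\Delta_{\mathrm{Dir}}^{\Lambda_{L}^{d}},
\end{equation*}
and differentiate in $\bold{x}$. The reflection principle applied coordinate-wise on the product box $\Lambda_{L}^{d}$ yields the gradient bound $|\nabla_{\bold{x}}p_{r}^{\mathrm{Dir}}(\bold{x},\bold{z})|\leq C r^{-1/2}G_{\infty,0}^{(d)}(\bold{x},\bold{z};2r)$ with an $L$-independent constant, as the reflected image terms contribute only negligible $\mathrm{e}^{-cL^{2}/r}$ corrections. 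Combining this with $W_{+}\leq C\kappa^{2}L^{2}$, the kernel estimate just proved for $\tilde{g}$, the convolution identity for the free heat semigroup, and the elementary comparison $G_{\infty,0}^{(d)}(\bold{x},\bold{y};2t-s)\leq 2^{d/2}G_{\infty,0}^{(d)}(\bold{x},\bold{y};2t)$ valid for $s\in [0,t]$, one ends up with $|\nabla_{\bold{x}}\tilde{g}^{(d)}(\bold{x},\bold{y};t)|\leq C(1+\kappa^{2}L^{2}t)t^{-1/2}G_{\infty,0}^{(d)}(\bold{x},\bold{y};2t)$. Reinserting the factor $\mathrm{e}^{-\kappa^{2}L^{2}t/16}$ and spending half of that gain through the elementary inequality $(\kappa^{2}L^{2}t)\mathrm{e}^{-\kappa^{2}L^{2}t/32}\leq C$ leaves exactly the target decay $\mathrm{e}^{-\kappa^{2}L^{2}t/32}$, the polynomial $(1+t)^{d}$ appearing as a convenient overall absorber. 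The Laplacian estimate \eqref{majlaph0} is obtained either by differentiating Duhamel a second time, doubling both the power of $t^{-1/2}$ and the power of $(1+\kappa^{2}L^{2}t)$, or more slickly through the parabolic identity $\Delta g_{L,\kappa,\eta}(t)=2\partial_{t}g_{L,\kappa,\eta}(t)+\kappa^{2}V_{L,\eta}(\bold{x})g_{L,\kappa,\eta}(t)$ combined with the semigroup splitting $g_{L,\kappa,\eta}(t)=g_{L,\kappa,\eta}(t/2)g_{L,\kappa,\eta}(t/2)$ to control $\partial_{t}$; the weaker decay $\mathrm{e}^{-\kappa^{2}L^{2}t/64}$ in the statement leaves just enough exponential slack to absorb the doubled factor $(1+\kappa^{2}L^{2}t)^{2}$.

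The main technical obstacle is exactly this book-keeping of exponential slack: Feynman-Kac furnishes the full Gaussian gain $\mathrm{e}^{-\kappa^{2}L^{2}t/16}$, and each derivative taken in $\bold{x}$ forces one to trade a fraction of this gain for a polynomial in $t$ in order to absorb a factor of $\kappa^{2}L^{2}$ inherited from $W_{+}$ in the Duhamel iteration. The product structure of the box $\Lambda_{L}^{d}$ and the fact that the Dirichlet heat kernel admits a reflection-principle representation with $L$-independent gradient bounds are essential to keep these $\kappa^{2}L^{2}$ factors as the only source of boundary-dependence in the estimate.
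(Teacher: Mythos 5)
Your proof is correct and follows essentially the same approach as the paper: Feynman--Kac together with the pointwise lower bound $V_{L,\eta}\geq cL^{2}$ (via \eqref{defR0}) for \eqref{kerhL}, and a Duhamel formula against a constant-potential reference Dirichlet operator, combined with explicit (reflection-principle) bounds on the gradient and Laplacian of the Dirichlet heat kernel and absorption of the resulting $\kappa^{2}L^{2}t$ factors by halving the exponential rate, for \eqref{majgradh}--\eqref{majlaph0}. The only cosmetic difference is the choice of the subtracted constant: the paper takes $\frac{1}{4}(L-L^{\eta})^{2}$ (so that the Duhamel perturbation $V_{L,\eta}-\tilde{V}_{L,\eta}$ is supported near $\partial\Lambda_{L}^{d}$), whereas you subtract the smaller constant $L^{2}/8$, leaving a nonnegative perturbation $W_{+}$ on all of $\Lambda_{L}^{d}$; both choices work equally well for this lemma.
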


\noindent \textit{Proof of Proposition \ref{Duhamcom}.} The only thing we have to do is verify the assumptions of Proposition \ref{resCorn} in which $\mathcal{G}_{L,\kappa,\eta}(t)$ plays the role of $S(t)$.
Let $0<\eta<1$, $L \in [L_{0}(\eta),\infty)$ and $\kappa>0$ kept fixed. (A) From \eqref{normcalG}, $\mathcal{G}_{L,\kappa,\eta}(t)$ is uniformly bounded in $t$ by some constant $C_{d}>0$.
(B) By using that $s-\lim_{t \downarrow 0} G_{\infty,\kappa}(t)= \mathbbm{1}$ and $s-\lim_{t \downarrow 0} g_{L,\kappa,\eta}(t)= \mathbbm{1}$ in the kernels sense, then:
\begin{equation*}
\forall \phi \in L^{2}(\Lambda_{L}^{d}),\quad \lim_{t \downarrow 0} \mathcal{G}_{L,\kappa,\eta}\phi = \left\{\hat{f}_{L,\eta} f_{L,\eta} + \hat{\hat{f}}_{L,\eta}f_{L,\eta}^{c}\right\}\phi = \left\{f_{L,\eta} + f_{L,\eta}^{c}\right\}\phi=\phi,
\end{equation*}
where we used \eqref{supot1} and \eqref{supot2}. Next, let us investigate the strong differentiability. From \eqref{splicalG}:
\begin{multline*}
%\label{bord}
\forall \phi \in L^{2}(\Lambda_{L}^{d}),\quad \frac{1}{\delta t}\left\{\left(\mathcal{G}_{L,\kappa,\eta}^{(p)}(t+\delta t) \phi\right)(\cdot\,) - \left(\mathcal{G}_{L,\kappa,\eta}^{(p)}(t)\phi\right)(\cdot\,)\right\} \\
= \hat{f}_{L,\eta}(\cdot\,) \frac{1}{\delta t}  \int_{\mathbb{R}^{d}}  \left\{\int_{\mathbb{R}^{d}}  G_{\infty,\kappa}^{(d)}(\cdot\,,\bold{z};t) G_{\infty,\kappa}^{(d)}(\bold{z},\bold{y};\delta t)\,\mathrm{d}\bold{z} -  G_{\infty,\kappa}^{(d)}(\cdot\,,\bold{y};t)\right\} f_{L,\eta}(\bold{y}) \phi(\bold{y})\,\mathrm{d}\bold{y}.
\end{multline*}
Since $G_{\infty,\kappa}(t)L^{2}(\mathbb{R}^{d}) \rightarrow D(H_{\infty,\kappa})$, then the Stone theorem (in the kernels sense) provides:
\begin{equation*}
\lim_{\delta t \downarrow 0} \frac{1}{\delta t}\left\{\left(\mathcal{G}_{L,\kappa,\eta}^{(p)}(t+\delta t) \phi\right)(\cdot\,) - \left(\mathcal{G}_{L,\kappa,\eta}^{(p)}(t)\phi\right)(\cdot\,)\right\}
= - \hat{f}_{L,\eta}(\cdot\,) H_{\infty,\kappa} \int_{\mathbb{R}^{d}}  G_{\infty,\kappa}^{(d)}(\cdot\,,\bold{y};t)f_{L,\eta}(\bold{y}) \phi(\bold{y})\,\mathrm{d}\bold{y}.
\end{equation*}
By using similar arguments to treat the contribution coming from $\mathcal{G}_{L,\kappa,\eta}^{(r)}(\cdot\,)$, we therefore obtain:
\begin{equation}
\label{cfdg}
\lim_{\delta t \downarrow 0} \frac{1}{\delta t}\left\{\mathcal{G}_{L,\kappa,\eta}(t+\delta t) \phi - \mathcal{G}_{L,\kappa,\eta}(t)\phi\right\} = - \hat{f}_{L,\eta} H_{\infty,\kappa} G_{\infty,\kappa}(t)f_{L,\eta} \phi - \hat{\hat{f}}_{L,\eta} h_{L,\kappa,\eta} g_{L,\kappa,\eta}(t)f_{L,\eta}^{c}\phi.
\end{equation}
(C)  Let $D_{0}:=\{ \phi \in \mathcal{C}^{1}(\overline{\Lambda_{L}^{d}})\cap \mathcal{C}^{2}(\Lambda_{L}^{d}), \phi\vert_{\partial \Lambda_{L}^{d}}=0, \Delta \phi \in L^{2}(\Lambda_{L}^{d})\}$ be the domain on which $H_{L,\kappa}$ is essentially self-adjoint. In the weak sense for any $\varphi \in D_{0}$, $\psi \in \mathcal{C}_{0}^{\infty}(\Lambda_{L}^{d})$ and $t>0$:
\begin{equation*}
l_{L}(\varphi,\psi) := \left\langle H_{L,\kappa} \varphi, \mathcal{G}_{L,\kappa,\eta}(t) \psi \right\rangle_{L^{2}(\Lambda_{L}^{d})} =- \left\langle \varphi, \frac{\partial\mathcal{G}_{L,\kappa,\eta}}{\partial t}(t) \psi \right\rangle_{L^{2}(\Lambda_{L}^{d})} + \left\langle \varphi, \mathcal{W}_{L,\kappa,\eta}(t) \psi \right\rangle_{L^{2}(\Lambda_{L}^{d})},
\end{equation*}
where $\frac{\partial \mathcal{G}_{L,\kappa,\eta}}{\partial t}(t)$ denotes the operator in the r.h.s. of \eqref{cfdg}. Note that the second equality is obtained by performing some integration by parts, and afterwards by using the following identities:
\begin{equation*}
H_{L,\kappa} \hat{f}_{L,\eta} = H_{\infty,\kappa} \hat{f}_{L,\eta} = \left[H_{\infty,\kappa}, \hat{f}_{L,\eta}\right] + \hat{f}_{L,\eta}H_{\infty,\kappa},
\end{equation*}
as well as (recall that the potential $V_{L}$ in \eqref{ptih} satisfies $V_{L}(\bold{x}) =\vert \bold{x}\vert^{2}$ on $\mathrm{Supp}(\hat{\hat{f}}_{L,\eta})$):
\begin{equation*}
H_{L,\kappa} \hat{\hat{f}}_{L,\eta} = h_{L,\kappa,\eta} \hat{\hat{f}}_{L,\eta} = \left[h_{L,\kappa,\eta}, \hat{\hat{f}}_{L,\eta}\right] + \hat{\hat{f}}_{L,\eta} h_{L,\kappa,\eta}.
\end{equation*}
Since $l_{L}(\varphi,\cdot\,)$ is a bounded linear functional $\forall\varphi \in D_{0}$, then
$\mathcal{C}_{0}^{\infty}(\Lambda_{L}^{d}) \owns \psi \mapsto l_{L}(\varphi,\psi)$ can be extended in a linear and bounded functional on $L^{2}(\Lambda_{L}^{d})$ by the B.L.T. theorem. As well, since $l_{L}(\cdot\,,\psi)$ is a bounded linear functional $\forall \psi \in L^{2}(\Lambda_{L}^{d})$ then $\varphi \mapsto l_{L}(\varphi,\psi)$ can be extended on the self-adjointness domain $D(H_{L,\kappa})$. This means that $\forall t>0$, $\mathrm{Ran}(\mathcal{G}_{L,\kappa,\eta}(t)) \subset D(H_{L,\kappa})$. Hence:
\begin{equation*}
\left\langle \varphi, H_{L,\kappa} \mathcal{G}_{L,\kappa,\eta}(t) \psi \right\rangle_{L^{2}(\Lambda_{L}^{d})} = - \left\langle \varphi, \frac{\partial\mathcal{G}_{L,\kappa,\eta}}{\partial t}(t) \psi \right\rangle_{L^{2}(\Lambda_{L}^{d})} + \left\langle \varphi, \mathcal{W}_{L,\kappa,\eta}(t) \psi \right\rangle_{L^{2}(\Lambda_{L}^{d})}.
\end{equation*}
Finally, from \eqref{normcalR} $\Vert \mathcal{W}_{L,\kappa,\eta}(t)\Vert \leq C t^{-\frac{1}{2}}$ $\forall 0<t\leq 1$. Hence $\Vert\mathcal{W}_{L,\kappa,\eta}(t)\Vert$ is integrable in $t \sim 0$.\qed

\subsubsection{Proof of intermediary results.}
\label{intermB}

\noindent \textit{Proof of Lemma \ref{lem7}.} \eqref{kerhL} follows from the Feynman-Kac formula in
\cite[Thm. X.68]{RS2} together with \eqref{fondineq} and the definition of the $L_{0}$ in \eqref{defR0} leading to $(L-L^{\eta})^{2} \geq L^{2}/2$ $\forall L \in [L_{0}(\eta),\infty)$. Next, let us turn to the proof of \eqref{majgradh}-\eqref{majlaph0}. To do that, let us introduce an operator of reference. $\forall d \in \{1,2,3\}$, $\forall 0<\eta<1$, $\forall L \in (0,\infty)$ and $\forall \kappa>0$, define on $\mathcal{C}_{0}^{\infty}(\Lambda_{L}^{d})$:
\begin{equation}
\label{H2}
\tilde{h}_{L,\kappa,\eta} := \frac{1}{2}\left(-i\nabla_{\bold{x}}\right)^{2} + \frac{\kappa^{2}}{2} \tilde{V}_{L,\eta}(\bold{x}),\quad \tilde{V}_{L,\eta}(\bold{x}):= \frac{1}{4} \left(L - L^{\eta}\right)^{2}.
\end{equation}
By standard arguments, \eqref{H2} extends to a family of self-adjoint and semi-bounded operators $\forall L \in (0,\infty)$, denoted again by
$\tilde{h}_{L,\kappa,\eta}$. For any $0<\eta<1$, $L \in (0,\infty)$ and $\kappa>0$, let $\{\tilde{g}_{L,\kappa,\eta}(t):= \mathrm{e}^{- t \tilde{h}_{L,\kappa,\eta}}:L^{2}(\Lambda_{L}^{d}) \rightarrow L^{2}(\Lambda_{L}^{d})\}_{t \geq 0}$ be the strongly-continuous one-parameter semigroup generated by $\tilde{h}_{L,\kappa,\eta}$. Its integral kernel denoted by $\tilde{g}_{L,\kappa,\eta}^{(d)}$ is explicitly known and reads as:
\begin{equation}
\label{exprtildg}
\forall (\bold{x},\bold{y}) \in \Lambda_{L}^{2d},\,\forall t>0,\quad \tilde{g}_{L,\kappa,\eta}^{(d)}(\bold{x},\bold{y};t) = \mathrm{e}^{-\frac{\kappa^{2}}{8} \left(L - L^{\eta}\right)^{2}t} G_{L,0}^{(d)}(\bold{x},\bold{y};t),
\end{equation}
where $G_{L,0}^{(d)}$ is the kernel of the semigroup generated by the Dirichlet Laplacian in $L^{2}(\Lambda_{L}^{d})$, see \eqref{kerdirid}. Note that \eqref{exprtildg} directly  follows from the Feynman-Kac formula. The starting-point of the proof of \eqref{majgradh}-\eqref{majlaph0} is a Duhamel-like formula to express the semigroup $\{g_{L,\kappa,\eta}(t)\}_{t>0}$ in terms of $\{\tilde{g}_{L,\kappa,\eta}(t)\}_{t>0}$ whose integral kernel is given in \eqref{exprtildg}. Let $0<\eta<1$, $L \in [L_{0}(\eta),\infty)$ (see \eqref{defR0}) and $\kappa>0$ be fixed. In the bounded operators sense on $L^{2}(\Lambda_{L}^{d})$, it takes place:
\begin{equation}
\label{du}
\forall t>0,\quad g_{L,\kappa,\eta}(t) = \tilde{g}_{L,\kappa,\eta}(t) - \int_{0}^{t}  \tilde{g}_{L,\kappa,\eta}(s)\left\{h_{L,\kappa,\eta} - \tilde{h}_{L,\kappa,\eta}\right\}  g_{L,\kappa,\eta}(t-s)\,\mathrm{d}s,
\end{equation}
where we used the self-adjointness of the semigroups $\{g_{L,\kappa,\eta}(t)\}_{t\geq 0}$, $\{\tilde{g}_{L,\kappa,\eta}(t)\}_{t\geq 0}$.\\
\textit{Proof of \eqref{majgradh}}. From \eqref{du}, it follows in the kernels sense:
\begin{gather}
\label{kerdum}
\forall (\bold{x},\bold{y}) \in \Lambda_{L}^{2d},\,\forall t>0,\quad \nabla_{\bold{x}} g_{L,\kappa,\eta}^{(d)}(\bold{x},\bold{y};t) = \nabla_{\bold{x}} \tilde{g}_{L,\kappa,\eta}^{(d)}(\bold{x},\bold{y};t) - \frac{1}{2} \mathfrak{q}_{L,\kappa,\eta}^{(d)}(\bold{x},\bold{y};t),\\
%\label{calTka}
\mathfrak{q}_{L,\kappa,\eta}^{(d)}(\bold{x},\bold{y};t) := \kappa^{2} \int_{0}^{t} \int_{\Lambda_{L}^{d}}  \nabla_{\bold{x}} \tilde{g}_{L,\kappa,\eta}^{(d)}(\bold{x},\bold{z};s)\left\{V_{L,\eta}(\bold{z}) - \tilde{V}_{L,\eta}(\bold{z})\right\} g_{L,\kappa,\eta}^{(d)}(\bold{z},\bold{y};t-s)\, \mathrm{d}\bold{z}\,\mathrm{d}s.\nonumber
\end{gather}
Recall that $V_{L,\eta}(\bold{z}) - \tilde{V}_{L,\eta}(\bold{z}) = \vert \bold{z}\vert^{2}-\frac{1}{4}(L-L^{\eta})^{2}$ on $\mathrm{Supp}(\hat{\hat{f}}_{L,\eta})$, $0$ otherwise. Let us estimate the first kernel in the r.h.s. of \eqref{kerdum}. From \eqref{derkL} and \eqref{exprtildg}, there exists a constant $C_{d}>0$ s.t.
\begin{equation}
\label{dertildd}
\forall(\bold{x},\bold{y}) \in \Lambda_{L}^{2d},\,\forall t>0,\quad \left\vert \nabla_{\bold{x}} \tilde{g}_{L,\kappa,\eta}^{(d)}(\bold{x},\bold{y};t) \right\vert \leq C_{d} \frac{(1+t)^{d}}{\sqrt{t}} \mathrm{e}^{-  \frac{\kappa^{2}}{4} \frac{L^{2}}{4} t} G_{\infty,0}^{(d)}(\bold{x},\bold{y};2t).
\end{equation}
Subsequently, from \eqref{dertildd} along with \eqref{kerhL}, there exists another constant $C_{d}>0$ s.t.
\begin{equation*}
\forall(\bold{x},\bold{y}) \in \Lambda_{L}^{2d},\,\forall t>0,\quad \left\vert \mathfrak{q}_{L,\kappa,\eta}^{(d)}(\bold{x},\bold{y};t)\right\vert \leq C_{d} \kappa^{2} L^{2} (1+t)^{d} \mathrm{e}^{-\frac{\kappa^{2}}{4} \frac{L^{2}}{4} t} G_{\infty,0}^{(d)}(\bold{x},\bold{y};2t) \int_{0}^{t}\frac{\mathrm{d}s}{\sqrt{s}},
\end{equation*}
where we used in the last inequality \eqref{propsemi}. Finally use \eqref{redexp} to get rid of the $L^{2}$ what leads to:
\begin{equation}
\label{Tcalfi}
\left\vert \mathfrak{q}_{L,\kappa,\eta}^{(d)}(\bold{x},\bold{y};t)\right\vert \leq C_{d} \frac{(1+t)^{d}}{\sqrt{t}} \mathrm{e}^{-\frac{\kappa^{2}}{8} \frac{L^{2}}{4} t} G_{\infty,0}^{(d)}(\bold{x},\bold{y};2t),
\end{equation}
for another constant $C_{d}>0$. It remains to gather \eqref{dertildd} and \eqref{Tcalfi} together.\\
\textit{Proof of \eqref{majlaph0}}. Starting from the below identity which holds in the bounded operators sense:
\begin{equation*}
\forall t>0,\quad \left[\left(-i\nabla\right),\tilde{g}_{L,\kappa,\eta}(t)\right] = - \int_{0}^{t}  \tilde{g}_{L,\kappa,\eta}(t-s)\left[\left(-i\nabla\right),\tilde{h}_{L,\kappa,\eta}\right] \tilde{g}_{L,\kappa,\eta}(s)\,\mathrm{d}s,
\end{equation*}
then by using that $[(-i\nabla),\tilde{h}_{L,\kappa,\eta}] =0$, one gets from \eqref{du} on $L^{2}(\Lambda_{L}^{d})$:
\begin{equation*}
%\label{du2}
\forall t>0,\quad \left(-i\nabla\right) g_{L,\kappa,\eta}(t) = \left(-i\nabla\right) \tilde{g}_{L,\kappa,\eta}(t) - \int_{0}^{t}  \tilde{g}_{L,\kappa,\eta}(s)\left(-i \nabla\right)\left\{h_{L,\kappa,\eta} - \tilde{h}_{L,\kappa,\eta}\right\} g_{L,\kappa,\eta}(t-s)\,\mathrm{d}s.
\end{equation*}
It follows in the kernels sense:
\begin{gather}
%\label{kerdum2}
\forall(\bold{x},\bold{y}) \in \Lambda_{L}^{2d},\forall t>0,\quad \Delta_{\bold{x}} g_{L,\kappa,\eta}^{(d)}(\bold{x},\bold{y};t) = \Delta_{\bold{x}}\tilde{g}_{L,\kappa,\eta}^{(d)}(\bold{x},\bold{y};t) - \frac{1}{2} \sum_{l=1}^{2} \mathfrak{u}_{L,\kappa,\eta}^{(d), l}(\bold{x},\bold{y};t),\nonumber\\
%\label{olU1}
\mathfrak{u}_{L,\kappa,\eta}^{(d),1}(\bold{x},\bold{y};t) :=  \kappa^{2} \int_{0}^{t}  \int_{\Lambda_{L}^{d}}  \nabla_{\bold{x}} \tilde{g}_{L,\kappa,\eta}^{(d)}(\bold{x},\bold{z};s) \left(\nabla_{\bold{z}} V_{L,\eta}\right)(\bold{z}) g_{L,\kappa,\eta}^{(d)}(\bold{z},\bold{y};t-s)\,\mathrm{d}\bold{z}\,\mathrm{d}s,\nonumber\\
%\label{olU2}
\mathfrak{u}_{L,\kappa,\eta}^{(d),2}(\bold{x},\bold{y};t) := \kappa^{2} \int_{0}^{t}  \int_{\Lambda_{L}^{d}} \nabla_{\bold{x}} \tilde{g}_{L,\kappa,\eta}^{(d)}(\bold{x},\bold{z};s)\left\{\tilde{V}_{L,\eta}(\bold{z}) - V_{L,\eta}(\bold{z})\right\} \nabla_{\bold{z}} g_{L,\kappa,\eta}^{(d)}(\bold{z},\bold{y};t-s)\, \mathrm{d}\bold{z}\,\mathrm{d}s.\nonumber
\end{gather}
From \eqref{lapkL} and \eqref{exprtildg}, there exists a constant $C_{d}>0$ s.t.
\begin{equation*}
\label{dertildd2}
\forall(\bold{x},\bold{y}) \in \Lambda_{L}^{2d},\forall t>0,\quad \left\vert \Delta_{\bold{x}} \tilde{g}_{L,\kappa,\eta}^{(d)}(\bold{x},\bold{y};t)\right\vert \leq C_{d} \frac{(1+t)^{d}}{t} \mathrm{e}^{-  \frac{\kappa^{2}}{4} \frac{L^{2}}{4} t} G_{\infty,0}^{(d)}(\bold{x},\bold{y};2t).
\end{equation*}
Subsequently, by mimicking the method leading to \eqref{Tcalfi}, there exists another $C_{d}>0$ s.t.
\begin{equation*}
\forall(\bold{x},\bold{y}) \in \Lambda_{L}^{2d},\,\forall t>0,\quad \left\vert \mathfrak{u}_{L,\kappa,\eta}^{(d),1}(\bold{x},\bold{y};t) \right\vert \leq C_{d} \frac{(1+t)^{d}}{\sqrt{t}} \mathrm{e}^{-\frac{\kappa^{2}}{8} \frac{L^{2}}{4} t} G_{\infty,0}^{(d)}(\bold{x},\bold{y};2t).
\end{equation*}
By the same method again but replacing the estimate \eqref{kerhL} with \eqref{majgradh}, we have:
\begin{equation*}
\left\vert \mathfrak{u}_{L,\kappa,\eta}^{(d),2}(\bold{x},\bold{y};t) \right\vert \leq C_{d} \frac{(1+t)^{2d}}{t} \mathrm{e}^{-\frac{\kappa^{2}}{16} \frac{L^{2}}{4} t} G_{\infty,0}^{(d)}(\bold{x},\bold{y};2t) \int_{0}^{t}\frac{\mathrm{d}s}{\sqrt{s}\sqrt{t-s}}.
\end{equation*}
Gathering the three above estimates together, then the proof of \eqref{majlaph0} is over.\qed\\

\noindent \textit{Proof of Lemma \ref{toues}.} Let $d \in \{1,2,3\}$, $0<\eta<1$, $L \in [L_{0}(\eta),\infty)$ and $\kappa>0$ kept fixed.\\
$\mathrm{(i)}$. From \eqref{calGad} written in the kernels sense, then $\forall(\bold{x},\bold{y}) \in \Lambda_{L}^{2d}$ and $\forall t>0$:
\begin{equation*}
\begin{split}
\nabla_{\bold{x}}\left(\mathcal{G}_{L,\kappa,\eta}^{*}\right)^{(d)}(\bold{x},\bold{y};t) &= \left(\nabla f_{L,\eta}\right)(\bold{x}) G_{\infty,\kappa}^{(d)}(\bold{x},\bold{y};t) \hat{f}_{L,\eta}(\bold{y}) + f_{L,\eta}(\bold{x}) \nabla_{\bold{x}}G_{\infty,\kappa}^{(d)}(\bold{x},\bold{y};t) \hat{f}_{L,\eta}(\bold{y}) + \\
&+ \left(\nabla f_{L,\eta}^{c}\right)(\bold{x}) g_{L,\kappa,\eta}^{(d)}(\bold{x},\bold{y};t) \hat{\hat{f}}_{L,\eta}(\bold{y}) + f_{L,\eta}^{c}(\bold{x}) \nabla_{\bold{x}} g_{L,\kappa,\eta}^{(d)}(\bold{x},\bold{y};t) \hat{\hat{f}}_{L,\eta}(\bold{y}).
\end{split}
\end{equation*}
\eqref{kerud} is an upper bound for the two first kernels in the above r.h.s. obtained from \eqref{Mehler}-\eqref{multd} and \eqref{dermelh}. \eqref{kervd} is an upper bound for the two last kernels obtained from \eqref{kerhL} and \eqref{majgradh}.\\
$\mathrm{(ii)}$. From \eqref{calRad} written in the kernels sense, then $\forall(\bold{x},\bold{y}) \in \Lambda_{L}^{2d}$ and $\forall t>0$:
\begin{gather}
\nabla_{\bold{x}}\left(\mathcal{W}_{L,\kappa,\eta}^{*}\right)^{(d)}(\bold{x},\bold{y};t) = \sum_{m=1}^{4} Q_{L,\kappa,\eta}^{(d),m}(\bold{x},\bold{y};t), \quad \textrm{with:} \nonumber\\
\label{Q1}
\begin{split}
&Q_{L,\kappa,\eta}^{(d),1}(\bold{x},\bold{y};t):=  - i\left(\nabla f_{L,\eta}\right)(\bold{x}) \left[\left(-i \nabla\right),G_{\infty,\kappa}(t)\right](\bold{x},\bold{y}) \left(\nabla \hat{f}_{L,\eta}\right)(\bold{y}) + \\
&+ \left(\nabla f_{L,\eta}\right)(\bold{x}) \nabla_{\bold{x}} G_{\infty,\kappa}^{(d)}(\bold{x},\bold{y};t)\left(\nabla \hat{f}_{L,\eta}\right)(\bold{y}) - \left(\nabla f_{L,\eta}\right)(\bold{x}) G_{\infty,\kappa}^{(d)}(\bold{x},\bold{y};t) \frac{1}{2} \left(\Delta \hat{f}_{L,\eta}\right)(\bold{y}),
\end{split}\\
\label{Q2}
\begin{split}
&Q_{L,\kappa,\eta}^{(d),2}(\bold{x},\bold{y};t):=  - i f_{L,\eta}(\bold{x}) \nabla_{\bold{x}} \left[\left(-i \nabla\right),G_{\infty,\kappa}(t)\right](\bold{x},\bold{y}) \left(\nabla \hat{f}_{L,\eta}\right)(\bold{y}) + \\
&+ f_{L,\eta}(\bold{x}) \Delta_{\bold{x}} G_{\infty,\kappa}^{(d)}(\bold{x},\bold{y};t) \left(\nabla \hat{f}_{L,\eta}\right)(\bold{y}) - f_{L,\eta}(\bold{x}) \nabla_{\bold{x}} G_{\infty,\kappa}^{(d)}(\bold{x},\bold{y};t) \frac{1}{2} \left(\Delta \hat{f}_{L,\eta}\right)(\bold{y}),
\end{split}\\
\label{Q3}
\begin{split}
&Q_{L,\kappa,\eta}^{(d),3}(\bold{x},\bold{y};t):= - i\left(\nabla f_{L,\eta}^{c}\right)(\bold{x}) \left[\left(-i \nabla\right),g_{L,\kappa,\eta}(t)\right](\bold{x},\bold{y}) \left(\nabla \hat{\hat{f}}_{L,\eta}\right)(\bold{y})+ \\
&+ \left(\nabla f_{L,\eta}^{c}\right)(\bold{x}) \nabla_{\bold{x}} g_{L,\kappa,\eta}^{(d)}(\bold{x},\bold{y};t)\left(\nabla \hat{\hat{f}}_{L,\eta}\right)(\bold{y}) - \left(\nabla f_{L,\eta}^{c}\right)(\bold{x}) g_{L,\kappa,\eta}^{(d)}(\bold{x},\bold{y};t) \frac{1}{2} \left(\Delta \hat{\hat{f}}_{L,\eta}\right)(\bold{y}),
\end{split}\\
\label{Q4}
\begin{split}
&Q_{L,\kappa,\eta}^{(d),4}(\bold{x},\bold{y};t):=  - i f_{L,\eta}^{c}(\bold{x}) \nabla_{\bold{x}} \left[\left(-i \nabla\right),g_{L,\kappa,\eta}(t)\right](\bold{x},\bold{y}) \left(\nabla \hat{\hat{f}}_{L,\eta}\right)(\bold{y}) + \\
&+ f_{L,\eta}^{c}(\bold{x}) \Delta_{\bold{x}} g_{L,\kappa,\eta}^{(d)}(\bold{x},\bold{y};t) \left(\nabla \hat{\hat{f}}_{L,\eta}\right)(\bold{y}) - f_{L,\eta}^{c}(\bold{x}) \nabla_{\bold{x}} g_{L,\kappa,\eta}^{(d)}(\bold{x},\bold{y};t) \frac{1}{2} \left(\Delta \hat{\hat{f}}_{L,\eta}\right)(\bold{y}).
\end{split}
\end{gather}
Let us first estimate \eqref{Q1}. In view of \eqref{dermelh}, \eqref{kerud} is clearly an upper bound for the two last terms in the r.h.s. of \eqref{Q1}. For the first term in \eqref{Q1}, we use \eqref{commut1} in the kernels sense. Then, there exists a constant $C_{d}>0$ s.t. $\forall(\bold{x},\bold{y}) \in \Lambda_{L}^{2d}$ and $\forall t>0$:
\begin{equation}
\label{supimp1}
\frac{\kappa^{2}}{2} \int_{0}^{t}  \int_{\mathbb{R}^{d}}  G_{\infty,\kappa}^{(d)}(\bold{x},\bold{z};t-s) \left(\nabla_{\bold{z}} \vert \bold{z}\vert^{2}\right) G_{\infty,\kappa}^{(d)}(\bold{z},\bold{y};s) \,\mathrm{d}\bold{z}\,\mathrm{d}s \leq C_{d} (\kappa^{2} L + \kappa^{\frac{3}{2}}) t G_{\infty,\kappa}^{(d)}(\bold{x},\bold{y};t,2).
\end{equation}
Here, we used that $\vert \bold{z}\vert \leq \vert \bold{x}-\bold{z}\vert + \vert \bold{x}\vert$, then \eqref{redexp} to get rid of the factor $\vert \bold{x}-\bold{z}\vert$ and \eqref{propsemi2}, and finally the lower bound $\coth(\alpha) \geq 1$ $\forall \alpha \geq 0$. Next, we use the property \eqref{disjsup2} to get rid of the powers of $\kappa$ in \eqref{supimp1} via \eqref{redexp}. Hence, there exist two other constants $c, C_{d}>0$ s.t. on $\Lambda_{L}^{2d}$:
\begin{multline*}
\left\vert i\left(\nabla f_{L,\eta}\right)(\bold{x}) \left[\left(-i \nabla\right),G_{\infty,\kappa}(t)\right](\bold{x},\bold{y}) \left(\nabla \hat{f}_{L,\eta}\right)(\bold{y}) \right\vert \\
\leq C_{d}L^{-3\eta} (1+ L^{1-\eta}) t \mathrm{e}^{-c \kappa L^{2\eta} \coth\left(\frac{\kappa}{2}t\right)} G_{\infty,\kappa}^{(d)}(\bold{x},\bold{y};t,4).
\end{multline*}
Restricting to $1>\eta>\frac{1}{4}$, and gathering the above estimate with \eqref{kerud} together, then there exist two other constants $c,C_{d}>0$ s.t. $\forall L \in [L_{0}(\eta),\infty)$, $\forall(\bold{x},\bold{y}) \in \Lambda_{L}^{2d}$ and $\forall t>0$:
\begin{equation}
\label{majQ1}
\left\vert Q_{L,\kappa,\eta}^{(d),1}(\bold{x},\bold{y};t)\right\vert \leq C_{d}\left(1+\sqrt{\kappa}\right) \sqrt{\coth\left(\frac{\kappa}{2} t\right)} (1+t) \mathrm{e}^{-c \kappa L^{2\eta} \coth\left(\frac{\kappa}{2}t\right)} G_{\infty,\kappa}^{(d)}(\bold{x},\bold{y};t,4).
\end{equation}
Subsequently, let us turn to \eqref{Q2}. From \eqref{dermelh} and \eqref{lapmelh} together with the property \eqref{disjsup1}, then there exist two other constants $c , C_{d}>0$ s.t. $\forall (\bold{x},\bold{y}) \in \Lambda_{L}^{2d}$ and $\forall t>0$:
\begin{multline*}
\left\vert f_{L,\eta}(\bold{x}) \nabla_{\bold{x}} G_{\infty,\kappa}^{(d)}(\bold{x},\bold{y};t) \frac{1}{2} \left(\Delta \hat{f}_{L,\eta}\right)(\bold{y})  + f_{L,\eta}(\bold{x}) \Delta_{\bold{x}} G_{\infty,\kappa}^{(d)}(\bold{x},\bold{y};t) \left(\nabla \hat{f}_{L,\eta}\right)(\bold{y})\right\vert \\
\leq C_{d} \sqrt{\kappa} \sqrt{\coth\left(\frac{\kappa}{2} t\right)} \mathrm{e}^{- c \kappa L^{2\eta} \coth\left(\frac{\kappa}{2} t\right)} G_{\infty,\kappa}^{(d)}(\bold{x},\bold{y};t,4).
\end{multline*}
Here, the property \eqref{disjsup1} is essential to remove a $\sqrt{\coth(\kappa t)}$ in the numerator of \eqref{lapmelh}. For the first term of \eqref{Q2}, we use the same reasoning than the one leading to \eqref{supimp1} combined with the property \eqref{disjsup1}. Thus, there exist two other constants $c,C_{d} >0$ s.t. $\forall (\bold{x},\bold{y}) \in \Lambda_{L}^{2d}$ and $\forall t>0$:
\begin{multline*}
\frac{\kappa^{2}}{2} \left\vert f_{L,\eta}(\bold{x}) \int_{0}^{t}  \int_{\mathbb{R}^{d}}  \nabla_{\bold{x}} G_{\infty,\kappa}^{(d)}(\bold{x},\bold{z};t-s) \left(\nabla_{\bold{z}} \vert \bold{z}\vert^{2}\right) G_{\infty,\kappa}^{(d)}(\bold{z},\bold{y};s) \left(\nabla \hat{f}_{L,\eta}\right)(\bold{y})\,\mathrm{d}\bold{z}\,\mathrm{d}s \right\vert \\
\leq C_{d} \left(1+\sqrt{\kappa}\right) L^{-4\eta}(1+L) (1+t) \mathrm{e}^{- c \kappa L^{2\eta} \coth\left(\frac{\kappa}{2} t\right)} G_{\infty,\kappa}^{(d)}(\bold{x},\bold{y};t,8).
\end{multline*}
Restricting to $1>\eta>\frac{1}{4}$, and gathering the above estimates together, then there exist two other constants $c,C_{d}>0$ s.t. $\forall L \in [L_{0}(\eta),\infty)$, $\forall (\bold{x},\bold{y}) \in \Lambda_{L}^{2d}$ and $\forall t>0$:
\begin{equation}
\label{majQ2}
\left\vert Q_{L,\kappa,\eta}^{(d),2}(\bold{x},\bold{y};t)\right\vert \leq C_{d} \left(1+\sqrt{\kappa}\right) \sqrt{\coth\left(\frac{\kappa}{2} t\right)} (1+t) \mathrm{e}^{-c \kappa L^{2\eta} \coth\left(\frac{\kappa}{2}t\right)} G_{\infty,\kappa}^{(d)}(\bold{x},\bold{y};t,8).
\end{equation}
The estimate in \eqref{kerwd1} follows by adding \eqref{majQ1} and \eqref{majQ2} together.\\
We continue with \eqref{Q3}. \eqref{kervd} is an upper bound for the last two terms in the r.h.s. of \eqref{Q3}. From \eqref{commut2} in the kernels sense, then by \eqref{redexp} there exist two other constants $c,C_{d}>0$ s.t.
\begin{equation*}
\frac{\kappa^{2}}{2} \int_{0}^{t}  \int_{\Lambda_{L}^{d}}  g_{L,\kappa,\eta}^{(d)}(\bold{x},\bold{z};t-s) \left(\nabla_{\bold{z}} V_{L,\eta}\right)(\bold{z}) g_{L,\kappa,\eta}^{(d)}(\bold{z},\bold{y};s)\,\mathrm{d}\bold{z}\,\mathrm{d}s
\leq C_{d} \mathrm{e}^{-\frac{\kappa^{2}}{8}\frac{L^{2}}{4} t} \mathrm{e}^{-c \frac{L^{2\eta}}{t}} G_{\infty,0}^{(d)}(\bold{x},\bold{y};2t).
\end{equation*}
We conclude that there exist two other constants $c,C_{d}>0$ s.t. $\forall(\bold{x},\bold{y}) \in \Lambda_{L}^{2d}$:
\begin{equation}
\label{majQ3}
\forall t>0,\quad \left\vert Q_{L,\kappa,\eta}^{(d),3}(\bold{x},\bold{y};t)\right\vert \leq C_{d} \frac{(1+t)^{d}}{\sqrt{t}} \mathrm{e}^{-\frac{\kappa^{2}}{8} \frac{L^{2}}{4} t} \mathrm{e}^{-c \frac{L^{2\eta}}{t}} G_{\infty,0}^{(d)}(\bold{x},\bold{y};4t).
\end{equation}
Turning to \eqref{Q4}, one can prove that there exist two other constants $c,C_{d} >0$ s.t. on $\Lambda_{L}^{2d}$:
\begin{equation}
\label{majQ4}
\forall t>0,\quad \left\vert Q_{L,\kappa,\eta}^{(d),4}(\bold{x},\bold{y};t)\right\vert \leq C_{d} \frac{(1+t)^{2d}}{\sqrt{t}} \mathrm{e}^{-\frac{\kappa^{2}}{16} \frac{L^{2}}{4} t} \mathrm{e}^{-c \frac{L^{2\eta}}{t}} G_{\infty,0}^{(d)}(\bold{x},\bold{y};4t).
\end{equation}
Here, we used \eqref{disjsup2} combined with \eqref{redexp} to get rid of a $\sqrt{t}$ in the denominator of \eqref{majlaph0}. The estimate in \eqref{kerwd2} follows by adding \eqref{majQ3} and \eqref{majQ4} together, then by taking into account the support of the cutoff functions introduced in Sec. \ref{GPTo}. \qed

\section{Acknowledgments.}

B.S. was partially supported by the Lundbeck Foundation, and the European Research Council under the European Community's Seventh Framework Program (FP7/2007--2013)/ERC grant agreement 202859. A part of this work was done while the second author was visiting DIAS-STP (Dublin), B.S. is grateful for invitation and financial support. Both authors warmly thank Horia Cornean, Tony Dorlas and Cyril Levy for helpful and stimulating discussions.

\section{Appendix--The semigroup: A review of some properties.}
\label{Appen1}

Here, we collect the technical results we use throughout the paper involving the semigroup generated by $H_{L,\kappa}$, see Sec. \ref{defop}.\\
\indent For simplicity's sake, we hereafter use the notation $\Lambda_{\infty}:= \mathbb{R}$. From \eqref{HL}-\eqref{Hinfini}, recall that:
\begin{equation}
\label{HLinf}
\forall L \in (0,\infty],\quad H_{L,\kappa} = \frac{1}{2}\left(-i \nabla_{\bold{x}}\right)^{2} + \frac{1}{2} \kappa^{2} \vert \bold{x}\vert^{2}\quad \textrm{in $L^{2}(\Lambda_{L}^{d})$,\, $d\in\{1,2,3\}$.}
\end{equation}
Below, we allow the value $\kappa=0$; in that case, $H_{L,0}$ with $L<\infty$ is nothing but the Dirichlet Laplacian and $H_{\infty,0}$ the free Laplacian on the whole space whose self-adjointness domain is $W^{2,2}(\Lambda_{\infty}^{d})$.\\
\indent Recall some properties on the strongly continuous one-parameter semigroup $\{G_{L,\kappa}(t) := \mathrm{e}^{- t H_{L,\kappa}} : L^{2}(\Lambda_{L}^{d}) \rightarrow L^{2}(\Lambda_{L}^{d})\}_{t \geq 0}$ generated by $H_{L,\kappa}$ in \eqref{HLinf}. We refer to \cite[Sec. X.8]{RS2} and \cite[Sec. B]{Si}. As already mentioned, $\forall \kappa\geq0$ and $\forall L \in (0,\infty]$ it is a self-adjoint and positive operator on $L^{2}(\Lambda_{L}^{d})$ by the spectral theorem and the functional
calculus. Moreover, since $\{G_{L,\kappa}(t)\}_{t>0}$ is bounded from $L^{2}(\Lambda_{L}^{d})$ to $L^{\infty}(\Lambda_{L}^{d})$, then it is an integral operator by the Dunford-Gelfand-Pettis theorem.\\
\indent Let us turn to the integral kernel of $\{G_{L,\kappa}(t)\}_{t>0}$ we denote by $G_{L,\kappa}^{(d)}$. $\forall \kappa\geq0$ and $\forall L \in (0,\infty]$, $G_{L,\kappa}^{(d)}$ is jointly continuous in $(\bold{x},\bold{y},t) \in \overline{\Lambda_{L}^{d}}\times \overline{\Lambda_{L}^{d}} \times (0,\infty)$ and vanishes if $\bold{x} \in \partial \Lambda_{L}^{d}$ or $\bold{y} \in \partial \Lambda_{L}^{d}$. When $L=\infty$, it is explicitly known. If $\kappa=0$, it is the so-called heat kernel reading for $d=1$ as:
\begin{equation}
\label{heatk}
\forall(x,y)\in \Lambda_{\infty}^{2},\,\forall t>0,\quad G_{\infty,0}^{(d=1)}(x,y;t) := \frac{1}{\sqrt{2\pi}} \frac{\mathrm{e}^{-\frac{(x-y)^{2}}{2t}}}{\sqrt{t}}.
\end{equation}
If $\kappa>0$, the one-dimensional kernel is given by the so-called Mehler formula, see
\cite[pp. 176]{Ku}:
\begin{equation}
\label{Mehler}
\forall(x,y)\in \Lambda_{\infty}^{2},\,\forall t>0,\quad G_{\infty,\kappa}^{(d=1)}(x,y;t)  = \sqrt{\frac{\kappa}{2 \pi \sinh(\kappa t)}} \mathrm{e}^{-\frac{\kappa}{4} \left[(x+y)^{2} \tanh\left(\frac{\kappa}{2} t\right) + (x-y)^{2} \coth\left(\frac{\kappa}{2} t\right)\right]}.
\end{equation}
Note that the multidimensional kernel (i.e. $d=2,3$) is directly obtained from \eqref{heatk} or \eqref{Mehler} by:
\begin{equation}
\label{multd}
\forall \kappa\geq 0,\quad G_{\infty,\kappa}^{(d)}(\bold{x},\bold{y};t) := \prod_{j=1}^{d} G_{\infty,\kappa}^{(d=1)}(x_{j},y_{j};t), \quad \bold{x}:=\{x_{j}\}_{j=1}^{d},\, \bold{y}:=\{y_{j}\}_{j=1}^{d}.
\end{equation}
When restricting to $L \in (0,\infty)$, the mapping $L \mapsto G_{L,\kappa}^{(d)}(\bold{x},\bold{y};t)$ is positive and monotone increasing. This leads to the following pointwise inequality which holds $\forall\kappa\geq 0$ and $\forall L\in (0,\infty)$:
\begin{equation}
\label{fondineq}
\forall (\bold{x},\bold{y},t) \in \overline{\Lambda_{L}^{d}}\times \overline{\Lambda_{L}^{d}}\times(0,\infty),\quad G_{L,\kappa}^{(d)}(\bold{x},\bold{y};t) \leq \sup_{L>0} G_{L,\kappa}^{(d)}(\bold{x},\bold{y};t) = G_{\infty,\kappa}^{(d)}(\bold{x},\bold{y};t).
\end{equation}
We mention that, if $\kappa=0$, the kernel $G_{L,0}^{(d)}$ is explicitly known and reads as, see
\cite[Eq. (4.13)]{C1}:
\begin{gather}
\label{kerdirid}
\forall(\bold{x},\bold{y}) \in \Lambda_{L}^{2d},\,\forall t>0,\quad G_{L,0}^{(d)}(\bold{x},\bold{y};t) = \prod_{j=1}^{d} G_{L,0}^{(d=1)}(x_{j},y_{j};t),\\
%\label{kerdiri}
G_{L,0}^{(d=1)}(x,y;t) := \frac{1}{\sqrt{2t}} \sum_{m \in \mathbb{Z}}\left\{\exp\left(-\frac{(x-y+2mL)^2}{2t}\right) - \exp\left(-\frac{(x+y-2mL-L)^2}{2t}\right)\right\}.\nonumber
\end{gather}

In view of \eqref{Mehler}-\eqref{multd}, let us introduce $\forall \kappa>0$ the new notation:
\begin{equation}
\label{ginfty}
\forall \gamma >0,\quad G_{\infty,\kappa}^{(d)}(\bold{x},\bold{y};t,\gamma) := \left(\frac{\kappa}{2\pi \sinh(\kappa t)}\right)^{\frac{d}{2}} \prod_{j=1}^{d} \mathrm{e}^{-\frac{\kappa}{4 \gamma}\left[(x_{j}+y_{j})^{2} \tanh\left(\frac{\kappa}{2}t\right) + (x_{j}-y_{j})^{2} \coth\left(\frac{\kappa}{2}t\right)\right]},
\end{equation}
with the convention: $G_{\infty,\kappa}^{(d)}(\cdot\,,\cdot\,;t)= G_{\infty,\kappa}^{(d)}(\cdot\,,\cdot\,;t,1)$. Here are collected all the needed estimates:

\begin{lema}
\label{plentes}
$\forall d \in \{1,2,3\}$, there exists a constant $C_{d}>0$ s.t. \\
$\mathrm{(i)}$. $\forall \kappa>0$, $\forall \gamma>0$, $\forall(\bold{x},\bold{y}) \in \Lambda_{\infty}^{2d}$ and $\forall t>0$:
\begin{gather}
%\label{morepre}
%G_{\infty,\kappa}^{(d)}(\bold{x},\bold{y};t) \leq \big(\frac{\kappa}{\pi}\big)^{\frac{d}{2}} \frac{\mathrm{e}^{- E_{\infty,\kappa}^{(\bold{0})} t}}{(1 - \mathrm{e}^{-2 \kappa t})^{\frac{d}{2}}},\quad  E_{\infty,\kappa}^{(\bold{0})}=\kappa \frac{d}{2},\\
\label{roughes}
G_{\infty,\kappa}^{(d)}(\bold{x},\bold{y};t,\gamma) \leq \left(\frac{\kappa}{\sinh(\kappa t)}\right)^{\frac{d}{2}} t^{\frac{d}{2}} \gamma^{\frac{d}{2}} G_{\infty,0}^{(d)}(\bold{x},\bold{y};\gamma t) \leq
\gamma^{\frac{d}{2}} G_{\infty,0}^{(d)}(\bold{x},\bold{y};\gamma t) \leq (2 \pi t)^{-\frac{d}{2}},\\
\label{dermelh}
\left\vert \nabla_{\bold{x}} G_{\infty,\kappa}^{(d)}(\bold{x},\bold{y};t)\right\vert \leq
C_{d} \sqrt{\kappa} \sqrt{\coth\left(\frac{\kappa}{2} t\right)}G_{\infty,\kappa}^{(d)}(\bold{x},\bold{y};t,2),\\
\label{lapmelh}
\left\vert \Delta_{\bold{x}} G_{\infty,\kappa}^{(d)}(\bold{x},\bold{y};t)\right\vert \leq
C_{d} \kappa \coth(\kappa t) G_{\infty,\kappa}^{(d)}(\bold{x},\bold{y};t,2).
\end{gather}
$\mathrm{(ii)}$. $\forall L \in (0,\infty)$, $\forall(\bold{x},\bold{y}) \in \Lambda_{L}^{2d}$ and $\forall t>0$:
\begin{gather}
\label{derkL}
\left\vert \nabla_{\bold{x}} G_{L,0}^{(d)}(\bold{x},\bold{y};t)\right\vert \leq
C_{d} \frac{(1+t)^{d}}{\sqrt{t}} G_{\infty,0}^{(d)}(\bold{x},\bold{y};2t),\\
\label{lapkL}
\left\vert \Delta_{\bold{x}} G_{L,0}^{(d)}(\bold{x},\bold{y};t)\right\vert \leq
C_{d} \frac{(1+t)^{d}}{t} G_{\infty,0}^{(d)}(\bold{x},\bold{y};2t).
\end{gather}
\end{lema}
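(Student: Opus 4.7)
All five estimates reduce to the case $d=1$ by the tensor structures \eqref{multd} and \eqref{kerdirid}, since bounding the untouched coordinate factors by the full one-dimensional kernel (using, in part (ii), the trivial $G_{\infty,0}^{(d=1)}(x,y;t) \leq \sqrt{2}\,G_{\infty,0}^{(d=1)}(x,y;2t)$) only costs a dimensional constant that can be absorbed into $C_d$. The recurring analytic tool is the elementary absorption inequality
\[
|z|^n\, e^{-c z^2} \leq C_n\, c^{-n/2}\, e^{-c z^2/2}, \qquad c>0,\; z\in\mathbb{R},\; n\in\{1,2\},
\]
which converts polynomial prefactors into constants at the price of halving the Gaussian exponent.

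For part (i), estimate \eqref{roughes} is a chain of three elementary scalar inequalities applied to the definition \eqref{ginfty}: first, drop the nonpositive $(x_j+y_j)^2\tanh(\kappa t/2)$ contribution in the exponent; second, use $\alpha\coth\alpha \geq 1$ with $\alpha = \kappa t/2$ to show $\frac{\kappa}{4\gamma}\coth(\kappa t/2) \geq \frac{1}{2\gamma t}$, which sharpens the surviving Gaussian in $(x_j-y_j)$ enough to factor out $\gamma^{d/2}\, G_{\infty,0}^{(d)}(\bold{x},\bold{y};\gamma t)$; third, use $\sinh\alpha \geq \alpha$ to bound $(\kappa/\sinh(\kappa t))^{d/2}\,t^{d/2}$ by $1$, and finally drop the residual Gaussian. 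For \eqref{dermelh}--\eqref{lapmelh}, I directly differentiate the Mehler formula \eqref{Mehler}: $\partial_x G_{\infty,\kappa}^{(d=1)}$ picks up the linear factor $\tfrac{\kappa}{2}[(x+y)\tanh(\kappa t/2) + (x-y)\coth(\kappa t/2)]$, while $\partial_x^2$ adds a quadratic plus a constant term $\tfrac{\kappa^2}{4}[(x+y)\tanh + (x-y)\coth]^2 - \tfrac{\kappa}{2}[\tanh+\coth]$. Applying the absorption inequality to each monomial in $(x\pm y)$ with $c = \tfrac{\kappa}{4}\tanh(\kappa t/2)$ or $c = \tfrac{\kappa}{4}\coth(\kappa t/2)$ converts the $\tanh, \coth$ prefactors coming from differentiation into $\sqrt{\tanh}, \sqrt{\coth}$ (respectively full $\tanh, \coth$ for the quadratic pieces); the trivial bound $\tanh\leq 1 \leq \coth$ then collapses everything to the announced $\sqrt{\kappa}\sqrt{\coth(\kappa t/2)}\, G_{\infty,\kappa}^{(d=1)}(x,y;t,2)$ and $\kappa\coth(\kappa t)\,G_{\infty,\kappa}^{(d=1)}(x,y;t,2)$.

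For part (ii), I start from the image-method expansion \eqref{kerdirid} and differentiate term-wise. Writing $z_m := x-y+2mL$ and $z_m' := x+y-(2m+1)L$, the absorption inequality with $c = 1/(2t)$ turns each polynomial-exponential produced by $\partial_x$ (resp.\ $\partial_x^2$) into $C\,t^{-n/2}\,e^{-z^2/(4t)}$ with $n=1$ (resp.\ $n=2$), yielding
\[
\bigl|\partial_x G_{L,0}^{(d=1)}(x,y;t)\bigr| \leq \frac{C}{\sqrt{t}} \sum_{m\in\mathbb{Z}} \bigl\{e^{-z_m^2/(4t)} + e^{-(z_m')^2/(4t)}\bigr\},
\]
and similarly for $\partial_x^2$ with an extra factor $1/\sqrt{t}$. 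The $m=0$ term of the first sum already reconstructs the target $G_{\infty,0}^{(d=1)}(x,y;2t)$. For the remaining image contributions the restriction $x,y\in(-L/2,L/2)$ forces $|z_m| \geq (2|m|-1)L$ for $|m|\geq 1$ and $|z_m'| \geq L$ for all $m$, so the tail splits into two regimes: for $t \lesssim L^2$ each term is exponentially suppressed by $e^{-L^2/(4t)} \leq e^{-(x-y)^2/(4t)}$ and the sum is dominated by a geometric series, while for $t \gtrsim L^2$ a Riemann-sum comparison bounds the theta-like sum by $C\sqrt{t}/L$, which is absorbed by the generous $(1+t)^d$ factor in the target estimate. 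Combining the two regimes, then multiplying by the one-dimensional untouched factors, yields \eqref{derkL}--\eqref{lapkL}. The principal obstacle is precisely this two-regime control of the image-sum tail; everything else is direct computation organized around the absorption inequality.
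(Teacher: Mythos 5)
Your part (i) is correct and is essentially the paper's own argument: the ``absorption inequality'' $|z|^n\mathrm{e}^{-cz^2}\le C_nc^{-n/2}\mathrm{e}^{-cz^2/2}$ you use is precisely the paper's \eqref{redexp} (set $x=z^2$), and the hyperbolic bookkeeping you do via $\tanh\le 1\le\coth$ and $\tanh(\kappa t/2)+\coth(\kappa t/2)=2\coth(\kappa t)$ is the paper's identity \eqref{Id2}; the proof of \eqref{roughes} from \eqref{Ek4} and $\sinh\alpha\ge\alpha$ matches as well.

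For part (ii) the paper simply invokes \cite[Prop.~2]{C1}, while you attempt a self-contained proof from the image expansion \eqref{kerdirid}. This is a genuinely different route, but the argument as written has two defects, one minor and one fatal. The minor one: the claim ``$|z_m'|\ge L$ for all $m$'' is false. Writing $z_m'=x+y-(2m+1)L$ and taking $m=0$, $x=y=L/2-\varepsilon$, one gets $z_0'=-2\varepsilon$, arbitrarily small; the same phenomenon occurs for $m=-1$ near the opposite corner. These two images are exactly the ones responsible for enforcing the Dirichlet condition against the $m=0$ principal term, and they cannot be placed in the ``suppressed tail''. (This is patchable, since for $m\in\{0,-1\}$ one has $|z_m'|\ge|x-y|$ automatically for $x,y\in(-L/2,L/2)$, which puts those two terms on the same Gaussian footing as the $m=0$ principal term — but the stated inequality is wrong.)

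The fatal one is that the two-regime argument does \emph{not} yield an $L$-independent constant $C_d$, which Lemma \ref{plentes} asserts for every $L\in(0,\infty)$. In the regime $t\gtrsim L^2$ your Riemann-sum estimate gives the theta-like tail a size $\asymp\sqrt{t}/L$. Re-inserting the $1/\sqrt{2t}$ prefactor from \eqref{kerdirid} and the factor produced by absorption, and then dividing the purported bound \eqref{derkL} by $G_{\infty,0}^{(d=1)}(x,y;2t)\asymp t^{-1/2}$ (which is of this order since $(x-y)^2<L^2\lesssim t$), you are left needing $\sqrt{t}\big/(L(1+t))\le C$ uniformly; for $L\le 1$ the supremum over $t\ge L^2$ of the left side is attained at $t=1$ and equals $\tfrac{1}{2L}$, which diverges as $L\downarrow 0$. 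It is not true that ``$C\sqrt{t}/L$ is absorbed by the generous $(1+t)^d$ factor'' — the right-hand side of \eqref{derkL} has no $L$-dependence to compensate. The underlying reason is that the term-wise triangle inequality on the alternating image sum destroys the cancellation that makes $G_{L,0}^{(d)}(\cdot,\cdot;t)$ decay like $\mathrm{e}^{-\pi^2 d\,t/(2L^2)}$ when $t\gg L^2$, while your upper bound grows like $L^{-1}$ per dimension. The inequality you are trying to prove is true, but your method cannot reach it without either exploiting pairwise cancellation between neighbouring image terms, or restricting $L$ away from $0$, or (as the paper does) borrowing a pointwise estimate whose derivation handles this issue.
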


\begin{proof} %\eqref{morepre} follows from \eqref{Mehler}-\eqref{multd}.
From the lower bound $\sinh(\alpha) \geq \alpha$ $\forall \alpha \geq 0$ and the one in \eqref{Ek4}, \eqref{heatk} is an upper bound for \eqref{Mehler}. Hence \eqref{roughes}. \eqref{dermelh}-\eqref{lapmelh} are obtained by direct calculations. The main ingredients are:
\begin{equation}
\label{redexp}
\forall \mu, \nu>0,\, \forall x\geq 0,\quad x^{\mu} \mathrm{e}^{-\nu x} \leq \left(\frac{2\mu}{\mathrm{e} \nu}\right)^{\mu} \mathrm{e}^{-\frac{\nu}{2}x},
\end{equation}
and the following identity:
\begin{equation}
\label{Id2}
\coth(\alpha t) = \frac{1}{2} \coth\left(\frac{\alpha}{2} t\right) + \frac{1}{2} \tanh\left(\frac{\alpha}{2} t\right),\quad \forall \alpha>0,\, \forall t>0.
\end{equation}
\eqref{derkL}-\eqref{lapkL} follow from \cite[Prop. 2]{C1}.
\end{proof}

We continue with the following lemma expressing the semigroup property in the kernels sense:

\begin{lema}
\label{sempropy}
$\forall d \in \{1,2,3\}$, $\forall \delta>0$, $\forall t>0$, $\forall 0< u <t$:\\
$\mathrm{(i)}$. $\forall \kappa \geq 0$, $\forall L \in (0,\infty]$ and $\forall(\bold{x},\bold{y}) \in \Lambda_{L}^{2d}$:
\begin{equation}
\label{propsemi}
\int_{\Lambda_{L}^{d}}  G_{L,\kappa}^{(d)}(\bold{x},\bold{z}; \delta(t-u)) G_{L,\kappa}^{(d)}(\bold{z},\bold{y};\delta u)\,\mathrm{d}\bold{z} = G_{L,\kappa}^{(d)}(\bold{x},\bold{y}; \delta t).
\end{equation}
$\mathrm{(ii)}$. $\forall \kappa>0$, $\forall \gamma>0$ and $\forall(\bold{x},\bold{y}) \in \Lambda_{\infty}^{2d}$:
\begin{equation}
\label{propsemi2}
\int_{\Lambda_{\infty}^{d}}  G_{\infty,\kappa}^{(d)}(\bold{x},\bold{z}; \delta(t-u),\gamma) G_{\infty,\kappa}^{(d)}(\bold{z},\bold{y}; \delta u ,\gamma)\,\mathrm{d}\bold{z} = \gamma^{\frac{d}{2}} G_{\infty,\kappa}^{(d)}(\bold{x},\bold{y}; \delta t,\gamma).
\end{equation}
\end{lema}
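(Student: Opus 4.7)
The plan is to reduce both identities to the standard semigroup law plus, for part $\mathrm{(ii)}$, an elementary rescaling of the spatial variables.

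Part $\mathrm{(i)}$ is a direct reformulation of the semigroup identity $G_{L,\kappa}(s_{1})G_{L,\kappa}(s_{2}) = G_{L,\kappa}(s_{1}+s_{2})$ applied to $s_{1} := \delta(t-u)$ and $s_{2} := \delta u$, so that $s_{1}+s_{2} = \delta t$. The composition of two bounded integral operators with jointly continuous kernels is an integral operator whose kernel is the $\bold{z}$-convolution of the factors. The only point to check is that this convolution is well defined pointwise, which is routine: when $L<\infty$ the bound \eqref{fondineq} together with \eqref{Mehler}-\eqref{multd} (or \eqref{heatk} when $\kappa = 0$) gives a Gaussian majorant in $\bold{z}$ that is integrable on $\Lambda_{L}^{d}$; when $L = \infty$ the same Gaussian majorant is obviously in $L^{1}(\mathbb{R}^{d})$. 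Fubini then legitimates identifying the kernel of the composition with the composition of kernels.

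For part $\mathrm{(ii)}$ the key observation is a scaling symmetry built into the definition \eqref{ginfty}. The prefactor $(\kappa/(2\pi\sinh(\kappa t)))^{d/2}$ is $\gamma$-independent, and the substitutions $\bold{x} \mapsto \bold{x}/\sqrt{\gamma}$, $\bold{y} \mapsto \bold{y}/\sqrt{\gamma}$ in $G_{\infty,\kappa}^{(d)}(\cdot,\cdot;t,1)$ exactly absorb the $1/\gamma$ that sits in the exponent of $G_{\infty,\kappa}^{(d)}(\cdot,\cdot;t,\gamma)$. Hence
\begin{equation*}
G_{\infty,\kappa}^{(d)}(\bold{x},\bold{y};t,\gamma) = G_{\infty,\kappa}^{(d)}\!\left(\bold{x}/\sqrt{\gamma},\bold{y}/\sqrt{\gamma};t\right).
\end{equation*}
I would plug this identity into the left-hand side of \eqref{propsemi2} and perform the change of variables $\bold{z} = \sqrt{\gamma}\,\bold{z}'$ (Jacobian $\gamma^{d/2}$). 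This turns the integral into $\gamma^{d/2}$ times the left-hand side of \eqref{propsemi} at arguments $(\bold{x}/\sqrt{\gamma},\bold{y}/\sqrt{\gamma})$, with $L=\infty$, $\kappa>0$ and times $\delta(t-u)$, $\delta u$. Applying part $\mathrm{(i)}$ and re-expressing the result via the same scaling identity yields $\gamma^{d/2}G_{\infty,\kappa}^{(d)}(\bold{x},\bold{y};\delta t,\gamma)$, which is precisely \eqref{propsemi2}.

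No serious obstacle is expected: the lemma merely compiles two instances of the semigroup property, the second of which is reduced to the first by a one-line scaling argument. The only care needed is in justifying Fubini, which is handled uniformly in both parts by the Gaussian decay of the Mehler and heat kernels.
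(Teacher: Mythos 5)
Your proof is correct, and for part $\mathrm{(ii)}$ it takes a genuinely different route than the paper. The paper proves \eqref{propsemi2} by brute force: it evaluates the Gaussian $\bold{z}$-integral explicitly via the closed-form identity \eqref{impor} and then simplifies the resulting coefficients using a chain of hyperbolic identities (equations \eqref{Id1}, \eqref{Id3} and the $\tanh$-sum identity) to recognize the answer as $\gamma^{d/2}G_{\infty,\kappa}^{(d)}(\bold{x},\bold{y};\delta t,\gamma)$. You instead observe the scaling identity $G_{\infty,\kappa}^{(d)}(\bold{x},\bold{y};t,\gamma)=G_{\infty,\kappa}^{(d)}(\bold{x}/\sqrt{\gamma},\bold{y}/\sqrt{\gamma};t)$, which is immediate from \eqref{ginfty} since the prefactor is $\gamma$-free, and reduce $\mathrm{(ii)}$ to $\mathrm{(i)}$ by the substitution $\bold{z}=\sqrt{\gamma}\bold{z}'$ whose Jacobian $\gamma^{d/2}$ produces exactly the factor appearing on the right of \eqref{propsemi2}. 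Your argument is shorter and more conceptual, explains where the factor $\gamma^{d/2}$ comes from, and avoids redoing a Gaussian calculation and the subsequent algebra with hyperbolic functions; the paper's computation, on the other hand, produces the auxiliary formula \eqref{impor}, which is reused elsewhere (in the proof of Lemma \ref{lemNW} and Lemma \ref{2trace}), so the authors likely wanted it on record anyway. Your treatment of $\mathrm{(i)}$ and the Fubini justification is unremarkable and matches the paper's one-line appeal to the semigroup law.
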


\noindent \textit{Proof.} $\mathrm{(i)}$ follows from the semigroup property which reads as: $G_{L,\kappa}(t)= G_{L,\kappa}(t-u)G_{L,\kappa}(u)$ $\forall 0\leq u\leq t$. The proof of $\mathrm{(ii)}$ is based on the following explicit calculation:
\begin{multline}
\label{impor}
\forall a,b,c,d>0,\quad \int_{\mathbb{R}}  \mathrm{e}^{-\left[a(x+z)^{2}+ b(x-z)^{2}\right]} \mathrm{e}^{-\left[c(z+y)^{2}+ d(z-y)^{2}\right]}\,\mathrm{d}z = \\
\sqrt{\pi} (a+b+c+d)^{-\frac{1}{2}} \mathrm{e}^{-\frac{b(c+d)+a(d+c)+4ab}{a+b+c+d}x^{2}} \mathrm{e}^{-\frac{b(c+d)+a(d+c)+4cd}{a+b+c+d}y^{2}} \mathrm{e}^{-2 \frac{b(d-c)+a(c-d)}{a+b+c+d}xy}.
\end{multline}
Then, set $a_{0}:= \tanh(\frac{\kappa}{2} \delta u)$, $b_{0}:= \coth(\frac{\kappa}{2} \delta u)$,
$c_{0}:= \tanh(\frac{\kappa}{2}\delta(t-u))$ and $d_{0}:= \coth(\frac{\kappa}{2} \delta (t-u))$. From the  identity in \eqref{Id3}, the following one:
\begin{equation*}
\label{Id4}
\tanh(\alpha s)+ \tanh(\alpha(t-s)) = \frac{\sinh(\alpha t)}{\cosh(\alpha s) \cosh(\alpha(t-s))},\quad \forall \alpha\geq0,\, \forall t>s>0,
\end{equation*}
followed by \eqref{Id1}, one gets: $a_{0}+b_{0}+c_{0}+d_{0} = 2 \sinh(\kappa \delta t)\{\sinh(\kappa \delta u) \sinh(\kappa \delta (t-u))\}^{-1}$. The rest of the proof consists in using some identities involving the hyperbolic functions to simplify each one of the factor inside the exponentials in the r.h.s. of \eqref{impor}. It is (quite) easy to get:
\begin{gather*}
\left(b_{0}(c_{0}+d_{0})+a_{0}(d_{0}+c_{0})+4a_{0}b_{0}\right)\left(a_{0}+b_{0}+c_{0}+d_{0}\right)^{-1} = 2 \coth(\kappa \delta t),\\
\left(b_{0}(d_{0}-c_{0})+a_{0}(c_{0}-d_{0})\right)\left(a_{0}+b_{0}+c_{0}+d_{0}\right)^{-1} = \tanh\left(\frac{\kappa}{2} \delta t\right) - \coth\left(\frac{\kappa}{2} \delta t\right). \tag*{\qed}
\end{gather*}

Now, we give some estimates on the operator and trace norms of the semigroup $\{G_{L,\kappa}(t)\}_{t>0}$.
For any $\kappa\geq 0$ and $L \in (0,\infty]$, $\{G_{L,\kappa}(t)\}_{t>0}$ is a contraction semigroup, see e.g. \cite{HP}:

\begin{lema}
\label{lemNW}
$\forall d \in \{1,2,3\}$, $\forall \kappa \geq 0$ and $\forall t>0$:
\begin{equation}
\label{norm}
\forall L \in (0,\infty),\quad \left\Vert G_{L,\kappa}(t) \right\Vert \leq \left\Vert G_{\infty,\kappa}(t)\right\Vert \leq
\left(\cosh(\kappa t)\right)^{-\frac{d}{2}} \leq 1.
\end{equation}
\end{lema}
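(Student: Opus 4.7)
The plan is to reduce \eqref{norm} to the spectral theorem plus an elementary hyperbolic inequality. Since for every $\kappa\geq 0$ and every $0<L\leq \infty$ the operator $H_{L,\kappa}$ is self-adjoint and bounded from below on $L^{2}(\Lambda_{L}^{d})$, the functional calculus gives
\begin{equation*}
\left\Vert G_{L,\kappa}(t)\right\Vert = \sup_{\lambda\in \sigma(H_{L,\kappa})} \mathrm{e}^{-t\lambda} = \mathrm{e}^{-t\inf \sigma(H_{L,\kappa})},\qquad t>0.
\end{equation*}
So the whole content of the lemma is a comparison of the two ground-state energies together with an explicit bound on $\mathrm{e}^{-t\inf \sigma(H_{\infty,\kappa})}$.

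For the first inequality $\Vert G_{L,\kappa}(t)\Vert\leq \Vert G_{\infty,\kappa}(t)\Vert$, I would invoke the standard Dirichlet monotonicity of the bottom of the spectrum. Any $\varphi\in \mathcal{C}_{0}^{\infty}(\Lambda_{L}^{d})$ extends by zero to $\tilde\varphi\in \mathcal{C}_{0}^{\infty}(\mathbb{R}^{d})$, which belongs to the form domain of $H_{\infty,\kappa}$ and satisfies $\Vert \tilde\varphi\Vert_{L^{2}(\mathbb{R}^{d})}=\Vert\varphi\Vert_{L^{2}(\Lambda_{L}^{d})}$ together with $\langle \tilde\varphi, H_{\infty,\kappa}\tilde\varphi\rangle = \langle \varphi, H_{L,\kappa}\varphi\rangle$ (the kinetic and harmonic forms depend only on values of $\tilde\varphi$ on its support in $\Lambda_{L}^{d}$). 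The min--max principle, applied to $H_{\infty,\kappa}$ over its full form domain, yields $\inf \sigma(H_{\infty,\kappa})\leq \inf \sigma(H_{L,\kappa})$, and exponentiating gives the desired bound.

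For the second inequality, the explicit Hermite spectrum \eqref{vap1}--\eqref{vapd} gives $\inf \sigma(H_{\infty,\kappa})=E_{\infty,\kappa}^{(\bold{0})}=d\kappa/2$ for $\kappa>0$ (and trivially $\inf \sigma(H_{\infty,0})=0$, which is again $d\cdot 0/2$). Hence $\Vert G_{\infty,\kappa}(t)\Vert = \mathrm{e}^{-td\kappa/2}$. To bound this by $(\cosh(\kappa t))^{-d/2}$ it suffices to check $\mathrm{e}^{-t\kappa/2}\leq (\cosh(\kappa t))^{-1/2}$, equivalently $\cosh(\kappa t)\leq \mathrm{e}^{\kappa t}$, itself nothing more than $\mathrm{e}^{-\kappa t}\leq \mathrm{e}^{\kappa t}$ for $\kappa t\geq 0$. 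The final bound $(\cosh(\kappa t))^{-d/2}\leq 1$ follows from $\cosh\geq 1$.

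No step is a genuine obstacle: the first inequality is the standard Dirichlet ground-state monotonicity, the value of $\inf \sigma(H_{\infty,\kappa})$ has already been recorded in \eqref{vap1}--\eqref{vapd}, and the two comparisons with $\cosh(\kappa t)$ are one-line elementary inequalities. The only mild subtlety is the unified treatment of the case $\kappa=0$, which is handled automatically by the same spectral formula once one observes $\inf \sigma(H_{\infty,0})=0$.
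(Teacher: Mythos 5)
Your proof is correct, but it takes a genuinely different route from the paper's.

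The paper proves the first inequality by kernel monotonicity in $L$ (citing \cite[Eq.\ (2.39)]{BHL}) and then obtains the bound $(\cosh(\kappa t))^{-d/2}$ by applying the Schur--Holmgren criterion to the Mehler kernel, using the Gaussian integral \eqref{impor} with $c=d=0$ and the hyperbolic identity \eqref{Id2}. You instead work entirely at the level of spectra: since $G_{L,\kappa}(t)$ is a positive self-adjoint contraction with $\Vert G_{L,\kappa}(t)\Vert = \mathrm{e}^{-t\inf\sigma(H_{L,\kappa})}$, the first inequality becomes Dirichlet monotonicity of the ground-state energy via the min--max principle (extend test functions by zero), and the second follows from the known value $\inf\sigma(H_{\infty,\kappa})=d\kappa/2$ together with the one-line bound $\cosh(\kappa t)\leq \mathrm{e}^{\kappa t}$. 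Both arguments are valid. Your route is in fact sharper in the middle term — it identifies the exact norm $\Vert G_{\infty,\kappa}(t)\Vert = \mathrm{e}^{-d\kappa t/2}$, which is strictly smaller than $(\cosh(\kappa t))^{-d/2}$ for $t>0$, $\kappa>0$ — and it avoids any kernel computation. The paper's Schur-test route has the advantage of being local to the Mehler formula and reusing machinery (\eqref{impor}, \eqref{Id2}) already set up for other semigroup-property estimates in the same appendix, which is likely why the authors chose it; it also does not require knowing the bottom of the spectrum, so it would extend more readily to operators where the ground-state energy is not explicit but the kernel still is.
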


\begin{proof}  The first inequality follows from the fact that the semigroup $\{G_{L,\kappa}(t)\}_{t\geq 0}$ is increasing in $L$ in the sense of \cite[Eq. (2.39)]{BHL}. The Shur-Holmgren criterion provides the estimate on the operator norms. When $\kappa>0$, we used \eqref{impor} (with $c=0=d$) along with \eqref{Id2}.
\end{proof}

Restricting to $\kappa>0$, $\forall L\in(0,\infty]$ $\{G_{L,\kappa}(t)\}_{t>0}$ is a Gibbs semigroup (i.e. trace class, see \cite{ABN}):

\begin{lema}
\label{2trace}
$\forall d\in\{1,2,3\}$, $\forall \kappa>0$ and $\forall L \in (0,\infty]$, $\{G_{L,\kappa}(t)\}_{t>0}$ is a trace class operator on $L^{2}(\Lambda_{L}^{d})$. Moreover, denoting $E_{\infty,\kappa}^{(\bold{0})} = d \frac{\kappa}{2}$, one has for any $L \in (0,\infty)$:
\begin{equation*}
%\label{trace}
\mathrm{Tr}_{L^{2}(\Lambda_{L}^{d})} \left\{ G_{L,\kappa}(t)\right\} \leq \mathrm{Tr}_{L^{2}(\Lambda_{\infty}^{d})}\left\{G_{\infty,\kappa}(t)\right\} = \left(2 \sinh\left(\frac{\kappa}{2} t\right)\right)^{-d} = \frac{\mathrm{e}^{- E_{\infty,\kappa}^{(\bold{0})} t}}{\left(1 - \mathrm{e}^{- \kappa t}\right)^{d}}.
\end{equation*}
\end{lema}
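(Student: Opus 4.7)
\textit{Proof proposal.} The plan has three parts: establishing trace class, proving the inequality, and explicitly evaluating the infinite-volume trace. I would first settle the infinite-volume case. Since the spectrum of $H_{\infty,\kappa}$ is purely discrete with eigenvalues $E_{\infty,\kappa}^{(\mathbf{s})} = \kappa \sum_{j=1}^d (s_j+\tfrac{1}{2})$ (see \eqref{vapd}), the spectral theorem reduces the question to the convergence of
\begin{equation*}
\sum_{\mathbf{s}\in\mathbb{N}^d} \mathrm{e}^{-tE_{\infty,\kappa}^{(\mathbf{s})}} = \left(\sum_{s=0}^\infty \mathrm{e}^{-\kappa t(s+\frac{1}{2})}\right)^{\!d} = \left(\frac{\mathrm{e}^{-\kappa t/2}}{1-\mathrm{e}^{-\kappa t}}\right)^{\!d}.
\end{equation*}
This geometric-series computation gives both the trace-class property for $G_{\infty,\kappa}(t)$ and the explicit value. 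A short manipulation, using $\mathrm{e}^{-\kappa t/2}/(1-\mathrm{e}^{-\kappa t}) = 1/(\mathrm{e}^{\kappa t/2}-\mathrm{e}^{-\kappa t/2}) = (2\sinh(\kappa t/2))^{-1}$, yields the claimed closed form.

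For the finite-volume case $L \in (0,\infty)$, I would exploit the semigroup property $G_{L,\kappa}(t) = G_{L,\kappa}(t/2) G_{L,\kappa}(t/2)$ to write $G_{L,\kappa}(t)$ as the square of a self-adjoint operator. It then suffices to check that $G_{L,\kappa}(t/2)$ is Hilbert--Schmidt, since a product of two Hilbert--Schmidt operators is trace class. The Hilbert--Schmidt norm is computed from the kernel:
\begin{equation*}
\left\Vert G_{L,\kappa}(t/2)\right\Vert_{\mathrm{HS}}^{2} = \int_{\Lambda_L^d}\!\!\int_{\Lambda_L^d} \bigl\vert G_{L,\kappa}^{(d)}(\mathbf{x},\mathbf{y};t/2)\bigr\vert^{2}\,\mathrm{d}\mathbf{x}\,\mathrm{d}\mathbf{y},
\end{equation*}
which is finite by the pointwise bound \eqref{fondineq} together with the explicit Mehler kernel \eqref{Mehler}--\eqref{multd}, since $G_{\infty,\kappa}^{(d)}(\cdot,\cdot;t/2)$ is Gaussian and hence square-integrable on $\mathbb{R}^{2d}$.

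For the inequality between traces, I would use the fact that for a positive, self-adjoint, trace-class operator with a jointly continuous kernel on a compact set (which covers $G_{L,\kappa}(t)$ on $\overline{\Lambda_L^d}$), the trace equals the integral of the diagonal (standard Mercer-type argument via an eigenfunction expansion). Combining this with the pointwise bound \eqref{fondineq} then gives
\begin{equation*}
\mathrm{Tr}_{L^2(\Lambda_L^d)}\{G_{L,\kappa}(t)\} = \int_{\Lambda_L^d} G_{L,\kappa}^{(d)}(\mathbf{x},\mathbf{x};t)\,\mathrm{d}\mathbf{x} \leq \int_{\Lambda_L^d} G_{\infty,\kappa}^{(d)}(\mathbf{x},\mathbf{x};t)\,\mathrm{d}\mathbf{x} \leq \int_{\mathbb{R}^d} G_{\infty,\kappa}^{(d)}(\mathbf{x},\mathbf{x};t)\,\mathrm{d}\mathbf{x},
\end{equation*}
where the last integral equals the infinite-volume trace by a Gaussian computation from \eqref{Mehler} (or just by matching with the spectral computation above). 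As a sanity check, one can evaluate that integral directly: setting $x=y$ in Mehler's formula leaves $\sqrt{\kappa/(2\pi\sinh(\kappa t))}\,\int_{\mathbb{R}}\mathrm{e}^{-\kappa x^{2}\tanh(\kappa t/2)}\,\mathrm{d}x$, and the identity \eqref{Id1} plus $\tanh = \sinh/\cosh$ reduces this to $(2\sinh(\kappa t/2))^{-1}$; then \eqref{multd} lifts it to dimension $d$.

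The only delicate point is the Mercer-type identification of the trace with the diagonal integral; this is standard but needs to be invoked cleanly. Everything else is routine once one has the explicit spectrum \eqref{vapd} and the Mehler kernel \eqref{Mehler}.
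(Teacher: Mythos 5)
Your proposal is correct and shares the same structural skeleton as the paper: trace class through a Hilbert--Schmidt square, trace identified with the diagonal integral of a continuous kernel, and the finite-to-infinite-volume inequality via the pointwise monotonicity \eqref{fondineq}. The one genuine difference is localized to the infinite-volume opening move: you sum the geometric series of eigenvalues \eqref{vapd} to obtain trace class and the closed form $\left(2\sinh(\kappa t/2)\right)^{-d}$ in one stroke, whereas the paper never touches the spectrum --- it computes the Hilbert--Schmidt norm $\|G_{\infty,\kappa}(t)\|_{\mathfrak{I}_2}^2 = (2\sinh(\kappa t))^{-1}$ (for $d=1$) directly from Mehler's formula via the Gaussian identity \eqref{impor}, infers trace-class from $\|A\|_{\mathfrak{I}_1}\le\|A^{1/2}\|_{\mathfrak{I}_2}^2$, and then reads off the closed form from the diagonal integral. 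Your spectral route is arguably cleaner at infinite volume; the paper's kernel route has the advantage that the identical Hilbert--Schmidt/diagonal-integral machinery applies verbatim at finite volume, where the eigenvalues of $H_{L,\kappa}$ are not explicit. On the delicate point you flag --- identifying $\mathrm{Tr}$ with the diagonal integral --- the paper outsources this to \cite[Prop. 9]{C1} rather than running a stand-alone Mercer argument, but the content is the same, so no gap on either side.
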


\begin{proof} Let $(\mathfrak{I}_{2}(L^{2}(\Lambda_{L}^{d})),\Vert \cdot \Vert_{\mathfrak{I}_{2}})$ and $(\mathfrak{I}_{1}(L^{2}(\Lambda_{L}^{d})),\Vert \cdot \Vert_{\mathfrak{I}_{1}})$, $L \in (0,\infty]$ be the Banach space of Hilbert-Schmidt and trace class operators on $L^{2}(\Lambda_{L}^{d})$ respectively.
We start with $d=1$. Let $\kappa>0$ and $t>0$ be fixed. In view of \eqref{Mehler}, from \eqref{impor} (we set $c=0=d$):
\begin{equation*}
\left\Vert G_{\infty,\kappa}(t)\right\Vert_{\mathfrak{I}_{2}}^{2} = \int_{\Lambda_{\infty}^{1}}  \int_{\Lambda_{\infty}^{1}}  \left\vert G_{\infty,\kappa}^{(d=1)}(x,y;t)\right\vert^{2}\,\mathrm{d}x\,\mathrm{d}y = \frac{1}{2} \frac{1}{\sinh(\kappa t)} < \infty.
\end{equation*}
Therefore, $G_{\infty,\kappa}(t)$ is a trace class operator on $L^{2}(\Lambda_{\infty}^{1})$ since $\Vert G_{\infty,\kappa}(t)\Vert_{\mathfrak{I}_{1}} \leq \Vert G_{\infty,\kappa}(\frac{t}{2})\Vert_{\mathfrak{I}_{2}}^{2} < \infty$. Since $G_{\infty,\kappa}^{(d=1)}(\cdot\,,\cdot\,;t)$ is jointly continuous on $\Lambda_{\infty}^{2}$, from \cite[Prop. 9]{C1} it follows that:
\begin{equation}
\label{traceGk}
\left\Vert G_{\infty,\kappa}(t)\right\Vert_{\mathfrak{I}_{1}} = \int_{\Lambda_{\infty}^{1}} G_{\infty,\kappa}^{(d=1)}(x,x;t)\,\mathrm{d}x = \frac{1}{2} \frac{1}{\sinh\left(\frac{\kappa}{2} t \right)},
\end{equation}
where we used the identity \eqref{Id1}. By positivity of $G_{\infty,\kappa}(t)$, $\Vert G_{\infty,\kappa}(t)\Vert_{\mathfrak{I}_{1}}=\mathrm{Tr}_{L^{2}(\Lambda_{\infty}^{1})}\{G_{\infty,\kappa}(t)\}$.
The rest of the proof leans on the estimate \eqref{fondineq} which leads to $\Vert G_{L,\kappa}(t)\Vert_{\mathfrak{I}_{2}}^{2} \leq \Vert G_{\infty,\kappa}(t)\Vert_{\mathfrak{I}_{2}}^{2}$. Hence, $\forall L \in (0,\infty)$ $G_{L,\kappa}(t)$ is also a trace class operator on $L^{2}(\Lambda_{L}^{1})$, and by mimicking the above arguments, its trace norm obeys $\Vert G_{L,\kappa}(t) \Vert_{\mathfrak{I}_{1}} = \mathrm{Tr}_{L^{2}(\Lambda_{L}^{1})}\{ G_{L,\kappa}(t)\} \leq \Vert G_{\infty,\kappa}(t)\Vert_{\mathfrak{I}_{1}}$.
The case of $d=1$ is done. The generalization to $d=2,3$ is straightforward due to \eqref{multd}.
\end{proof}

{\small
}

\end{document}